\documentclass[oneside,english]{amsart}
\usepackage[T1]{fontenc}
\usepackage[latin9]{inputenc}
\pagestyle{plain}
\setcounter{tocdepth}{1}
\usepackage{textcomp}
\usepackage{amstext}
\usepackage{amsthm}
\usepackage{esint}

\makeatletter
\numberwithin{equation}{section}
\numberwithin{figure}{section}
\theoremstyle{plain}
\newtheorem{thm}{\protect\theoremname}[section]
  \theoremstyle{remark}
  \newtheorem{rem}[thm]{\protect\remarkname}
  \theoremstyle{plain}
  \newtheorem{prop}[thm]{\protect\propositionname}
  \theoremstyle{plain}
  \newtheorem{lem}[thm]{\protect\lemmaname}
  \theoremstyle{definition}
  \newtheorem{example}[thm]{\protect\examplename}
  \theoremstyle{plain}
  \newtheorem{cor}[thm]{\protect\corollaryname}


\def\makebbb#1{
    \expandafter\gdef\csname#1\endcsname{
        \ensuremath{\Bbb{#1}}}
}\makebbb{R}\makebbb{N}\makebbb{Z}\makebbb{C}\makebbb{H}\makebbb{E}\makebbb{H}\makebbb{P}\makebbb{B}\makebbb{Q}\makebbb{E}

\usepackage{babel}

\usepackage{babel}

\makeatother

\usepackage{babel}

\renewcommand{\vec}[1]{\mathbf{#1}}

\makeatother

\usepackage{babel}
  \providecommand{\corollaryname}{Corollary}
  \providecommand{\examplename}{Example}
  \providecommand{\lemmaname}{Lemma}
  \providecommand{\propositionname}{Proposition}
  \providecommand{\remarkname}{Remark}
\providecommand{\theoremname}{Theorem}

\begin{document}

\title{Propagation of chaos, Wasserstein gradient flows and toric Kähler-Einstein
metrics}

\author{Robert J. Berman, Magnus Önnheim}

\email{robertb@chalmers.se, onnheimm@chalmers.se}

\address{Department of Mathematical Sciences, Chalmers University of Technology
and University of Gothenburg, 412 96 Göteborg, Sweden}
\begin{abstract}
Motivated by a probabilistic approach to Kähler-Einstein metrics we
consider a general non-equilibrium statistical mechanics model in
Euclidean space consisting of the stochastic gradient flow of a given
(possibly singular) quasi-convex N-particle interaction energy. We
show that a deterministic ``macroscopic'' evolution equation emerges
in the large N-limit of many particles. This is a strengthening of
previous results which required a uniform two-sided bound on the Hessian
of the interaction energy. The proof uses the theory of weak gradient
flows on the Wasserstein space. Applied to the setting of permanental
point processes at ``negative temperature'' the corresponding limiting
evolution equation yields a drift-diffusion equation, coupled to the
Monge-Ampère operator, whose static solutions correspond to toric
Kähler-Einstein metrics. This drift-diffusion equation is the gradient
flow on the Wasserstein space of probability measures of the K-energy
functional in Kähler geometry and it can be seen as a fully non-linear
version of various extensively studied dissipative evolution equations
and conservations laws, including the Keller-Segel equation and Burger's
equation. We also obtain a real probabilistic (and tropical) analog
of the complex geometric Yau-Tian-Donaldson conjecture in this setting.
In a companion paper applications to singular pair interactions  are
given.

\tableofcontents{}
\end{abstract}

\maketitle

\section{Introduction}

The present work is motivated by the probabilistic approach to the
construction of canonical metrics, or more precisely Kähler-Einstein
metrics, on complex algebraic varieties introduced in \cite{berm5,berm2},
formulated in terms of certain $\beta-$deformations of determinantal
(fermionic) point processes. The approach in \cite{berm5,berm2} uses
ideas from equilibrium statistical mechanics (Boltzmann-Gibbs measures)
and the main challenge concerns the existence problem for Kähler-Einstein
metrics on a complex manifold $X$ with \emph{positive} Ricci curvature,
which is closely related to the seminal Yau-Tian-Donaldson conjecture
in complex geometry. In this paper, which is one in a series, we will
be concerned with a dynamic version of the probabilistic approach
in \cite{berm5,berm2}. In other words, we are in the realm of non-equilibrium
statistical mechanics, where the relaxation to equilibrium is studied.
As the general complex geometric setting appears to be extremely challenging,
due to the severe singularities and non-linearity of the corresponding
interaction energies, we will here focus on the real analog of the
complex setting introduced in \cite{b}, taking place in $\R^{n}$
and which corresponds to the case when $X$ is a\emph{ toric} complex
algebraic variety. As explained in \cite{b} in this real setting
the determinantal (fermionic) processes are replaced by\emph{ permanental}
(bosonic) processes and convexity plays the role of positive Ricci
curvature/ plurisubharmonicity (see Section \ref{sub:The-complex-geometric}
for some geometric background). 

Our main result (Theorem \ref{thm:dynamic intro}) shows that a deterministic
evolution equation on the space of all probability measures on $\R^{n}$
emerges from the underlying stochastic dynamics, which as explained
below can be seen as a new ``propagation of chaos'' result. The
evolution equation in question is a drift-diffusion equation coupled
to the fully non-linear real Monge-Ampère operator. It exhibits a
phase transition at a certain geometrically determined critical parameter.
It turns out that in the case of the real line (i.e. $n=1)$ this
equation is closely related to various extensively studied evolution
equations, notably the Keller-Segel equation in chemotaxis \cite{k-s},
Burger's equation \cite{ho,f-b} in the theory of non-linear waves
and scalar conservation laws and the deterministic version of the
Kardar\textendash Parisi\textendash Zhang (KPZ) equation describing
surface growth \cite{kpz}. In the higher dimensional real case the
equation can be viewed as a dissipative viscous version of the semi-geostrophic
equation appearing in dynamic meteorology (see \cite{l,a-c-d-f} and
references therein). Moreover, closely related evolution equations
appear in cosmology and in particular in Brenier's approach to the
Zeldovich model used in the early universe reconstruction problem
\cite{s-z,f-m-m-s,Br2,br3}. For the corresponding \emph{static} problem
we establish a real analog of the Yau-Tian-Donaldson conjecture (Theorem
\ref{thm:static toric}) which, in particular, yields a probabilistic
construction of toric Kähler-Einstein metrics (see Section \ref{sub:The-complex-geometric}
for the relations to complex geometry).

As we were not able to deduce the type of propagation of chaos result
we needed from previous general results and approaches the main body
of the paper establishes the appropriate propagation of chaos result,
which, to the best of our knowledge, is new and hopefully the result,
as well as the method of proof, is of independent interest. As will
be clear below our approach heavily relies on the theory of weak gradient
flows on the Wasserstein $L^{2}-$space $\mathcal{P}_{2}(\R^{n})$
of probability measure on $\R^{n}$ developed in the seminal work
of Ambrosio-Gigli-Savare \cite{a-g-s}, which provides a rigorous
framework for the Otto calculus \cite{ot}. In particular, as in \cite{a-g-s}
convexity (or more generally $\lambda-$convexity) plays a prominent
role. Our limiting evolution equation will appear as the gradient
flow on $\mathcal{P}_{2}(\R^{n})$ of a certain free energy type functional
$F.$ Interestingly, as observed in \cite{berm4} the functional $F$
may be identified with Mabuchi's K-energy functional on the space
of Kähler metrics, which plays a key role in Kähler geometry and whose
gradient flow with respect to different metrics (the Mabuchi-Donaldson-Semmes
metric and Calabi's gradient metric) are the renowned\emph{ }Calabi
flow and Kähler-Ricci flow, respectively \cite{c-z}. The regularity
and large time properties of the evolution equation appearing here
will be studied elsewhere \cite{b-l,b-o}. 

In the remaining part of the introduction we will state our main results:
first, a general propagation of chaos result assuming a uniform Lipschitz
and convexity assumption on the interaction energy and then the application
to permanental point processes and toric Kähler-Einstein metrics.
In the companion paper \cite{b-o0} we will give a more general formulation
of the propagation of chaos result, by relaxing some of the assumptions
(in particular, this will yield sharp convergence results for strongly
singular repulsive pair interactions when $n=1).$

\subsection{Propagation of chaos and Wasserstein gradient flows}

Consider a system of $N$ identical particles diffusing on the $n-$dimensional
Euclidean space $X:=\R^{n}$ and interacting by a symmetric energy
function $E^{(N)}(x_{1},x_{2},....,x_{N})$. At a fixed inverse temperature
$\beta$ the distribution of particles at time $t$ is, according
to non-equilibrium statistical mechanics, described by the following
system of stochastic differential equations (SDEs), under suitable
regularity assumptions on $E^{(N)}:$

\begin{equation}
dx_{i}(t)=-\frac{\partial}{\partial x_{i}}E^{(N)}(x_{1},x_{2},....,x_{N})dt+\sqrt{\frac{2}{\beta}}dB_{i}(t),\label{eq:sde general intro}
\end{equation}
where $B_{i}$ denotes $N$ independent Brownian motions on $\R^{n};$
the equation is called the (overdamped) \emph{Langevin equation} in
the physics literature. In other words, this is the Ito diffusion
on $\R^{n}$ describing the (downward) gradient flow of the function
$E^{(N)}$ on the configuration space $X^{N}$ perturbed by a noise
term. A classical problem in mathematical physics going back to Boltzmann
and made precise by Kac \cite{kac} is to show that, in the many particle
limit where $N\rightarrow\infty,$ a\emph{ deterministic }macroscopic
evolution emerges from the stochastic microscopic dynamics described
by \ref{eq:sde general intro}. More precisely, denoting by $\delta_{N}$
the empirical measures 
\begin{equation}
\delta_{N}:=\frac{1}{N}\sum\delta_{x_{i}},\label{eq:empiric measure intro}
\end{equation}
 the SDEs \ref{eq:sde general intro} define a curve $\delta_{N}(t)$
of random measures on $X.$ The problem is to show that, if at the
initial time $t=0$ the random variables $x_{i}$ are independent
with identical distribution $\mu_{0}$ then at any later time $t$
the empirical measure $\delta_{N}(t)$ converges in law to a deterministic
curve $\mu_{t}$ of measures on $\R^{n}$ 
\begin{equation}
\lim_{N\rightarrow\infty}\delta_{N}(t)=\mu_{t}\label{eq:convergence empirical measure intro}
\end{equation}
 In the terminology of Kac \cite{kac} (see also \cite{sn}) this
means that \emph{propagation of chaos} holds at any time $t.$ It
should be stressed that the previous statement admits a pure PDE formulation,
not involving any stochastic calculus (see Section \ref{sub:The-forward-Kolmogorov})
and it is this analytic point of view that we will adopt here. \footnote{From a differential geometric point of view the SDEs \ref{eq:sde general intro}
correspond, under the transformation $\rho\mapsto e^{E/2}\rho,$ to
the heat flow on $X^{N}$ of the Witten Laplacian of the ``Morse
function'' $E,$ but we will not elaborate  on this point here. }

Of course, if propagation of chaos is to hold then some consistency
assumptions have to be made on the sequence $E^{(N)}$ of energy functions
as $N$ tends to infinity. The standard assumption in the literature
ensuring that propagation of chaos does hold is that $E^{(N)}(x_{1},x_{2},....,x_{N})$
can be as written as 
\begin{equation}
E^{(N)}(x_{1},x_{2},....,x_{N})=NE(\delta_{N})\label{eq:energy as funtion of empir measure intro}
\end{equation}
 for a fixed functional $E$ on the space of $\mathcal{P}(X)$ of
all probability measures on $X,$ where $E$ is assumed to have appropriate
regularity properties (to be detailed below). This is sometimes called
a mean field model. By the results in \cite{b-h,sn,da-g,mmw} it then
follows that the limit $\mu_{t}(=\rho_{t}dx)$ with initial data $\mu_{0}(=\rho_{0}dx)$
is uniquely determined and satisfies an explicit non-linear evolution
equation on $\mathcal{P}(X)$ of the following form: 
\begin{equation}
\frac{d\rho_{t}}{dt}=\frac{1}{\beta}\Delta\rho_{t}-\nabla\cdot(\rho_{t}b[\rho_{t}])\label{eq:evolution equation with drift intro}
\end{equation}
where we have identified $\mu(=\rho dx)$ with its density $\rho$
and $b[\mu]$ is a function on $\mathcal{P}(X)$ taking valued in
the space of vector fields on $X$ defined as minus the gradient of
the differential $dE_{|\mu}$ of $E$ on $\mathcal{P}(X)$ 
\begin{equation}
b[\mu]=-\nabla(dE_{|\mu})\label{eq:drift in terms of energy intro}
\end{equation}
where the differential $dE_{|\mu}$ at $\mu$ is identified with a
function on $X,$ by standard duality (the alternative suggestive
notation $b[\rho]=-\nabla\frac{\partial E(\rho)}{\partial\rho}$ is
often used in the literature). In the kinetic theory literature drift-diffusion
equations of the form \ref{eq:drift in terms of energy intro} are
often called \emph{McKean-Vlasov equation}s. More generally, the results
referred to above hold in the more general setting where the gradient
vector field $-\frac{\partial}{\partial x_{i}}E^{(N)}(x_{1},x_{2},....,x_{N})$
on $X$ appearing in equation \ref{eq:sde general intro} is replaced
by $b[\delta_{N}]$ for a given vector field valued function $v[\mu]$
on $\mathcal{P}(X),$ satisfying appropriate continuity properties.

One of the main aims of the present work is to introduce a new approach
to the propagating of chaos result \ref{eq:convergence empirical measure intro}
for the stochastic dynamics \ref{eq:sde general intro} which exploits
the gradient structure of the equations in question and which applies
under weaker assumptions than the previous results referred to above.
As indicated above our main motivation for weakening the assumptions
comes from the applications to toric Kähler-Einstein metrics described
below. In that case there is a functional $E(\mu)$ on $\mathcal{P}(\R^{n})$
such that

\begin{equation}
\frac{1}{N}E^{(N)}(x_{1},x_{2},....,x_{N})=E(\delta_{N})+o(1),\label{eq:limit of EN intro}
\end{equation}
for a sequence of functionals $E^{(N)}$ which are uniformly Lipschitz
continuous in each variable separately, i.e. there is a constant $C$
such that 
\begin{equation}
|\nabla_{x_{i}}E^{(N)}|\leq C\label{eq:Lip assumption intro}
\end{equation}
 and the error term $o(1)$ tends to zero, as $N\rightarrow\infty$
(for $x_{i}$ uniformly bounded). Moreover, $E^{(N)}$ is $\lambda-$convex
on $X^{N}$ for some real number $\lambda,$ which, by symmetry, means
that the (distributional) Hessian are uniformly bounded from below
for any fixed index $i:$ 
\begin{equation}
\nabla_{x_{i}}^{2}E^{(N)}\geq\lambda I,\label{eq:lambda konvex intro}
\end{equation}
 where $I$ denotes the identity matrix. This implies that there exists
a unique solution to the evolution equation \ref{eq:evolution equation with drift intro}
in the sense of weak gradient flows on the space $\mathcal{P}_{2}(X)$
of all probability measures with finite second moments equipped with
the Wasserstein $L^{2}-$metric \cite{a-g-s}: 

\[
\frac{d\mu_{t}}{dt}=-\nabla F_{\beta}(\mu_{t})
\]
where $F_{\beta}$ is the free energy type functional corresponding
to the macroscopic energy $E(\mu)$ at inverse temperature $\beta:$
\[
F_{\beta}(\mu)=E(\mu)+\frac{1}{\beta}H(\mu),
\]
and where $H(\mu)$ is the Boltzmann entropy of $\mu$ (see Section
\ref{sub:Notation} for notation). 
\begin{thm}
\label{thm:dynamic intro}Suppose that $E^{(N)}$ is a sequence of
symmetric functions on $(\R^{n})^{N}$ satisfying the Main Assumptions
\ref{eq:limit of EN intro}, \ref{eq:Lip assumption intro} and \ref{eq:lambda konvex intro}.
Then, for any fixed positive time $t,$ the empirical measure $\frac{1}{N}\sum\delta_{x_{i}(t)}$
of the system of SDEs \ref{eq:sde general intro} with independent
initial data distributed according to $\mu_{0}\in\mathcal{P}_{2}(\R^{n}$)
converges in law, as $N\rightarrow\infty,$ to the deterministic measure
$\mu_{t}$ evolving by the gradient flow on the Wasserstein space
of the corresponding free energy functional $F_{\beta}$ emanating
from $\mu_{0}.$ 
\end{thm}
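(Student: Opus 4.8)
The plan is to pass to the PDE/Fokker--Planck formulation and run a compactness-plus-uniqueness argument, where the gradient-flow structure on $\mathcal{P}_2(\R^n)$ supplies both the a priori bounds and the identification of the limit. First I would fix the inverse temperature $\beta$ and consider, for each $N$, the law $f^{(N)}_t$ on $(\R^n)^N$ of the diffusion \ref{eq:sde general intro}; by the $\lambda$-convexity assumption \ref{eq:lambda konvex intro} the $N$-particle energy $E^{(N)}$ generates a $\lambda$-geodesically convex functional $\mathcal{F}^{(N)}_\beta(\cdot)=\int E^{(N)}\,d\cdot+\tfrac1\beta H(\cdot)$ on $\mathcal{P}_2((\R^n)^N)$, and $f^{(N)}_t$ is exactly its Wasserstein gradient flow started from $\mu_0^{\otimes N}$. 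This is the key structural input: it gives, uniformly in $N$ and on any finite time interval, the standard Ambrosio--Gigli--Savar\'e estimates --- monotonicity of the energy along the flow, the "energy dissipation equality"
\[
\mathcal{F}^{(N)}_\beta(f^{(N)}_t)+\frac12\int_0^t|\dot f^{(N)}_s|^2\,ds+\frac12\int_0^t|\partial\mathcal{F}^{(N)}_\beta|^2(f^{(N)}_s)\,ds=\mathcal{F}^{(N)}_\beta(\mu_0^{\otimes N}),
\]
and uniform second-moment and entropy bounds. After normalizing by $N$ (using \ref{eq:limit of EN intro}) these translate into tightness for the laws of the empirical-measure curves $t\mapsto\delta_N(t)$ in $C([0,T],\mathcal{P}_2(\R^n))$, together with a uniform-in-$N$ bound on the metric derivative and on the (now finite, by \ref{eq:Lip assumption intro}) Wasserstein slope.

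The second step is to extract a subsequential limit. By Prokhorov, a subsequence of the laws of $\delta_N(\cdot)$ converges weakly to the law of some random curve $\mu_\cdot$ in $\mathcal{P}_2(\R^n)$; I would then argue, as in De Giorgi's minimizing-movements scheme, that $\mu_\cdot$ is a.s.\ a curve of maximal slope for the limiting free energy $F_\beta=E+\tfrac1\beta H$. Concretely, I would run the discrete minimizing-movement (JKO) scheme at the level of $(\R^n)^N$ with time step $\tau$, pass $N\to\infty$ at fixed $\tau$ to recover the JKO scheme on $\mathcal{P}_2(\R^n)$ for $F_\beta$ --- here one needs $\Gamma$-convergence of $\mu\mapsto\tfrac1N E^{(N)}(\text{pushforward})$ to $E(\mu)$ along empirical measures, which is where \ref{eq:limit of EN intro} and the uniform Lipschitz bound \ref{eq:Lip assumption intro} are used to control the interaction term and its variation --- and then send $\tau\to0$. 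The lower semicontinuity of $E$ and of the entropy $H$, together with the uniform slope bound, lets me pass to the limit in the energy-dissipation inequality and conclude that the limit is the (by $\lambda$-convexity, unique) weak gradient flow of $F_\beta$ emanating from $\mu_0$.

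The final step is a law-of-large-numbers/decoupling observation: at $t=0$ the particles are i.i.d.\ with law $\mu_0$, so $\delta_N(0)\to\mu_0$ a.s.; since the limiting gradient flow of $F_\beta$ is unique, every subsequential limit of $\delta_N(\cdot)$ is the \emph{same deterministic} curve $\mu_\cdot$, which forces convergence in law (indeed in probability) of $\delta_N(t)$ to $\mu_t$ for every $t$, i.e.\ propagation of chaos. I expect the main obstacle to be the interchange of the limits $N\to\infty$ and $\tau\to0$ in the minimizing-movement scheme: one must show that the discrete-in-time, finite-$N$ approximations satisfy estimates that are \emph{uniform in both parameters} so that the limit curve genuinely solves the limiting EDE, and in particular that the lack of a uniform \emph{modulus} of convexity (only the one-sided bound \ref{eq:lambda konvex intro}, with $\lambda$ possibly negative, and only Lipschitz rather than $C^2$ control from \ref{eq:Lip assumption intro}) still yields enough compactness of the slopes $|\partial\mathcal{F}^{(N)}_\beta|$ to identify $|\partial F_\beta|$ in the limit. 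Controlling the interaction energy's dependence on the empirical measure --- essentially a quantitative version of \ref{eq:limit of EN intro} stable under the minimizing-movement perturbations --- is the crux, and is presumably where the bound \ref{eq:Lip assumption intro} is exploited most heavily.
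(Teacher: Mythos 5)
Your second paragraph is essentially the paper's argument: discretize in time, run the minimizing-movement scheme for $F^{(N)}/N$ at fixed $\tau$, pass $N\to\infty$ at fixed $\tau$ by a variational (upper bound via product test measures, lower bound via lower semicontinuity of the mean entropy and convergence of the mean energy) argument, then send $\tau\to 0$. Where you diverge is in how the limit curve is identified and how the double limit is controlled. The paper does \emph{not} go through the energy--dissipation equality, curves of maximal slope, or any compactness of the slopes $|\partial\mathcal{F}^{(N)}_\beta|$; it identifies the limit purely through the EVI characterization and its uniqueness, and it resolves the $\tau$-versus-$N$ issue by the triangle inequality
\[
d(\Gamma_N(t),\Gamma(t))\leq d(\Gamma_N(t),\Gamma_N^{\tau}(t))+d(\Gamma(t),\Gamma^{\tau}(t))+d(\Gamma_N^{\tau}(t),\Gamma^{\tau}(t)),
\]
where the first two terms are $O(\tau)$ (or $O(\tau^{1/2})$ for $\lambda<0$) \emph{uniformly in $N$} by the quantitative Ambrosio--Gigli--Savar\'e error estimate for minimizing movements, whose constant depends only on $\lambda$, $T$ and the initial mean energy/slope, transported to $W_2(\mathcal{P}_2(\R^n))$ via the isometric embeddings of the symmetric configuration spaces. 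So there is no genuine interchange-of-limits problem to solve: one lets $N\to\infty$ first and $\tau\to 0$ second, and the only $N$-asymptotic input is the fixed-$\tau$ induction step.

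The part of your plan that would be genuinely problematic is the one you yourself flag as the crux: establishing a $\liminf$ inequality for the slopes $|\partial\mathcal{F}^{(N)}_\beta|$ against $|\partial F_\beta|$ along the $N\to\infty$ limit (the Sandier--Serfaty-type ingredient needed to close an energy--dissipation identification). Under the stated hypotheses this is exactly the kind of quantitative control on the drift $v^{(N)}=\nabla E^{(N)}$, and its convergence as a vector field, that the paper emphasizes it does not assume and does not use; the one-sided Hessian bound and the Lipschitz bound give no compactness of $\nabla E^{(N)}$ as $N\to\infty$. If you insist on the maximal-slope route you would be stuck there. The fix is simply to drop that route: the $\lambda$-convexity gives uniqueness of the EVI flow of $F_\beta$, the discrete induction step gives convergence of the schemes at fixed $\tau$, and the uniform error estimate closes the argument. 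Your final decoupling step (a.s. convergence of $\delta_N(0)$ to $\mu_0$ plus uniqueness of the limit flow forces deterministic convergence in law at every $t$) agrees with the paper.
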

It should be stressed that the key point of our approach is that we
do not need to assume that the drift $v[\mu](x)$ defined by formula
\ref{eq:drift in terms of energy intro} has any continuity properties
with respect to $\mu$ or $x$. Or more precisely, even if the $N-$dependent
drift vector field $v^{(N)}$ may very well be smooth for any fixed
$N,$ we do not assume that it is uniformly bounded in $N.$ This
will be crucial in the applications to toric Kähler-Einstein metrics
below.

We recall that if the drift has suitable continuity assumptions, then
the existence of a solution to the drift-diffusion equation \ref{eq:evolution equation with drift intro}
can be established using fix point type arguments \cite{sn}. However,
in our case we have, in general, to resort to the weak gradient flow
solutions provided by the general theory in \cite{a-g-s}, where the
solution $\rho_{t}$ can be characterized uniquely by a differential
inequality called the \emph{Evolutionary Variational Inequality (EVI).}
As shown in \cite{a-g-s} the corresponding solution $\rho_{t}$ satisfies
the drift-diffusion equation\ref{eq:evolution equation with drift intro}
in a suitable weak sense (as follows formally from the Otto calculus
\cite{ot}).

Our approach is inspired by the approach of Messer-Spohn \cite{m-s}
concerning the\emph{ static} problem for the SDEs \ref{eq:sde general intro},
i.e the study of the Boltzmann-Gibbs measure on $X^{N}$ associated
to $E^{(N)}$ at inverse temperature $\beta:$

\[
\mu_{\beta}^{(N)}=\frac{e^{-\beta E^{(N)}}}{Z_{N}}dx^{\otimes N_{k}}
\]
assuming that the normalization function $Z_{N}$ (the partition function)
is finite: 

\[
Z_{N}:=\int_{X^{N}}e^{-\beta E^{(N)}}dx<\infty
\]
From a statistical mechanical point of view this probability measure
describes the microscopic equilibrium distribution at a fixed inverse
temperature $\beta$ and it appears as the law of the large time limit
(for $N$ fixed) of the empirical measures $\delta_{N}(t).$ We thus
get a uniform approach which applies both to the dynamic and the static
setting and which in the latter case leads to the following generalization
of \cite{m-s}:
\begin{thm}
\label{thm:static intro}Suppose that $E^{(N)}$ is uniformly Lipschitz
continuous and convex and satisfies the following uniform properness
assumption: 
\[
E^{(N)}\geq\frac{1}{C}\sum_{i=1}^{N}|x_{i}|-CN
\]
 for some positive constant $C.$ Then the Boltzmann-Gibbs measures
corresponding to $E_{N}$ are well-defined and 
\begin{equation}
\lim_{N\rightarrow\infty}-\frac{1}{N\beta}\log\int e^{-\beta E^{(N)}}dx^{\otimes N}=\inf_{\mathcal{P}_{2}(\R^{n})}F_{\beta}(>-\infty)\label{eq:conv of free enery intro}
\end{equation}
Moreover, the corresponding free energy functional $F_{\beta}$ on
$\mathcal{P}_{2}(\R^{n})$ admits a minimizer $\mu_{\beta}$ and if
it is uniquely determined , then the corresponding empirical measures
on $\R^{n}$ converge in law as $N\rightarrow\infty$ to the deterministic
measure $\mu_{\beta}$ (which, as a consequence, is log concave). 
\end{thm}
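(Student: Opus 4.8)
The plan is to adapt the Messer--Spohn strategy \cite{m-s} to the static problem: first prove the free-energy limit \ref{eq:conv of free enery intro} by a matching pair of upper and lower bounds on $-\frac1{N\beta}\log Z_N$, $Z_N:=\int e^{-\beta E^{(N)}}\,dx^{\otimes N}$ (with convexity used only in the lower bound), and then read off the existence, structure, and limiting role of the minimiser $\mu_\beta$ from soft compactness and lower-semicontinuity arguments together with a concentration estimate. As preliminaries, the properness hypothesis gives the pointwise bound $e^{-\beta E^{(N)}(x)}\le e^{\beta CN}\prod_i e^{-\frac\beta C|x_i|}$, so $Z_N<\infty$ and the Gibbs measures $\mu_\beta^{(N)}$ are well defined; passing this bound to the limit yields $E(\mu)\ge\frac1C\int|x|\,d\mu-C$, and combining it with the Donsker--Varadhan inequality $H(\mu)=\mathrm{Ent}(\mu\,|\,dx)\ge-\frac\beta C\int|x|\,d\mu-\log\int e^{-\frac\beta C|x|}\,dx$ shows both that $\inf F_\beta>-\infty$ and that every sublevel set $\{F_\beta\le c\}$ carries a uniform bound on the first moment $\int|x|\,d\mu$, hence is tight. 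Since $H$ is lower semicontinuous for the weak topology and $E$ is continuous — indeed $C$-Lipschitz for the $L^1$-Wasserstein metric, by \ref{eq:Lip assumption intro} — the functional $F_\beta$ is lower semicontinuous and coercive, so the direct method produces a minimiser $\mu_\beta$ of $F_\beta$; it then remains to identify $\inf F_\beta$ with the limiting free energy, to show under uniqueness that $\mu_\beta$ is the limit in law of the empirical measures, and to check its log concavity.

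For the upper bound on $-\frac1{N\beta}\log Z_N$ I would apply the Gibbs variational principle $-\frac1\beta\log Z_N=\inf_\nu\bigl[\int E^{(N)}\,d\nu+\frac1\beta\mathrm{Ent}(\nu\,|\,dx^{\otimes N})\bigr]$ to a product trial state $\nu=\mu^{\otimes N}$ with $\mu=\rho\,dx$ satisfying $H(\mu)<\infty$ and $\int|x|\,d\mu<\infty$. Then $\frac1N\mathrm{Ent}(\mu^{\otimes N}\,|\,dx^{\otimes N})=H(\mu)$, while $\frac1N\int E^{(N)}\,d\mu^{\otimes N}=\int E(\delta_N)\,d\mu^{\otimes N}+o(1)\to E(\mu)$ by the law of large numbers for empirical measures and the $W_1$-continuity of $E$, uniform integrability near infinity being supplied by \ref{eq:Lip assumption intro} and the properness bound. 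Hence $\limsup_N-\frac1{N\beta}\log Z_N\le F_\beta(\mu)$ for each such $\mu$, and minimising over this class — over which the infimum of $F_\beta$ coincides with $\inf_{\mathcal P_2(\R^n)}F_\beta$ — gives the desired upper bound.

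For the lower bound I would use that $-\frac1\beta\log Z_N$ equals the value of the functional above at the minimiser $\nu=\mu_\beta^{(N)}$, which is exchangeable and, by properness and the bound on $Z_N$, has a uniform first-moment bound on its one-particle marginal $\nu^{(1)}$. Writing $dx=c\,e^{\frac\beta C|x|}\,d\mu_0$ with $\mu_0=c^{-1}e^{-\frac\beta C|x|}\,dx$ a probability measure, I would combine (i) the super-additivity of relative entropy, $\mathrm{Ent}(\nu\,|\,\mu_0^{\otimes N})\ge N\,\mathrm{Ent}(\nu^{(1)}\,|\,\mu_0)$ for exchangeable $\nu$, with (ii) convexity of $E^{(N)}$, which forces the limiting $E$ to be convex along linear interpolation of measures, so that Jensen's inequality for the random measure $\delta_N$ (whose barycentre under $\nu$ is $\nu^{(1)}$) gives $\frac1N\int E^{(N)}\,d\nu\ge E(\nu^{(1)})+o(1)$. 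A short algebraic manipulation then collapses the right-hand side to $F_\beta(\nu^{(1)})+o(1)\ge\inf_{\mathcal P_2(\R^n)}F_\beta+o(1)$, which together with the upper bound yields \ref{eq:conv of free enery intro}. I expect the main obstacle to lie precisely here: the error $o(1)$ in \ref{eq:limit of EN intro} is uniform only over configurations with all $|x_i|$ bounded, so the contribution of the region where $\sum_i|x_i|$ is of order $N$ must be absorbed, and this is where the properness hypothesis (which forces super-exponential tail decay of the relevant measures) enters essentially; in addition one must verify carefully that convexity of the finite-$N$ energies $E^{(N)}$ really descends to convexity of $E$ along linear interpolations — and, for the log-concavity statement, of the first variation $dE_{|\mu}$.

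Finally, suppose $\mu_\beta$ is the unique minimiser. Then for every $\epsilon>0$ the empirical measures concentrate at $\mu_\beta$: the set $\{W_1(\mu,\mu_\beta)\ge\epsilon\}\cap\{\int|x|\,d\mu\le M\}$ is compact, hence covered by finitely many small Wasserstein balls, and applying the lower-bound estimate to the restriction of $Z_N$ to each ball — on which the barycentre of $\delta_N$ under any admissible $\nu$ stays in the (convex) ball, so has free energy within $\delta$ of $\inf_{\{W_1(\cdot,\mu_\beta)\ge\epsilon\}}F_\beta$ — while estimating the remaining region $\{\sum_i|x_i|>NM\}$ directly by the properness bound, one obtains, for $M$ large and $\delta$ small, $\limsup_N\frac1N\log\mu_\beta^{(N)}\!\bigl(W_1(\delta_N,\mu_\beta)\ge\epsilon\bigr)\le-\beta\bigl(\inf_{\{W_1(\cdot,\mu_\beta)\ge\epsilon\}}F_\beta-\inf F_\beta\bigr)<0$, the strict inequality because $F_\beta$, being lower semicontinuous and coercive, attains its infimum over the closed set $\{W_1(\cdot,\mu_\beta)\ge\epsilon\}$ at a point distinct from $\mu_\beta$. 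Hence $\delta_N\to\mu_\beta$ in law. (Alternatively, an enhanced Sanov theorem for $\mu_0^{\otimes N}$ in $(\mathcal P_1(\R^n),W_1)$ — valid since $\mu_0$ has a finite exponential moment — combined with Varadhan's lemma for the $W_1$-continuous, bounded-above functional $-\beta E(\mu)+\frac\beta C\int|x|\,d\mu$, yields \ref{eq:conv of free enery intro} and the full large-deviation principle with good rate $\beta(F_\beta-\inf F_\beta)$ at once.) Log concavity of $\mu_\beta$ is then automatic: by the Pr\'ekopa--Leindler inequality the Gibbs density $e^{-\beta E^{(N)}}$ is log concave on $(\R^n)^N$, hence so is its one-particle marginal, which equals $\mathbf E[\delta_N]$ and therefore converges weakly to $\mu_\beta$; log concavity survives weak limits. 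A log-concave probability density on $\R^n$ has at worst exponential tails, so $\mu_\beta\in\mathcal P_2(\R^n)$, the infimum of $F_\beta$ over $\mathcal P_2(\R^n)$ is attained at it, and the proof is complete.
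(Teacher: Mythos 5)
Your upper bound (Gibbs variational principle with product trial states) and your log-concavity argument (Pr\'ekopa for the marginal, stability under weak limits) match the paper's proof. The genuine gap is in your lower bound, at the step
\[
\frac1N\int E^{(N)}\,d\nu\;\geq\;E(\nu^{(1)})+o(1),
\]
which you justify by claiming that convexity of $E^{(N)}$ on $(\R^{n})^{N}$ forces $E$ to be convex with respect to the \emph{affine} structure on $\mathcal{P}(\R^{n})$, so that Jensen applies to the random measure $\delta_{N}$ with barycentre $\nu^{(1)}$. That implication is false: convexity of $E^{(N)}$ on configuration space corresponds to \emph{displacement} (Wasserstein-geodesic) convexity of $E$, not affine convexity, and these are genuinely different. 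Concretely, $E^{(N)}(x)=\frac1N\sum_{i,j}|x_{i}-x_{j}|$ is convex on $\R^{N}$, yet $E(\mu)=\int\!\!\int|x-y|\,\mu\otimes\mu$ satisfies $E(\tfrac12\delta_{a}+\tfrac12\delta_{b})=\tfrac12|a-b|>0=\tfrac12E(\delta_{a})+\tfrac12E(\delta_{b})$, so $E$ is affinely \emph{concave} along this segment; likewise in the paper's main application $E=-C(\cdot,\nu_{P})$ is affinely concave (the transport cost $C(\cdot,\nu)$ is an infimum of linear functionals over an affine constraint set, hence affinely convex). In these cases Jensen gives $\int E(\delta_{N})\,d\nu\leq E(\nu^{(1)})$ --- the wrong direction. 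More fundamentally, for a genuinely interacting system the mean energy of an exchangeable measure is not determined by its one-particle marginal up to $o(1)$ (already for pair interactions it depends on the two-particle marginal), so no reduction to $F_{\beta}(\nu^{(1)})$ can work without controlling correlations.

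The paper's lower bound avoids this by keeping the full de Finetti-type object: it extracts a limit $\Gamma\in W_{p}(\mathcal{P}(\R^{n}))$ of $\Gamma_{N}:=(\delta_{N})_{*}\mu_{\beta}^{(N)}$ and proves $\liminf_{N}\frac1N F^{(N)}(\mu_{\beta}^{(N)})\geq\int F_{\beta}(\mu)\,\Gamma(\mu)\geq\inf F_{\beta}$, using the Robinson--Ruelle lower semicontinuity of the mean entropy (Proposition \ref{prop:linf for mean entropy}) and the convergence of mean energies (Proposition \ref{prop:conv of mean energy}), the latter requiring only the uniform Lipschitz bound plus convergence of $\Gamma_{N}$ in $W_{1}(\mathcal{P}_{2})$ --- no convexity of $E$ in $\mu$ at all. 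Convexity of $E^{(N)}$ enters only where you did not use it: the properness hypothesis yields merely a first-moment bound on the marginal, and the paper upgrades this to bounds on all $p$-moments via Pr\'ekopa (log-concavity of the marginal) and Borell's reverse H\"older inequality (Lemma \ref{lem:borell}), which is what makes $\Gamma_{N}$ precompact in the needed Wasserstein topology. Your final concentration paragraph inherits the same flaw, since it again replaces the energy of an admissible $\nu$ by the free energy of its barycentre; once $\Gamma=\delta_{\mu_{\beta}}$ is established by the mixture argument, the convergence in law follows directly and no covering argument is needed.
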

It should be stressed that the properness assumption in the previous
theorem which will appear naturally in the setting of toric Kähler-Einstein
metrics below, corresponds to properness wrt the $L^{1}-$Wasserstein
metric on $\mathcal{P}(\R^{n}),$ which is thus strictly weaker than
demanding properness with respect to the $L^{2}-$Wasserstein metric.
But using the convexity assumption on $E_{N},$ we will bypass this
difficulty using Prekopa's inequality and Borell\textquoteright s
lemma.

We briefly point out that the previous theorem is also related to
previous work on lattice spin models such as the Kac model \cite{h-s},
as well as lattice models for random growth of surfaces \cite{fun},
where the large $N-$limit corresponds to the ``thermodynamic limit'',
where the lattice is approximated by a finite volume lattice.

\subsubsection{Idea of the proof of Theorem \ref{thm:dynamic intro} and comparison
with previous results}

The starting point of the proof is the basic fact that the SDEs \ref{eq:sde general intro}
on $X^{N}$ admit a PDE formulation: they correspond to a linear evolution
$\mu_{N}(t)$ of probability measures (or densities) on $X^{N},$
given by the corresponding forward Kolmogorov equation (also called
the Fokker-Planck equation). Given this fact our proof of Theorem
\ref{thm:dynamic intro} proceeds in a variational manner, building
on \cite{a-g-s}: the rough idea to show that the any weak limit curve
$\Gamma(t)$ of the laws 
\[
\Gamma_{N}(t):=(\delta_{N})_{*}\mu_{N}(t)\in\mathcal{P}_{2}(Y),\,\,\,\,\,\,Y=\mathcal{P}_{2}(\R^{n})
\]
 is of the form $\Gamma(t):=\delta_{\mu_{t}},$ where the curve $\mu_{t}$
in $\mathcal{P}_{2}(\R^{n})$ is uniquely determined by a ``dynamic
minimizing property''. To this end we first discretize time, by fixing
a small time mesh $\tau:=t_{j+1}-t_{j}$ and replace, for any fixed
$N,$ the curve $\Gamma_{N}(t)$ with its discretized version $\Gamma_{N}^{\tau}(t_{j}),$
defined by a variational Euler scheme (a ``minimizing movement''
in De Giorgi's terminology) as in \cite{j-k-o,a-g-s}. We then establish
a discretized version of Theorem \ref{thm:dynamic intro} saying that
if, at a given discrete time $t_{j}$ the following convergence holds
in the $L^{2}-$Wasserstein metric 
\[
\lim_{N\rightarrow\infty}\Gamma_{t_{j}}^{N}=\delta_{\mu_{t_{j}}^{\tau}},
\]
 then the convergence also holds at the next time step $t_{j+1}$
(using a variational argument). In particular, since, by assumption,
the convergence above holds at the initial time $0$ it ``propagates''
by induction to hold at any later discrete time. Finally, we prove
Theorem \ref{thm:dynamic intro} by letting the mesh $\tau$ tend
to zero. This last step uses that the very precise error estimates
established in \cite{a-g-s}, for discretization schemes as above,
only depend on a uniform lower bound $\lambda$ on the convexity of
the interaction energies.

Our proof appears to be to rather different from the probabilistic
approaches in \cite{sn,da-g} (and elsewhere) which are based on a
study of non-linear martingales and the recent PDE approach in \cite{mmw}.
As pointed out above these approaches require a Lipschitz control
on the drift vector field $v^{(N)}$ and hence a two-sided uniform
bound on the Hessian of the interaction energy $E^{(N)},$ while we
only require a uniform lower bound.

It may also be illuminating to think about the convergence of $\Gamma_{N}(t)$
towards $\Gamma(t)$ as a kind of a stability result for the sequence
of weak gradient flows on $\mathcal{P}_{2}(Y),$ associated to the
corresponding mean free energies, viewed as functionals on $\mathcal{P}_{2}(Y).$
This situation is somewhat similar to the stability result for gradient
flows on $\mathcal{P}_{2}(H)$ in \cite{a-g-s,a-g-z}, where $H$
is a Hilbert space, but the main difference here is that the underlying
space $Y$ is not a Hilbert space, as opposed to the setting in \cite{a-g-s,a-g-z},
which prevents one from directly applying the error estimates in \cite{a-g-s}
on the space $\mathcal{P}_{2}(Y)$ itself (this analog will be expanded
on in the companion paper \cite{b-o0}).

\subsubsection{\label{sub:Generalizations}Generalizations}

Before continuing with the applications of Theorem \ref{thm:dynamic intro}
(and its static analog) to permanental point process and toric Kähler-Einstein
metrics we want to stress that the assumptions appearing in Theorem
\ref{thm:dynamic intro} may certainly be weakened: 
\begin{itemize}
\item By rescaling $E^{(N)}$ we may as well allow the ``inverse temperature''
$\beta$ appearing in the SDEs \ref{eq:sde general intro} to depend
on $N$ as long as 
\[
\beta_{N}\rightarrow\beta\in[0,\infty],
\]
 as $N\rightarrow\infty.$ In particular, Theorem \ref{thm:dynamic intro}
also applies to $\beta=\infty$ where the evolution equation \ref{eq:evolution equation with drift intro}
becomes a pure transport equation (i.e. with no diffusion). However,
the precise relation to weak solutions becomes much more subtle and
is closely related to the notions of entropy solutions and viscosity
solutions studied in the PDE-literature \cite{la} (as detailed in
\cite{b-o0}). In fact, one may even allow that $\beta_{N}=\infty,$
where the corresponding convergence results yields a deterministic
mean field particle approximation.
\item The assumption \ref{eq:limit of EN intro} in conjunction with the
Lipschitz assumption \ref{eq:Lip assumption intro} may be replaced
by the assumption that the limit of the\emph{ mean energies} corresponding
to $E^{(N)},$ in the sense of statistical mechanics, exists (i.e.
that Proposition \ref{prop:conv of mean energy} below holds) and
that $E^{(N)}$ has a uniform coercivity property. For example, one
can add to the Lipschitz function $E^{(N)}$ any term of the form
$N\mathcal{V}(\delta_{N}),$ for a given coercive $\lambda-$convex
$\mathcal{V}$ functional on $\mathcal{P}_{2}(\R^{n})$ (one then
replaces $E(\mu)$ with $E(\mu)+\mathcal{V}(\mu)).$ 
\item The convexity assumption on $E^{(N)}$ may be replaced by a generalized
convexity property of the corresponding mean energy functional on
$\mathcal{P}_{2}(\R^{nN})^{S_{N}}.$ 
\end{itemize}
These generalizations will be developed in the companion paper \cite{b-o0}.

\subsection{\label{sub:Applications-to-permantel}Applications to permanental
point processes at negative temperature and toric Kähler-Einstein
metrics }

Let $P$ be a convex body in $\R^{n}$ containing zero in its interior
and denote by $P_{\Z}$ the lattice points in $P,$ i.e. the intersection
of the convex body $P$ with the integer lattice $\Z^{n}.$ We fix
an auxiliary ordering $p_{1},...,p_{N}$ of the $N$ elements of $P_{\Z}.$
Given a configuration $(x_{1},...,x_{N})$ of $N$ points on $X$
we denote by $\mbox{Per}(x_{1},...,x_{N})$ the number defined as
the permanent of the rank $N$ matrix with entries $A_{ij}:=e^{x_{i}\cdot p_{j}}:$
\begin{equation}
\mbox{Per}(x_{1},...,x_{N}):=\mbox{Per \ensuremath{(e^{x_{i}\cdot p_{j}})=\sum_{\sigma\in S_{N}}e^{x_{1}\cdot p_{\sigma(1)}+\cdots+x_{N}\cdot p_{\sigma(N)}},}}\label{eq:def of per intro}
\end{equation}
where $S_{N}$ denotes the symmetric group on $N$ letters. This defines
a symmetric function on $\R^{nN}$ which is canonically attached to
$P$ (i.e. it is independent of the choice of ordering of $P_{\Z}).$\footnote{In many body quantum mechanics $\mbox{Per}(x_{1},...,x_{N})$ appears
as the $N-$particle wave function for a bosonic system of $N$ particles
represented by the $N$ wave functions $e^{x\cdot p_{j}},$ i.e. $N$
planar waves with imaginary momenta proportional to $p_{j}.$ } We will consider the large $N$ limit which appears when $P$ is
replaced by the sequence $kP$ of scaled convex bodies, for any positive
integer $k.$ In particular, $N$ depends on $k$ as 
\[
N_{k}=\frac{k^{n}V(P)}{n!}+o(k^{n}),
\]
where $V(P)$ denotes the Euclidean volume of $P.$ In this setting
the interaction energy is defined by 
\begin{equation}
E^{(N_{k})}(x_{1},...,x_{N_{k}})=\frac{1}{k}\log\mbox{Per}(x_{1},...,x_{N_{k}})\label{eq:energy as permanent intro}
\end{equation}
To simplify the notation we will often drop the explicit dependence
of $N$ on $k.$ 

By the results in \cite{b} the assumptions in Theorem \ref{thm:dynamic intro}
hold with 
\[
E(\mu):=-C(\mu),
\]
 where $C(\mu)$ is the Monge-Kantorovich optimal cost for transporting
$\mu$ to the uniform probability measure $\nu_{P}$ on the convex
body $P,$ with respect to the standard symmetric quadratic cost function
$c(x,p)=-x\cdot p.$ Hence, the corresponding free energy functional
may be written as 
\begin{equation}
F_{\beta}(\mu)=-C(\mu)+\frac{1}{\beta}H(\mu)\label{eq:free energy for minus cost intro}
\end{equation}

\begin{thm}
\label{thm:toric dynamic intro}Assume that $\beta>0.$ Then, for
any fixed positive time $t,$ the empirical measure $\frac{1}{N}\sum\delta_{x_{i}}$
of the stochastic process \ref{eq:sde general intro} driven by \ref{eq:energy as permanent intro}
with initial independent data distributed according to a $\mu_{0}\in\mathcal{P}_{2}(\R^{n}$)
converges in law to the deterministic measure $\mu_{t}=\rho_{t}dx$
evolving by the gradient flow on the Wasserstein space, defined by
the functional $F_{\beta}$ (formula \ref{eq:free energy for minus cost intro})
and satisfying the evolution PDE in the distributional sense: 
\begin{equation}
\frac{\partial\rho_{t}}{\partial t}=\frac{1}{\beta}\Delta\rho_{t}+\nabla\cdot(\rho_{t}\nabla\phi_{t})\label{eq:evolut eq for toric ke}
\end{equation}
where $\phi_{t}(x)$ is the unique convex function on $\R^{n}$ solving
the Monge-Ampère equation 
\begin{equation}
\frac{1}{V(P)}\det(\partial^{2}\phi_{t})=\rho_{t}\label{eq:ma eq in system intro}
\end{equation}
(in the weak sense of Alexandrov) normalized so that $\phi(0)=0$
and satisfying the growth condition $\phi(x)\leq\phi_{P}(x),$ where
$\phi_{P}(x):=\sup_{p\in P}p\cdot x.$ 
\end{thm}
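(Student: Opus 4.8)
The plan is to deduce this result as an immediate application of the general propagation-of-chaos Theorem \ref{thm:dynamic intro}, and then to identify the resulting Wasserstein gradient flow with the drift-diffusion system \eqref{eq:evolut eq for toric ke}--\eqref{eq:ma eq in system intro}. Accordingly the proof splits into two essentially independent tasks: (i) checking the Main Assumptions \eqref{eq:limit of EN intro}, \eqref{eq:Lip assumption intro}, \eqref{eq:lambda konvex intro} for the permanental energies \eqref{eq:energy as permanent intro}; and (ii) computing the subdifferential/first variation of the free energy $F_\beta$ in \eqref{eq:free energy for minus cost intro} to obtain the stated PDE.

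\emph{Step 1 (verification of the hypotheses).} First I would invoke the results of \cite{b}: there it is shown that $\frac1{N_k}E^{(N_k)}=\frac1{kN_k}\log\operatorname{Per}(e^{x_i\cdot p_j})$ converges, uniformly on sets where the $x_i$ are bounded, to $-C(\mu)$ evaluated at $\mu=\delta_N$, where $C(\mu)$ is the Monge--Kantorovich cost of transporting $\mu$ to $\nu_P$ for the cost $c(x,p)=-x\cdot p$. This gives \eqref{eq:limit of EN intro} with $E(\mu)=-C(\mu)$. For \eqref{eq:Lip assumption intro}, a direct computation of $\nabla_{x_i}\log\operatorname{Per}$ expresses the gradient as a convex combination $\sum_j w_j p_j$ of the lattice points $p_j\in kP$, divided by $k$; since $\tfrac1k p_j\in P$, the gradient is uniformly bounded by $\sup_{p\in P}|p|=:C$, independent of $k$ and of the configuration. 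For the convexity bound \eqref{eq:lambda konvex intro}, one differentiates once more: $\nabla^2_{x_i}\log\operatorname{Per}$ equals $\tfrac1{k^2}$ times a variance-type (covariance) matrix of the $p_j$ under the probability weights $w_j$, which is positive semidefinite; hence $\nabla^2_{x_i}E^{(N_k)}\ge 0$, i.e.\ $\lambda=0$ works and each $E^{(N_k)}$ is in fact convex on $\R^{nN_k}$. With $\beta>0$ fixed, Theorem \ref{thm:dynamic intro} then applies verbatim and yields convergence in law of the empirical measures to the gradient flow $\mu_t$ of $F_\beta=-C(\cdot)+\tfrac1\beta H(\cdot)$ on $\mathcal P_2(\R^n)$ emanating from $\mu_0$.

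\emph{Step 2 (identification of the PDE).} It remains to show that this Wasserstein gradient flow is described by \eqref{eq:evolut eq for toric ke}--\eqref{eq:ma eq in system intro}. The entropy term $\tfrac1\beta H$ contributes, by the Otto calculus / the results in \cite{a-g-s,j-k-o}, the linear diffusion $\tfrac1\beta\Delta\rho_t$. For the cost term I would use Kantorovich duality: $C(\mu)=\int(-\phi_\mu)\,d\mu+\int(-\phi_\mu^*)\,d\nu_P$ where $\phi_\mu$ is the ($c$-convex, hence convex) optimal Kantorovich potential on $\R^n$, and by the Brenier/Knott--Smith theorem the optimal map pushing $\mu$ to $\nu_P$ is $\nabla\phi_\mu$, which (when $\mu=\rho\,dx$ is absolutely continuous) satisfies the Monge--Amp\`ere equation $\det(\partial^2\phi_\mu)=V(P)\,\rho$ in the Alexandrov sense, with the growth constraint $\phi_\mu\le\phi_P$ coming from $c$-convexity relative to the bounded set $P$, and normalized by $\phi_\mu(0)=0$. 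The first variation of $C$ at $\mu$, computed via the envelope theorem (the $\nu_P$-integral is stationary in $\mu$), is then $dC_{|\mu}=-\phi_\mu$ up to an additive constant, so the drift is $v[\mu]=\nabla(dE_{|\mu})=\nabla(-dC_{|\mu})=\nabla\phi_\mu$; plugging into \eqref{eq:evolution equation with drift intro} gives exactly \eqref{eq:evolut eq for toric ke} coupled to \eqref{eq:ma eq in system intro}. Finally I would note that the abstract EVI/weak gradient flow solution of \cite{a-g-s} is a distributional solution of this PDE, which is the form asserted.

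\emph{Main obstacle.} Step 1 is routine given \cite{b}. The real work is Step 2: justifying that the (formal) Otto-calculus computation of the drift $v[\mu]=\nabla\phi_\mu$ genuinely matches the Wasserstein subdifferential of the possibly-nonsmooth functional $-C$, and that the resulting abstract curve $\mu_t$ solves \eqref{eq:evolut eq for toric ke} distributionally. The delicate points are: (a) the functional $\mu\mapsto -C(\mu)$ is not of the standard McKean--Vlasov interaction or internal-energy type treated in \cite{a-g-s}, so one must verify its $\lambda$-(geodesic) convexity or at least identify its Fr\'echet/limiting subdifferential on $\mathcal P_2$ and show it is single-valued and given by $\{\nabla\phi_\mu\}$; (b) the potential $\phi_\mu$ has only $W^{1,\infty}_{loc}$ regularity a priori (it is convex but $\partial^2\phi_\mu$ is merely a matrix-valued measure), so the products $\rho_t\nabla\phi_t$ and the identification with the Monge--Amp\`ere density must be handled in the Alexandrov/distributional sense; (c) one should confirm the normalization $\phi_t(0)=0$ and the constraint $\phi_t\le\phi_P$ pin down $\phi_t$ uniquely so that \eqref{eq:ma eq in system intro} has a well-defined meaning along the flow. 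I expect these to be dispatched by combining the duality description of $C$ with standard optimal-transport regularity facts and the general machinery of \cite{a-g-s} for gradient flows of ($\lambda$-convex) functionals admitting a well-identified subdifferential.
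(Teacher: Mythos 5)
Your proposal is correct and follows essentially the same route as the paper: verify the Main Assumptions for the permanental energy (gradient of $\frac1k\log\mbox{Per}$ lands in $P$, Hessian is a nonnegative covariance matrix, convergence from \cite{b}), apply Theorem \ref{thm:dynamic intro}, and then identify the drift of the limiting Wasserstein gradient flow with $\nabla\phi_t$ via Brenier's theorem. The one point where the paper's execution differs from yours is precisely the ``main obstacle'' you flag in Step 2: rather than computing $dC_{|\mu}$ by Kantorovich duality and an envelope argument, the paper writes $-C(\cdot,\nu_P)=-\tfrac12 d_{W_2}^2(\cdot,\nu_P)+\tfrac12\int|x|^2\,\cdot\,+c_P$ and invokes the known differentiability/subdifferential result for the squared Wasserstein distance (\cite[Theorem 10.4.12]{a-g-s}) together with Brenier's theorem to conclude that the minimal subdifferential of $-C(\cdot,\nu_P)$ at $\rho\,dx$ is the $L^\infty$ vector field $\nabla\phi$ with $MA(\phi)=\rho\,dx$, $\phi\in\mathcal{C}_P$ normalized; combined with the general fact that the EVI flow solves the continuity equation with velocity the minimal subdifferential, this dispatches your points (a)--(c) in one stroke, while the geodesic convexity of $-C$ is obtained as a limit of the convex mean energies (or equivalently from the $-1$-convexity of $\tfrac12 d_{W_2}^2$).
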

Integrating twice reveals that the stationary  equation corresponding
to the evolution PDE in the previous equation may be written as follows
in terms of the convex ``potential'' $\phi:$ 
\begin{equation}
\det(\partial^{2}\phi)=e^{-\beta\phi}\label{eq:static equation ma intro}
\end{equation}
where $\rho_{t}dx=\rho dx:=MA(\phi).$ As shown in \cite{b-b} (generalizing
the seminal result in \cite{w-z}) there is a solution $\phi:=\phi_{\beta}$
to the previous equation iff the origin is the barycenter $b_{P}$
of $P,$ i.e. iff $b_{P}=0$ and then the solution is smooth (see
also \cite{c-k} for generalizations). Moreover, the additive group
$\R^{n}$ acts faithfully by translations on the solution space. Note
that up to replacing $P$ with $\beta^{-1}P$ we may as well assume
that $\beta=1$ and the corresponding static equation 
\begin{equation}
\det(\partial^{2}\phi)=e^{-\phi}\label{eq:k-e eq intro}
\end{equation}
 is precisely the \emph{Kähler-Einstein equation }for a toric Kähler
potential $\phi$ of a Kähler-Einstein metric with positive Ricci
curvature on the toric variety $X_{P}$ corresponding to $P,$ in
the case when $P$ is a rational polytope. More precisely, $X_{P}$
is a toric log Fano variety and $\phi$ corresponds to the Kähler
potential of a Kähler-Einstein metric with conical singularities along
the divisor $X_{P}-\C^{*n}$ ``at infinity'' - the ordinary smooth
Fano case appears when the polytope $P$ is a reflexive Delzant polytope
\cite{w-z,do,b-b}; for some background see Section \ref{sub:The-toric-setting}.

Given the relation to Kähler-Einstein metrics it is natural to ask
if Theorem \ref{thm:toric dynamic intro} has a static analog (as
in Theorem \ref{thm:static intro})? However, it follows from symmetry
considerations involving the action by translations of the additive
group $\R^{n},$ that the corresponding Boltzmann-Gibbs measure 
\[
\mu^{(N_{k})}:=\frac{1}{Z_{N_{k}}}\left(\mbox{Per}(x_{1},...,x_{N})\right)^{-\beta/k}dx^{\otimes N_{k}},
\]
describing a permanental point processes at negative temperature,
is not even well-defined, i.e. the partition function $Z_{N_{k}}$
diverges! This is a reflection of the fact that the static equation
\ref{eq:static equation ma intro} has a a multitude a solutions (due
to the translation symmetry) which from a statistical mechanical point
of view is a sign of a first order phase transition. However, as we
will show, this issue can be bypassed though a symmetry breaking mechanism
where one introduces a ``background potential'' $V(x)$ with appropriate
growth at infinity, dictated by $P,$ i.e. 
\begin{equation}
|V(x)-\phi_{P}(x)|\leq C,\,\,\,\,\phi_{P}(x):=\sup_{p\in P}p\cdot x\label{eq:growth of pot}
\end{equation}
and replace the interaction energy $E^{(N_{k})}(x_{1},...,x_{N_{k}})$
defined by formula \ref{eq:energy as permanent intro} by the convex
combination 
\begin{equation}
E_{V,\gamma}^{(N_{k})}(x_{1},...,x_{N_{k}}):=\gamma\frac{1}{k}\log\mbox{Per}(x_{1},...,x_{N_{k}})+(1-\gamma)\left(V(x_{1})+\cdots+V(x_{N_{k}})\right)\label{eq:def of weighted perm energy}
\end{equation}
for a given parameter $\gamma\in[0,1]$ (which from the point of view
of permanental point processes plays the role of \emph{minus} the
inverse temperature). Then the convergence in Theorem \ref{thm:toric dynamic intro}
still holds with $F$ replaced by 
\[
F_{V,\gamma}(\mu)=-\gamma C(\mu)+(1-\gamma)\int Vd\mu+H(\mu)
\]
and the corresponding static equation now becomes 
\begin{equation}
\det(\partial^{2}\phi)=e^{-(\gamma\phi+(1-\gamma)V)}dx,\label{eq:monge-ampere with gamma}
\end{equation}
which has at most one solution for any given $\gamma\in[0,1]$ and
convex body $P.$ Moreover, if $P$ satisfies the barycenter condition
$b_{P}=0$ then there exists a solution $\phi_{\gamma}$ to the equation
\ref{eq:monge-ampere with gamma} for any $\gamma\in[0,1]$ and as
$\gamma\rightarrow1$ it follows from the results in \cite{w-z,b-b}
that the solutions $\phi_{\gamma}$ converges to a particular solution
to the Kähler-Einstein equation \ref{eq:k-e eq intro}, singled out
by the potential $V.$ 
\begin{thm}
\label{thm:static toric}Let $P$ be a convex body in $\R^{n}$ containing
$0$ in its interior and denote by $b_{P}$ the barycenter of $P.$
For any potential $V$ in $\R^{n}$ satisfying the growth condition
\ref{eq:growth of pot} we have
\begin{itemize}
\item If $b_{P}=0,$ then, for any $\gamma\in[0,1[$ the Gibbs measure $\mu_{V,\gamma}^{(N)}$
corresponding to the energy function $E_{V,\gamma}^{(N_{k})}(x_{1},...,x_{N_{k}}),$
i.e. 
\[
\mu_{\phi_{0},\gamma}^{(N)}:=\frac{1}{Z_{N_{k},\phi_{0},\gamma}}\left(\mbox{Per}(x_{1},...,x_{N})\right)^{-\gamma/k}(e^{-(1-\gamma)V}dx)^{\otimes N_{k}}
\]
is well-defined and equal to the weak limit, as $t\rightarrow\infty$
of the law of the empirical measures for the corresponding SDEs \ref{eq:sde general intro}.
Moreover, as $N\rightarrow\infty$ the corresponding empirical measures
converge in law to the deterministic measure $\mu_{\beta}$ defined
as $\mu_{\gamma}=MA(\phi_{\gamma})$ for the unique (mod $\R)$ solution
$\phi_{\gamma}$ of the equation \ref{eq:monge-ampere with gamma}.
\item More generally, the Gibbs measure above is well-defined precisely
for $\gamma<R_{P},$ where $R_{P}\in[0,1]$ is the following invariant
of $P:$ 
\begin{equation}
R_{P}:=\frac{\left\Vert q\right\Vert }{\left\Vert q-b_{P}\right\Vert },\label{eq:inv R of conv bod}
\end{equation}
 where $q$ is the point in $\partial P$ where the line segment starting
at $b_{P}$ and passing through $0$ meets $\partial P.$ Moreover,
for any such parameter $\gamma$ the corresponding convergence results
still hold.
\end{itemize}
\end{thm}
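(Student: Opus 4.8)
The plan is to reduce everything to the static principle of Theorem~\ref{thm:static intro} and to an explicit computation of when the relevant properness holds. First I would note that $E_{V,\gamma}^{(N_k)}$ is convex (as a convex combination of the convex function $\frac1k\log\mathrm{Per}$ --- log-convexity of permanents of exponentials --- and the convex separable term $\sum V(x_i)$, after absorbing the convexity of $V$ up to a $\lambda$) and uniformly Lipschitz in each variable, since $\nabla_{x_i}\frac1k\log\mathrm{Per}$ is a convex combination of the lattice points $p_j\in P$, hence bounded by $\mathrm{diam}(P)$, and $\nabla V$ is bounded by the growth condition \eqref{eq:growth of pot}. By the results of \cite{b} quoted before the theorem, $\frac1{N_k}E_{V,\gamma}^{(N_k)}(\delta_{N_k}) = -\gamma C(\mu) + (1-\gamma)\int V\,d\mu + o(1) =: E_{V,\gamma}(\mu)$, so the hypotheses of Theorem~\ref{thm:static intro} are in force as soon as we verify the uniform properness bound $E_{V,\gamma}^{(N_k)} \ge \frac1C\sum |x_i| - CN_k$. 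Granting this, Theorem~\ref{thm:static intro} immediately gives that the Gibbs measure is well-defined, that $-\frac1{N\beta}\log Z_{N_k}$ converges to $\inf F_{V,\gamma}$, that $F_{V,\gamma}$ has a minimizer, and --- via the Euler--Lagrange equation for $F_{V,\gamma}$, which is exactly \eqref{eq:monge-ampere with gamma} (differentiate: $-\gamma\,\phi_\mu + (1-\gamma)V + \log\rho = \mathrm{const}$ where $\mu = MA(\phi_\mu)/V(P)$ is the optimal-transport potential identity) --- that the minimizer is $\mu_\gamma = MA(\phi_\gamma)$, which is unique because \eqref{eq:monge-ampere with gamma} has at most one solution (a known comparison-principle fact for this Monge--Amp\`ere equation). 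Finally, combining with Theorem~\ref{thm:toric dynamic intro} and its large-time behaviour identifies $\mu_\gamma$ with the $t\to\infty$ limit of the law of the empirical measures.

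The substantive work is thus entirely in the properness dichotomy, i.e.\ in showing that $E_{V,\gamma}^{(N_k)}$ satisfies the uniform coercivity bound precisely when $\gamma < R_P$. For this I would work on the level of the macroscopic functional $E_{V,\gamma}(\mu) = -\gamma C(\mu) + (1-\gamma)\int V\,d\mu$ and its behaviour along translations $\mu \mapsto (T_{sv})_*\mu$ for a unit vector $v$. Using the optimal-transport identity $C((T_{sv})_*\mu) = C(\mu) + s\,b_P\cdot v + O(1)$ (the cost changes by $s$ times the barycenter pairing, since transporting a translated measure costs $-x\cdot p$ and $\int (x+sv)\cdot p_{\mathrm{opt}} = \int x\cdot p_{\mathrm{opt}} + s v\cdot b_P$ because the optimal target is $\nu_P$ with barycenter $b_P$), while $\int V\,d((T_{sv})_*\mu) = \int V\,d\mu + s\,\phi_P(v) + o(s)$ up to the bounded error in \eqref{eq:growth of pot}, with $\phi_P(v) = \sup_{p\in P} p\cdot v = \|q_v\|$ along the ray direction. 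The slope of $E_{V,\gamma}$ in the escaping direction $v$ is then $-\gamma\, b_P\cdot v + (1-\gamma)\,\phi_P(v)$, and properness (boundedness below, hence a valid coercivity estimate in the $L^1$-Wasserstein sense) holds iff this is $> 0$ for every direction in which mass can escape, i.e.\ for every $v$. Optimizing over $v$, the worst direction is $v = -0/\|0-b_P\|$ pointing from $b_P$ through $0$ towards the boundary point $q$, and the inequality $-\gamma\, b_P\cdot v + (1-\gamma)\|q\| > 0$ rearranges exactly to $\gamma < \|q\|/\|q - b_P\| = R_P$. When $b_P = 0$ this is $\gamma < 1$, recovering the first bullet; and when $\gamma = R_P$ or larger, one exhibits an escaping sequence (push a fixed $\mu_0$ to infinity along this ray) along which $E_{V,\gamma}^{(N_k)} + \frac1\beta H \to -\infty$ on the $N$-particle side, so $Z_{N_k}$ diverges.

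The main obstacle I anticipate is making the heuristic ``slope along translations'' argument rigorous at the finite-$N$ level rather than only for the limiting functional --- i.e.\ establishing the uniform lower bound $E_{V,\gamma}^{(N_k)}(x_1,\dots,x_{N_k}) \ge \frac1C\sum_i |x_i| - C N_k$ with a constant independent of $k$, since Theorem~\ref{thm:static intro} is stated in terms of the $E^{(N)}$ themselves. Here I would use that $\frac1k\log\mathrm{Per}(x_1,\dots,x_{N_k}) \ge \frac1k\log\big(\prod_i e^{x_i\cdot p_{\sigma(i)}}\big) = \frac1k\sum_i x_i\cdot p_{\sigma(i)}$ for any fixed permutation $\sigma$, and more usefully a lower bound of the form $\frac1k\log\mathrm{Per} \ge \sum_i \psi_P(x_i) - C$ where $\psi_P$ is comparable to $\phi_P$; combined with $(1-\gamma)\sum V(x_i) \ge (1-\gamma)\sum \phi_P(x_i) - CN_k$ from \eqref{eq:growth of pot}, the convex combination $\gamma(\cdots) + (1-\gamma)(\cdots)$ is bounded below by $\sum_i \big(-\gamma\,\ell(x_i) + (1-\gamma)\phi_P(x_i)\big) - CN_k$ for a linear functional $\ell$ with $\ell \le \phi_P(\,\cdot\,)\|b_P\|/\|q-b_P\|$-type control; the per-particle bracket is then $\ge \frac1{C'}|x_i|$ exactly when $\gamma < R_P$, by the same optimization over directions as above but now pointwise in $x_i$. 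Carrying out this decomposition carefully --- in particular getting the lower bound on $\frac1k\log\mathrm{Per}$ with the right linear correction term matching $b_P$, which is where the combinatorics of the permanent enters --- is the real content; once it is in place, everything else is an application of the already-established Theorems~\ref{thm:static intro} and~\ref{thm:toric dynamic intro} together with the cited Monge--Amp\`ere existence/uniqueness results \cite{w-z,b-b}.
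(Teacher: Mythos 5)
Your overall strategy coincides with the paper's: verify the uniform properness hypothesis of Theorem \ref{thm:static intro} for $E^{(N_k)}_{V,\gamma}$, identify the threshold via the condition that $\gamma\, x\cdot b_{P}+(1-\gamma)\phi_{P}(x)\geq0$ for all $x$ (which is exactly the characterization of $R_{P}$ the paper imports from the proof of Theorem 2.18 in \cite{b-b}), and then let Theorem \ref{thm:static intro} do the rest. However, the step you yourself flag as ``the real content'' --- a lower bound on $\frac{1}{k}\log\mbox{Per}$ with the correct linear correction --- is left open, and the form you propose for it is not the right one: a bound $\frac{1}{k}\log\mbox{Per}\geq\sum_{i}\psi_{P}(x_{i})-C$ with $\psi_{P}$ comparable to $\phi_{P}$ is false (for $N$ coincident particles at $x$ the tropical permanent equals $N\, x\cdot b_{P}^{(k)}$, not $N\phi_{P}(x)$). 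What actually works, and is the paper's one-line argument, is Jensen's inequality for the concave logarithm against the \emph{uniform average over permutations}: $\frac{1}{k}\log\frac{1}{N_{k}!}\sum_{\sigma}e^{k\sum_{i}x_{i}\cdot p_{\sigma(i)}}\geq\sum_{i}x_{i}\cdot b_{P}^{(k)}$ with $b_{P}^{(k)}=b_{P}+o(1)$ a Riemann sum. Your single-permutation bound does not produce the barycenter, and your translation heuristic also carries a sign slip: since $C(\mu)=\inf_{\gamma}-\int x\cdot p\,\gamma$, translating by $sv$ gives $C\mapsto C-s\, v\cdot b_{P}$, so the slope of $-\gamma C+(1-\gamma)\int V$ is $+\gamma\, v\cdot b_{P}+(1-\gamma)\phi_{P}(v)$; with your sign the optimization over $v$ does not reproduce $R_{P}$.

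The second genuine gap is the sharpness statement at $\gamma=R_{P}$. Translating a fixed $\mu_{0}$ to infinity along the critical ray does \emph{not} drive $E^{(N)}_{V,\gamma}+\frac{1}{\beta}H$ to $-\infty$ there: along the direction $\vec{a}=(a,\dots,a)$ with $a$ normal to the facet of $P$ containing $q$, the convex function $\psi:=E^{(N_k)}_{R_{P},\phi_{P}}$ is merely \emph{affine} (its slope is exactly zero up to the $o(1)$ barycenter error), and translation leaves the entropy unchanged, so your proposed sequence keeps the free energy bounded. The paper instead concludes directly that $\int e^{-\psi}dx=\infty$ from the integrability criterion for log-concave measures (Borell's lemma: finiteness of $\int e^{-\psi}$ for convex $\psi$ forces $\psi\to\infty$ in every direction, which the affine direction violates); equivalently one must let the measure \emph{spread} along the affine direction so that the entropy, not the energy, tends to $-\infty$. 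Finally, a small point: the fixed-$N$, $t\to\infty$ identification of the Gibbs measure is obtained from standard convergence results for the linear Fokker--Planck equation with convex potential (e.g. \cite{bgg}), not from Theorem \ref{thm:toric dynamic intro}, which concerns the $N\to\infty$ limit at fixed $t$.
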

As indicated in the introduction this result can be viewed as a probabilistic
analog of the seminal Yau-Tian-Donaldson conjecture saying that a
Fano manifold $X$ admits a Kähler-Einstein metric if and only if
$X$ is K-stable (the conjecture has very recently been settled by
Chen-Donaldson-Sun \cite{c-d-s14,c-d-s3}; see also Tian \cite{ti}).
The latter notion is of algebro-geometric nature, but in the toric
setting it is equivalent to the corresponding polytope $P$ having
zero as its barycenter (see Section \ref{sub:The-complex-geometric}
and Corollary \ref{cor:ineq on toric var} for a comparison with the
complex geometric setting).

From a statistical mechanical point of view the critical value $R_{P}$
appearing above can be seen as a real analog of the well-known critical
value appearing in the study of the Keller-Segel equation as well
as in the study of the 2D log gas in \cite{clmp,k}. This connection
will be further expanded on elsewhere \cite{b-l}, but the main point
is that the invariant $R_{P}$ may also be characterized as the sup
over all $\gamma\in]0,\infty[$ such that the free energy type functional
$F_{\gamma}$ is bounded from below (compare \cite{b-b}). 

As we point out in Sections \ref{sub:The-tropical-limit}, \ref{sub:The-toric-setting}
our results also apply to the \emph{tropical} analog of the permanental
setting above, which can be viewed as the tropicalization of the complex
geometric setting on the corresponding toric variety. In the corresponding
deterministic setting (i.e. $\beta_{N}=\infty)$ the particles then
perform zigzag paths in $\R^{n}$ generalizing the extensively studied
Sticky Particle System on $\R$ \cite{ry-si,b-g,ns}. This is closely
related to the Zeldovich model for the formation of large-scale structures
in cosmology; see \cite{f-m-m-s,Br2,br3} (compare the discussion
in Section \ref{sub:Outlook-on-the}).

\subsection{Acknowledgments}

It is a pleasure to thank Eric Carlen for several stimulating discussions
and whose inspiring lecture in the Kinetic Theory seminar at Chalmers
concerning \cite{bcc} drew our attention to the Otto calculus and
the theory of gradient flows on the Wasserstein space. Thanks also
to Luigi Ambrosio for helpful comments on the paper and to Yann Brenier,
Jose Carrillo, Maxime Hauray and Bernt Wennberg for providing us with
references. This work was supported by grants from the Swedish Research
Council, the Knut and Alice Wallenberg Foundation and the European
Research Council.

\subsection{Organization}

In Section \ref{sec:Setup-and-proof} we start by recalling the general
setup that we will need from probability, the theory of Wasserstein
spaces and weak gradient flows and then turn to the proof of Theorem
\ref{thm:dynamic intro} in Section \ref{sub:Propagation-of-chaos}
(starting with the discretized situation). Then the proof of the corresponding
static result, Theorem \ref{thm:static intro} is given. In Section
\ref{sec:Permanental-processes-and}, we go on to apply the previous
general results to the permanental setting, as introduced in Section
\ref{sec:Permanental-processes-and} and its tropical analog. In the
final section we provide on outlook on some relations to conservation
laws, sticky particle type systems and complex geometry. The appendix
recalls the basics of the formal Otto and is included to serve as
a motivation for the material on Wasserstein gradient flows. The rather
lengthy setup and preparatory material in Section \ref{sec:Setup-and-proof}
is due to our effort to make the paper readable to a rather general
audience.

\section{\label{sec:Setup-and-proof}General setup and proof of Theorem \ref{thm:dynamic intro} }

\subsection{\label{sub:Notation}Notation}

Given a topological (Polish) space $Y$ we will denote the integration
pairing between measures $\mu$ on $Y$ (always assumed to be Borel
measures) and bounded continuous functions $f$ by 

\[
\left\langle f,\mu\right\rangle :=\int f\mu
\]
(we will avoid the use of the symbol $d\mu$ since $d$ will usually
refer to a distance function on $Y).$ In case $Y=\R^{D}$ we will
say that a measure\emph{ $\mu$ has a density,} denoted by $\rho,$
if $\mu$ is absolutely continuous wrt Lebesgue measure $dx$ and
$\mu=\rho dx.$ We will denote by $\mathcal{P}(\R^{D})$ the space
of all probability measures and by $\mathcal{P}_{ac}(\R^{D})$ the
subspace containing those with a density. The \emph{Boltzmann entropy
}$H(\rho)$ and \emph{Fisher information} $I(\rho)$ (taking values
in $]-\infty,\infty]$) are defined by 
\begin{equation}
H(\rho):=\int_{\R^{D}}(\log\rho)\rho dx,\,\,\,\,I(\rho)=\int_{\R^{D}}\frac{|\nabla\rho|^{2}}{\rho}dx\label{eq:def of H and I}
\end{equation}
(assuming that $\nabla\rho\in L^{1}(dx)$ and $\rho^{-1}\nabla\rho\in L^{2}(\rho dx)).$
More generally, given a reference measure $\mu_{0}$ on $Y$ the entropy
of a measure $\mu$ relative to $\mu_{0}$ is defined by 
\begin{equation}
H_{\mu_{0}}(\mu)=\int_{X^{N}}\left(\log\frac{\mu}{\mu_{0}}\right)\mu\label{eq:def of rel entropy notation}
\end{equation}
if the probability measure $\mu$ on $X$ is absolutely continuous
with respect to $\mu$ and otherwise $H(\mu):=\infty.$ The relative
Fisher information is defined similarly. Given a lower semi-continuous
(\emph{lsc}, for short) function $V$ on $Y$ and $\beta\in]0,\infty]$
(the ``inverse temperature) we will denote by $F_{\beta}^{V}$ the
corresponding \emph{(Gibbs) free energy functional with potential
$V:$
\begin{equation}
F_{\beta}^{V}(\mu):=\int_{X}V\mu+\frac{1}{\beta}H_{\mu_{0}}(\mu),\label{eq:def of free energy of v notation}
\end{equation}
}which coincides with $\frac{1}{\beta}$ times the entropy of $\mu$
relative to $e^{-V}\mu_{0}.$ In particular, since $H_{\mu_{0}}(\mu)\geq0,$
when $\mu$ and $\mu_{0}$ are probability measures, with equality
iff $\mu=\mu_{0}$ (by Jensen's inequality) the \emph{Boltzmann-Gibbs
measure} 
\[
\frac{e^{-\beta V}}{Z}\mu_{0},\,\,\,\,\,Z:=\int e^{-V}\mu_{0}
\]
 of $V,$ at inverse temperature $\beta$ is the unique minimizer
of $F_{\beta}^{V}$ on the space $\mathcal{P}(Y)$ of probability
measures, under the integrability assumption that $Z<\infty$ (this
is usually called \emph{Gibbs' variational principle}).

\subsection{\label{sub:Wasserstein-spaces-and}Wasserstein spaces and metrics}

We start with the following very general setup. Let $(X,d)$ be a
given metric space, which is Polish, i.e. separable and complete and
denote by $\mathcal{P}(X)$ the space of all probability measures
on $X$ endowed with the \emph{weak topology,} i.e. $\mu_{j}\rightarrow\mu$
weakly in $\mathcal{P}(X)$ iff $\int_{X}\mu_{j}f\rightarrow\int_{X}\mu f$
for any bounded continuous function $f$ on $X$ (this is also called
the\emph{ narrow topology} in the probability literature). The metric
$d$ on $X$ induces $l^{p}-$type metrics on the $N-$fold product
$X^{N}$ for any given $p\in[1,\infty[:$ 
\[
d_{p}(x_{1},...,x_{N};y_{1},...,y_{N}):=(\sum_{i=1}^{N}d(x_{i},y_{i})^{p})^{1/p}
\]
The permutation group $S_{N}$ on $N-$letters has a standard action
on $X^{N},$ defined by $(\sigma,(x_{1},...,x_{N}))\mapsto(x_{\sigma(1)},...,x_{\sigma(N)})$
and we will denote by $X^{(N)}$ and $\pi$ the corresponding quotient
and quotient projection, respectively:

\begin{equation}
X^{(N)}:=X^{N}/S^{N},\,\,\,\,\pi:\,X^{N}\rightarrow X^{(N)}\label{eq:quotient and proj}
\end{equation}
The quotient $X^{(N)}$ may be naturally identified with the space
of all configurations of $N$ points on $X.$ We will denote by $d_{(p)}$
the induced distance function on $X^{(N)},$ suitably normalized:

\[
d_{X^{(N)},l^{P}}(x_{1},...,x_{N};y_{1},...,y_{N}):=\inf_{\sigma\in S_{N}}(\frac{1}{N}\sum_{i=1}^{N}d(x_{i},y_{\sigma(i)})^{p})^{1/p}
\]
The normalization factor $1/N^{1/p}$ ensures that the standard embedding
of $X^{(N)}$ into the space $\mathcal{P}(X)$ of all probability
measures on $X:$ 
\begin{equation}
X^{(N)}\hookrightarrow\mathcal{P}(X),\,\,\,\,(x_{1},..,x_{N})\mapsto\delta_{N}:=\frac{1}{N}\sum\delta_{x_{i}}\label{eq:def of empricical measure}
\end{equation}
(where we will call $\delta_{N}$ the \emph{empirical measure}) is
an isometry when $\mathcal{P}(X)$ is equipped with the\emph{ $L^{p}-$Wasserstein
metric} $d_{W^{p}}$ induced by $d$ (for simplicity we will also
write $d_{W_{p}}=d_{p}):$ 
\begin{equation}
d_{W_{p}}^{p}(\mu,\nu):=\inf_{\gamma}\int_{X\times X}d(x,y)^{p}\gamma,\label{eq:def of wasser}
\end{equation}
 where $\gamma$ ranges over all couplings between $\mu$ and $\nu,$
i.e. $\gamma$ is a probability measure on $X\times X$ whose first
and second marginals are equal to $\mu$ and $\nu,$ respectively
(see Lemma \ref{lem:isometries} below). We will denote $W^{p}(X,d)$
the corresponding\emph{ $L^{p}-$Wasserstein space}, i.e. the subspace
of $\mathcal{P}(X)$ consisting of all $\mu$ with finite $p$ th
moments: for some (and hence any) $x_{0}\in X$ 
\[
\int_{X}d(x,x_{0})^{p}\mu<\infty
\]
We will also write $W^{p}(X,d)=\mathcal{P}_{p}(X)$ when it is clear
from the context which distance $d$ on $X$ is used. 
\begin{rem}
\label{rem:transport}In the terms of the Monge-Kantorovich theory
of optimal transport \cite{v1} $d_{W_{p}}^{p}(\mu,\nu)$ is the optimal
cost to for transporting $\mu$ to $\nu$ with respect to the cost
functional $c(x,p):=d(x,y)^{p}$. Accordingly a coupling $\gamma$
as above is often called a \emph{transport plan} between $\mu$ and
$\nu$ and it said to be defined be a \emph{transport map} $T$ if
$\gamma=(I\times T)_{*}\mu$ where $T_{*}\mu=\nu.$ In particular,
if $X=\R^{n},$ $p=2$ and $\mu$ and $\nu$ are absolutely continuous
with respect to Lebesgue measure, then, by Brenier's theorem \cite{br},
the optimal transport plan $\gamma$ is always defined by a transport
map $T(:=T_{\mu}^{\nu})$ of the form $T_{\mu}^{\nu}=\nabla\phi,$
where $\phi$ is a convex function on $\R^{n}$ (optimizing the dual
Kantorovich functional).
\end{rem}
We recall the following standard 
\begin{prop}
\label{prop:wasserstein conv}A sequence $\mu_{j}$ converges to $\mu$
in the distance topology in $W^{p}(X,d)$ iff $\mu_{j}$ converges
to $\mu$ in the weakly in $\mathcal{P}(X)$ and the $p$ th moments
converge (the latter assumption is automatic if $X$ is compact).
In particular, if $\mu_{j}$ converges to $\mu$ weakly in $\mathcal{P}(X)$
and the $p$ th moments are uniformly bounded, then $\mu_{j}$ converges
to $\mu$ in the distance topology in $W^{p'}(X,d)$ for any $p'<p.$\end{prop}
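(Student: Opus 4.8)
The plan is to establish the stated equivalence by treating its two implications separately: the implication ``$d_{W_{p}}$-convergence $\Rightarrow$ weak convergence together with convergence of $p$-th moments'' is soft, whereas the converse is the substantive part and I would handle it via the Skorokhod representation theorem combined with a dominated convergence argument allowing varying dominating functions. For the soft implication, fix a base point $x_{0}\in X$ and record the identity $\bigl(\int_{X}d(x,x_{0})^{p}\,\mu\bigr)^{1/p}=d_{W_{p}}(\mu,\delta_{x_{0}})$. If $d_{W_{p}}(\mu_{j},\mu)\to0$ then, by Jensen's inequality, also $d_{W_{1}}(\mu_{j},\mu)\to0$; invoking the Kantorovich-Rubinstein duality for $d_{W_{1}}$ and testing against bounded Lipschitz functions yields $\langle f,\mu_{j}\rangle\to\langle f,\mu\rangle$ for all such $f$, hence $\mu_{j}\to\mu$ weakly in $\mathcal{P}(X)$; and the reverse triangle inequality $\bigl|d_{W_{p}}(\mu_{j},\delta_{x_{0}})-d_{W_{p}}(\mu,\delta_{x_{0}})\bigr|\le d_{W_{p}}(\mu_{j},\mu)$ gives convergence of the $p$-th moments. (When $X$ is compact, $d(\cdot,x_{0})^{p}\in C_{b}(X)$, so moment convergence is automatic, which is the parenthetical remark in the statement.)

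For the converse, assume $\mu_{j}\to\mu$ weakly and $\int_{X}d(x,x_{0})^{p}\,\mu_{j}\to\int_{X}d(x,x_{0})^{p}\,\mu<\infty$. Since $X$ is Polish, the Skorokhod representation theorem provides, on a common probability space, measurable maps $Y_{j},Y$ with laws $\mu_{j},\mu$ such that $Y_{j}\to Y$ almost surely. As $(Y_{j},Y)$ is a coupling of $\mu_{j}$ and $\mu$, one has $d_{W_{p}}(\mu_{j},\mu)^{p}\le\E\bigl[d(Y_{j},Y)^{p}\bigr]$, so it suffices to show the right-hand side tends to $0$. By continuity of $d$ we have $d(Y_{j},Y)^{p}\to0$ almost surely, and the domination $d(Y_{j},Y)^{p}\le 2^{p-1}\bigl(d(Y_{j},x_{0})^{p}+d(Y,x_{0})^{p}\bigr)=:g_{j}$ holds with $g_{j}\to 2^{p}d(Y,x_{0})^{p}=:g$ almost surely and $\E[g_{j}]\to\E[g]<\infty$ by the assumed moment convergence. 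The dominated convergence theorem with the varying dominating functions $g_{j}$ then gives $\E\bigl[d(Y_{j},Y)^{p}\bigr]\to0$, as desired. (One may avoid Skorokhod by truncating the transport cost at radius $R$: on $B_{R}(x_{0})\times B_{R}(x_{0})$ the optimal cost is controlled by weak convergence, while the contribution of the complement is uniformly small since weak convergence together with convergence of the $p$-th moments makes $\{d(\cdot,x_{0})^{p}\}$ uniformly integrable against $\{\mu_{j}\}$; the Skorokhod route simply packages this uniform integrability for free.)

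It remains to deduce the ``in particular'' assertion. Suppose $\mu_{j}\to\mu$ weakly with $M:=\sup_{j}\int_{X}d(x,x_{0})^{p}\,\mu_{j}<\infty$, and fix $p'<p$. By Fatou's lemma $\int_{X}d(x,x_{0})^{p}\,\mu\le M$, so in particular the $p'$-th moment of $\mu$ is finite. For $R>0$ the function $\min\bigl(d(\cdot,x_{0})^{p'},R\bigr)$ is bounded and continuous, so its $\mu_{j}$-integrals converge to its $\mu$-integral; moreover on $\{d(\cdot,x_{0})^{p'}>R\}$ one has $d(\cdot,x_{0})^{p'}\le R^{-(p-p')/p'}d(\cdot,x_{0})^{p}$, whence $\int_{\{d(\cdot,x_{0})^{p'}>R\}}d(\cdot,x_{0})^{p'}\,\mu_{j}\le M\,R^{-(p-p')/p'}$ uniformly in $j$, and similarly for $\mu$. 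Letting $j\to\infty$ and then $R\to\infty$ shows the $p'$-th moments of $\mu_{j}$ converge to that of $\mu$, and the first part of the proposition, applied with exponent $p'$, yields $d_{W_{p'}}(\mu_{j},\mu)\to0$.

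The main obstacle is the converse implication: upgrading weak convergence to convergence in the Wasserstein metric genuinely needs the quantitative hypothesis that the $p$-th moments converge---not merely that they remain bounded---since this is precisely what prevents escape of mass to infinity. The Skorokhod route above makes this mechanism transparent, but it can equally be carried out through the explicit truncation-of-cost estimate; everything else reduces to routine measure-theoretic bookkeeping.
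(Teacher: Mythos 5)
Your proof is correct. For the ``in particular'' assertion---the only part the paper actually proves---your argument is essentially identical to the paper's: truncate the $p'$-th moment at radius $R$, control the tail by a Chebyshev-type bound $\le M R^{-(p-p')/p'}$ using the uniform $p$-th moment bound (the paper truncates at $d(x,x_{0})\le R$ and gets exponent $p-p'$, a cosmetic difference), handle the bounded piece by weak convergence, and then invoke the first statement with exponent $p'$. For the first (iff) statement the paper gives no proof at all, citing \cite[Theorem 7.12]{v1}; you supply a complete and valid one, with the soft direction via $d_{W_{1}}\le d_{W_{p}}$ plus Kantorovich--Rubinstein and the reverse triangle inequality, and the substantive direction via Skorokhod representation together with the generalized dominated convergence theorem (Pratt's lemma) with varying dominants $g_{j}$ satisfying $\E[g_{j}]\to\E[g]<\infty$. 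This is the standard textbook route and all steps check out (note that $2^{p-1}$ in the domination uses $p\ge1$, which is in force throughout); your parenthetical remark that the Skorokhod step can be replaced by a truncation/uniform-integrability argument is also accurate and is closer in spirit to what the cited reference does.
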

\begin{proof}
For the first statement see for example \cite[Theorem 7.12]{v1}.
The second statement is certainly also well-known, but for completeness
we include a simple proof. Decompose 
\[
\int_{X}d(x,x_{0})^{p'}\mu_{j}=\int_{\{d(x,x_{0})\leq R\}}d(x,x_{0})^{p'}\mu_{j}+\int_{\{d(x,x_{0})>R\}}d(x,x_{0})^{p'}\mu_{j}
\]
By the assumption and Chebishev's inequality the second terms may
be estimated from above by $C/R^{(p-p')}$ and by the assumption of
weak convergence the first term converges to $\int_{\{d(x,x_{0})\leq R\}}d(x,x_{0})^{p'}\mu$
as $j\rightarrow\infty$ (by taking $f$ to be a suitable regularization
of $1_{\{d(x,x_{0})\leq R\}}d(x,x_{0})^{p'}).$ Finally, letting $R$
tend to infinity concludes the proof.
\end{proof}
Since $Y_{p}:=(W_{p}(X),d_{W_{p}})(:=\mathcal{P}_{p}(X))$ is also
a Polish space we can iterate the previous construction and consider
the Wasserstein space $W_{q}(Y)\subset\mathcal{P}(\mathcal{P}(X))$
that we will write as $W_{q}(\mathcal{P}_{p}(X)),$ which is thus
the space of all probability measures $\Gamma$ on $\mathcal{P}(X)$
such that, for some $\mu_{0}\in W_{p}(X)$ 
\[
\int_{\mathcal{P}(X)}d_{p}(\mu,\mu_{0})^{q}\Gamma<\infty
\]

\begin{lem}
\label{lem:isometries}(Three isometries)
\begin{itemize}
\item The empirical measure $\delta_{N}$ defines an isometric embedding
$(X^{(N)},d_{(p)})\rightarrow\mathcal{P}_{p}(X)$ 
\item The corresponding push-forward map $(\delta_{N})_{*}$ from $\mathcal{P}(X^{(N})$
to $\mathcal{P}(\mathcal{P}(X))$ induces an isometric embedding between
the corresponding Wasserstein spaces $W_{q}(X^{(N)},d_{(p)})$ and
$W_{q}(\mathcal{P}_{p}(X)).$ 
\item The push-forward $\pi_{*}$ of the quotient projection $\pi:X^{N}\rightarrow X^{(N)}$
induces an isometry between the subspace of symmetric measures in
$(W_{q}(X^{N},\frac{1}{N^{1/p}}d_{p})$ and the space $(W_{q}(X^{(N)},d_{(p)})$ 
\end{itemize}
\end{lem}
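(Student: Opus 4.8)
The plan is to prove each of the three isometry statements in turn, with the first providing the key computation and the other two following by a general push-forward principle for Wasserstein spaces over a quotient.

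\medskip

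\textbf{Step 1: The empirical measure embedding.}
First I would unwind the definitions. Given two configurations $[x_1,\dots,x_N]$ and $[y_1,\dots,y_N]$ in $X^{(N)}$, the distance $d_{(p)}$ is $\inf_{\sigma\in S_N}\bigl(\tfrac1N\sum_i d(x_i,y_{\sigma(i)})^p\bigr)^{1/p}$, while the Wasserstein distance between $\delta_N^x:=\tfrac1N\sum_i\delta_{x_i}$ and $\delta_N^y:=\tfrac1N\sum_i\delta_{y_j}$ is the infimum of $\bigl(\int d^p\,\gamma\bigr)^{1/p}$ over couplings $\gamma$ of these two uniform atomic measures. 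The inequality $d_{W_p}\le d_{(p)}$ is immediate: any permutation $\sigma$ gives the coupling $\gamma_\sigma=\tfrac1N\sum_i\delta_{(x_i,y_{\sigma(i)})}$ whose cost is exactly $\tfrac1N\sum_i d(x_i,y_{\sigma(i)})^p$. For the reverse inequality I would invoke the fact that for transport between two uniform measures on $N$ atoms, the Birkhoff--von Neumann theorem tells us that the extreme points of the (compact, convex) set of couplings are exactly the permutation matrices scaled by $1/N$; since the cost is linear in $\gamma$, the infimum is attained at such an extreme point, hence equals $\min_\sigma \tfrac1N\sum_i d(x_i,y_{\sigma(i)})^p$. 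One should handle the mild degeneracy where the $x_i$ (or $y_j$) are not distinct, but this only shrinks the configuration and the argument goes through verbatim by grouping equal points; alternatively one can perturb and pass to the limit. This yields the claimed isometry onto its image, and since $X^{(N)}$ with $d_{(p)}$ has finite $p$-th moments mapped to measures with finite $p$-th moments, the image lies in $\mathcal{P}_p(X)$.

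\medskip

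\textbf{Step 2: A general lemma on push-forwards.}
The second and third bullets are instances of the following principle, which I would isolate: if $f:(Z_1,d_1)\to(Z_2,d_2)$ is an isometric embedding of Polish spaces, then $f_*:\mathcal{P}(Z_1)\to\mathcal{P}(Z_2)$ restricts to an isometric embedding $W_q(Z_1,d_1)\to W_q(Z_2,d_2)$ for every $q$. The proof is a bijection-of-couplings argument: given a coupling $\Gamma$ of $\Gamma_1,\Gamma_2\in\mathcal{P}(Z_1)$, the measure $(f\times f)_*\Gamma$ couples $f_*\Gamma_1$ and $f_*\Gamma_2$ with the same cost because $d_2(f(z),f(z'))=d_1(z,z')$; conversely, since $f$ is a homeomorphism onto its (closed, hence Borel) image, any coupling of $f_*\Gamma_1,f_*\Gamma_2$ is supported on $f(Z_1)\times f(Z_1)$ and pulls back. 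The one point needing care is that $f(Z_1)$ is a Borel (indeed closed, since $Z_1$ complete and $f$ isometric) subset of $Z_2$, so measurability of the pull-back is not an issue. Applying this lemma with $f=\delta_N:(X^{(N)},d_{(p)})\hookrightarrow(\mathcal{P}_p(X),d_{W_p})$ from Step 1 immediately gives the second bullet: $(\delta_N)_*$ is an isometry $W_q(X^{(N)},d_{(p)})\to W_q(\mathcal{P}_p(X))$.

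\medskip

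\textbf{Step 3: The quotient projection.}
For the third bullet, $\pi:X^N\to X^{(N)}$ is not injective, so the general lemma does not apply directly; instead I would argue that $\pi$ restricted to an appropriate fundamental domain, or rather $\pi_*$ restricted to $S_N$-symmetric measures, is an isometry onto $\mathcal{P}(X^{(N)})$. The key facts are: (i) $\pi$ is $1$-Lipschitz from $(X^N,\tfrac1{N^{1/p}}d_p)$ to $(X^{(N)},d_{(p)})$ and is in fact a \emph{submetry} (it pushes balls onto balls), since $d_{(p)}$ is by construction the quotient metric; (ii) $\pi_*$ is a bijection between symmetric probability measures on $X^N$ and all probability measures on $X^{(N)}$, with inverse given by symmetrizing any measurable section. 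For the isometry of Wasserstein distances: the inequality $d_{W_q}(\pi_*\mu,\pi_*\nu)\le d_{W_q}(\mu,\nu)$ follows from Lipschitzness of $\pi$ and pushing forward couplings; for the reverse, given an optimal coupling $\Lambda$ of $\pi_*\mu,\pi_*\nu$ on $X^{(N)}\times X^{(N)}$, I would lift it using a measurable selection of optimal permutations --- for each pair of configurations $([x],[y])$ choose $\sigma$ realizing $d_{(p)}([x],[y])$ (measurable by a standard selection theorem, the minimum over the finite group $S_N$ being attained) --- and then symmetrize, producing a coupling of $\mu,\nu$ whose cost equals that of $\Lambda$. Combining with Step 2 (which identifies $W_q(X^{(N)},d_{(p)})$ isometrically inside $W_q(\mathcal{P}_p(X))$) completes the chain.

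\medskip

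\textbf{Main obstacle.}
The routine parts are the definitional inequalities in one direction and the coupling-transfer bookkeeping. The genuine content, and where I expect the only real subtlety, is establishing the reverse Wasserstein inequalities at the level of couplings: in Step 1 this is the Birkhoff--von Neumann extreme-point argument (clean, but one must be careful that the infimum over couplings is actually attained --- guaranteed by weak compactness of the coupling set and lower semicontinuity of $\int d^p$ --- and that repeated points cause no trouble), and in Step 3 it is the \emph{measurable selection} of optimal permutations needed to lift an optimal coupling from the quotient back upstairs. Both are standard but deserve explicit citation (Kuratowski--Ryll-Nardzewski selection, or the elementary fact that minimizing over the finite set $S_N$ a jointly continuous function admits a Borel selector). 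Everything else is formal manipulation of transport plans.
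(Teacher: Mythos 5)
Your proposal is correct and follows essentially the same route as the paper: the Birkhoff--von Neumann extreme-point argument for the first bullet, the general push-forward-of-an-isometric-embedding lemma for the second, and the quotient-by-a-finite-isometry-group statement for the third. The only difference is that for the third bullet the paper simply invokes the general result for compact groups acting by isometries (citing Lemma 5.36 of Lott--Villani), whereas you supply a self-contained proof via measurable selection of optimal permutations and diagonal symmetrization, which is a valid (and slightly more explicit) substitute.
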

\begin{proof}
The first statement is a well-known consequence of the Birkhoff-Von
Neumann theorem which gives that for any symmetric function $c(x,y)$
on $X\times X$ we have that if $\mu=\frac{1}{N}\sum_{i=1}^{N}\delta_{x_{i}}$
and $\nu=\frac{1}{N}\sum_{i=1}^{N}\delta_{y_{i}}$ for given $(x_{1},...,x_{N}),(y_{1},...,y_{N})\in X^{N},$
then 
\[
\inf_{\Gamma(\mu,\nu)}\int c(x,y)d\Gamma=\inf_{\Gamma_{N}(\mu,\nu)}\int c(x,y)d\Gamma
\]
where $\Gamma_{N}(\mu,\nu)\subset\Gamma(\mu,\nu)$ consists of couplings
of the form $\Gamma_{\sigma}:=\frac{1}{N}\sum\delta_{x_{i}}\otimes\delta_{y_{\sigma(i)}},$
for $\sigma\in S_{N},$ where $S_{N}$ is the symmetric group on $N$
letters. The second statement then follows from the following general
fact: if $f:(Y_{1},d_{1})\rightarrow(Y_{2},d_{2})$ is an isometry
between two metric spaces, then $f_{*}$ gives an isometry between
$W_{q}(Y_{1},d_{1})$ and $W_{q}(Y_{2},d_{2}).$ This follows immediately
from the definitions once one observes that one may assume that the
coupling $\gamma_{2}$ between $f_{*}\mu$ and $f_{*}\nu$ is of the
form $f_{*}\gamma_{1}$ for some coupling $\gamma_{1}$ between $\mu$
and $\nu.$ The point is that $\gamma$ can be taken to be concentrated
on $f(Y_{1})\times f(Y_{2})$ (since this set contains the product
of the supports of $\mu$ and $\nu)$ and hence one can take $\gamma_{1}:=(f^{-1}\otimes f^{-1})_{*}\gamma_{2}$
where $(f^{-1}\otimes f^{-1})(f(y),f(y')):=(y,y')$ is well-defined,
since $f$ induces a bijection between $Y_{1}$ and $f(Y_{1}).$ Finally,
the last statement follows immediately from the following general
claim applied to $Y=X^{N}$ with $d=\frac{1}{N^{1/p}}d_{X^{N},l^{p}}$
and $G=S_{N}.$ Let $G$ be a compact group acting by isometries on
a metric space $(Y,d)$ and consider the natural projection $\pi:Y\rightarrow Y/G.$
We denote by $d_{G}$ the induced quotient metric on $Y/G.$ The push-forward
$\pi_{*}$ gives a bijection between the space $\mathcal{P}(X)^{G}$
or all $G-$invariant probability measures on $X$ and $\mathcal{P}(X/G).$
The claim is that $\pi_{*}$ induces an isometry between the corresponding
Wasserstein spaces $\mathcal{P}_{q}(X)^{G}$ and $\mathcal{P}_{q}(X/G)$
i.e. $d_{W_{q}}(\mu,\nu)=d_{W_{q}}(\pi_{*}\mu,\pi_{*}\nu)$ if $\mu$
and $\nu$ are $G-$invariant (see \cite[Lemma 5.36]{l-v} Lemma 5.36). 
\end{proof}
Let us also recall the following classical result
\begin{lem}
\label{lem:sanov type}Let $\mu_{0}$ be a probability measure on
$X.$ Then $(\delta_{N})_{*}\mu_{0}^{\otimes N}\rightarrow\delta_{\mu_{0}}$
in $\mathcal{P}(\mathcal{P}(X))$ weakly as $N\rightarrow\infty$ 
\end{lem}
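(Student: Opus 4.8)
The plan is to read $\delta_{N}$ as a $\mathcal{P}(X)$-valued random variable on the probability space $(X^{N},\mu_{0}^{\otimes N})$, so that $(\delta_{N})_{*}\mu_{0}^{\otimes N}$ is exactly its law $\Gamma_{N}\in\mathcal{P}(\mathcal{P}(X))$, and then to prove $\Gamma_{N}\to\delta_{\mu_{0}}$ weakly by the standard tightness-plus-identification scheme. Since $X$ is Polish, $\mathcal{P}(X)$ equipped with the weak topology is again Polish, so Prokhorov's theorem is available both on $X$ and on $\mathcal{P}(X)$; I will use it on both levels. The empirical map $X^{N}\to\mathcal{P}(X)$ is continuous, hence Borel, so $\Gamma_{N}$ is genuinely a well-defined Borel probability measure.

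First I would establish tightness of the sequence $\{\Gamma_{N}\}$ in $\mathcal{P}(\mathcal{P}(X))$. Fix $\varepsilon>0$; by inner regularity of $\mu_{0}$ choose compact sets $K_{m}\subset X$ with $\mu_{0}(X\setminus K_{m})<\varepsilon\, 2^{-m}/m$. Since the $\mu_{0}^{\otimes N}$-mean of $\delta_{N}(X\setminus K_{m})$ equals $\mu_{0}(X\setminus K_{m})$, Markov's inequality bounds $\mu_{0}^{\otimes N}\{\delta_{N}(X\setminus K_{m})>1/m\}$ by $\varepsilon\, 2^{-m}$. Summing over $m$ via a union bound, the set $\mathcal{K}_{\varepsilon}:=\bigcap_{m}\{\mu\in\mathcal{P}(X):\mu(X\setminus K_{m})\le 1/m\}$, which is closed and uniformly tight and hence compact in $\mathcal{P}(X)$ by Prokhorov, satisfies $\Gamma_{N}(\mathcal{K}_{\varepsilon})>1-\varepsilon$ for every $N$. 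Thus $\{\Gamma_{N}\}$ is tight, hence relatively compact in $\mathcal{P}(\mathcal{P}(X))$.

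Next I would identify every subsequential weak limit $\Gamma$. For $f\in C_{b}(X)$ the evaluation map $e_{f}(\mu):=\langle f,\mu\rangle$ is bounded and continuous on $\mathcal{P}(X)$, so along the subsequence $(e_{f})_{*}\Gamma_{N}\to(e_{f})_{*}\Gamma$ weakly; but $(e_{f})_{*}\Gamma_{N}$ is the law of $\frac{1}{N}\sum_{i=1}^{N}f(x_{i})$ for i.i.d. $x_{i}\sim\mu_{0}$, which by the weak law of large numbers converges to the Dirac mass at $\langle f,\mu_{0}\rangle$. Hence $(e_{f})_{*}\Gamma$ is that Dirac mass, i.e. $\Gamma$ is carried by $\{\mu:\langle f,\mu\rangle=\langle f,\mu_{0}\rangle\}$ for every $f\in C_{b}(X)$. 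Applying this to a fixed countable convergence-determining family $\{f_{k}\}\subset C_{b}(X)$ (such a family exists because $X$ is a separable metric space) yields a $\Gamma$-full set on which $\langle f_{k},\mu\rangle=\langle f_{k},\mu_{0}\rangle$ for all $k$, forcing $\mu=\mu_{0}$ there, so $\Gamma=\delta_{\mu_{0}}$. Combining with relative compactness from the previous step gives $\Gamma_{N}\to\delta_{\mu_{0}}$, which is the assertion.

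The only non-formal ingredients are the tightness bound of the second step — which rests on inner regularity of $\mu_{0}$ together with Prokhorov's characterisation of compactness in $\mathcal{P}(X)$ — and the existence of a \emph{countable} convergence-determining family in $C_{b}(X)$ used to pin down the limit; everything else is just the weak law of large numbers and the continuous mapping theorem, so I do not expect a genuine obstacle here. As an alternative one may simply quote Varadarajan's theorem, which asserts the stronger almost-sure weak convergence $\delta_{N}\to\mu_{0}$ for i.i.d. samples, and note that almost-sure convergence of a $\mathcal{P}(X)$-valued sequence implies convergence of its laws.
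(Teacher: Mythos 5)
Your proof is correct and complete. It is worth noting, though, that the paper does not actually prove Lemma \ref{lem:sanov type} directly: it treats the statement as classical, pointing to Sanov's large deviation theorem (which gives the much stronger exponential concentration of $(\delta_{N})_{*}\mu_{0}^{\otimes N}$ around $\delta_{\mu_{0}}$ at speed $N$ with rate $H_{\mu_{0}}$), and remarks that a non-standard proof can be extracted from the variational argument used for Theorem \ref{thm:static intro} applied to $E=0$, i.e.\ from Gibbs' variational principle together with the lower semicontinuity of the mean entropy (Proposition \ref{prop:linf for mean entropy}). Your route is the elementary one: tightness of the laws $\Gamma_{N}$ via inner regularity of $\mu_{0}$, Markov's inequality and Prokhorov's theorem, followed by identification of every subsequential limit through the weak law of large numbers applied to the bounded continuous test functionals $e_{f}$ and a countable measure-determining family. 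All steps check out: the sets $\{\mu:\mu(K_{m})\geq1-1/m\}$ are indeed closed by the portmanteau theorem, so $\mathcal{K}_{\varepsilon}$ is compact; and convergence in probability of $\frac{1}{N}\sum f(x_{i})$ to the constant $\langle f,\mu_{0}\rangle$ does force $(e_{f})_{*}\Gamma=\delta_{\langle f,\mu_{0}\rangle}$. What your argument buys is self-containedness at the cost of giving only the qualitative weak convergence; what the paper's references buy is quantitative information (large deviations) that is not needed for the lemma but contextualizes it. Your closing alternative via Varadarajan's theorem is also valid, since almost sure convergence of $\delta_{N}$ on the product space $(X^{\N},\mu_{0}^{\otimes\N})$ implies convergence of the laws.
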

In fact, according to Sanov's classical theorem the previous convergence
results even holds in the sense of large deviations at speed $N$
with rate functional given by the relative entropy functional $H_{\mu_{0}}(\cdot)$
\cite[Theorem 6.2.10]{d-z}. We note that that a (somewhat non-standard)
proof of this classical result can be obtained using the argument
in the proof of Theorem \ref{thm:static intro} applied to $E=0.$

\subsubsection{\label{sub:The-present-setting}The present setting }

We will apply the previous setup to $X=\R^{n}$ endowed with the Euclidean
metric $d.$ Moreover, we will mainly use the case $p=2.$ Then the
corresponding metric $d_{2}$ on $X^{N}$ is the Euclidean metric
on $X^{N}=\R^{nN}.$ Identifying a symmetric (i.e. $S_{N}-$invariant)
probability measure $\mu_{N}$ on $X^{N}$ with a probability measures
on the quotient $X^{(N)}$ (as in Lemma \ref{lem:isometries}) the
second and third point in Lemma \ref{lem:isometries} may (with $q=2)$
be summarized by the following chain of equalities that will be used
repeatedly below:

\begin{equation}
\frac{1}{N}d_{2}(\mu_{N},\mu'_{N})^{2}=d_{(2)}(\mu_{N},\mu'_{N})^{2}=d_{2}(\Gamma_{N},\Gamma'_{N})^{2},\label{eq:distance between symmetric prob measures}
\end{equation}
where $\Gamma_{N}$ and $\Gamma'_{N}$ denote the push-forwards under
$\delta_{N}$ of $\mu_{N}$ and $\mu'_{N},$ respectively.

\subsection{\label{sub:The-Main-Assumptions}The Main Assumptions on the interaction
energy $E^{(N)}$ }

Set $X=\R^{n}$ and denote by $d$ the Euclidean distance function
on $X.$ Throughout the paper $E^{(N)}$ will denote a symmetric,
i.e. $S_{N}-$invariant, sequence of functions on $X^{N}$ and we
will make the following Main Assumptions: 
\begin{enumerate}
\item The functional $E^{(N)}$ is uniformly Lipschitz continuous $(X,d)$
in each variable, or equivalently, under the isometric embedding of
$X^{(N)}$ in $\mathcal{P}(X)$ (by the empirical measure $\delta_{N})$
the sequence $E^{(N)}/N$ extends to define a sequence of functionals
on $\mathcal{P}_{2}(X)$ which are uniformly Lipschitz continuous.
\item The sequence $E^{(N)}/N$ of functions on $\mathcal{P}_{2}(X)$ has
a unique point-wise limit $E(\mu).$
\item The sequence $E^{(N)}$ is $\lambda-$convex on $(X,d)$, uniformly
in $N$.
\end{enumerate}
Note that since $E(\mu)$ above is assumed Lipschitz continuous on
$\mathcal{P}_{2}(X)$ it is uniquely determined by its restriction
to the subspace $\mathcal{P}(X)_{c}$ consisting of all $\mu$ with
compact support.
\begin{lem}
\label{lem:equiv energy as}Assume given a sequence $E^{(N)}$ satisfying
the uniform Lipschitz assumption (1). 
\begin{itemize}
\item Then the second point is (2) is equivalent to point-wise convergence
of $E^{(N)}/N$ towards $E(\mu)$ for any $\mu$ in $\mathcal{P}(X)$
with compact support 
\item The second point (2) implies that\textup{ 
\begin{equation}
\lim_{N\rightarrow\infty}\frac{1}{N}\int E^{(N)}\mu^{\otimes N}=E(\mu)\label{eq:Energy of mu as limit of mean energy}
\end{equation}
for any $\mu\in\mathcal{P}_{2}(X).$}
\item Conversely, if \ref{eq:Energy of mu as limit of mean energy} holds
for any $\mu\in\mathcal{P}(X)$ with compact support then $E^{(N)}/N$
converges towards $E(\mu)$ for any $\mu$ in $\mathcal{P}_{2}(X).$
\end{itemize}
\end{lem}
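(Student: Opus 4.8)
\textbf{Proof plan for Lemma \ref{lem:equiv energy as}.}

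The plan is to exploit the uniform Lipschitz assumption (1) to pass freely between pointwise convergence on the dense subset $\mathcal{P}(X)_c$ of compactly supported measures and convergence on all of $\mathcal{P}_2(X)$, and to relate the mean energies $\frac{1}{N}\int E^{(N)}\mu^{\otimes N}$ to the values $E^{(N)}(\delta_N)/N$ via a concentration (law of large numbers) argument. For the first bullet, one direction is trivial; for the converse, given $\mu\in\mathcal{P}_2(X)$ pick compactly supported $\mu_R\to\mu$ in $\mathcal{P}_2(X)$ (e.g. conditioning on a large ball), and use the uniform Lipschitz bound $|E^{(N)}(\mu')/N - E^{(N)}(\mu)/N|\le C\, d_2(\mu',\mu)$ together with the same bound for the limit functional $E$ to sandwich $E^{(N)}(\mu)/N$ between quantities converging to $E(\mu)$.

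For the second bullet, the key step is that if $x_1,\dots,x_N$ are i.i.d.\ with law $\mu$, then the empirical measure $\delta_N$ concentrates around $\mu$ in the $L^2$-Wasserstein distance. Concretely, I would first reduce to $\mu$ of compact support (say supported in a ball $B_R$) using the Lipschitz bound and an approximation argument as above, controlling the second-moment tails uniformly in $N$; here one uses that $\frac1N\sum d(x_i,0)^2$ has expectation $\int d(\cdot,0)^2\,\mu$. On the compactly supported piece, $X^{(N)}$ embeds isometrically into $\mathcal{P}_2(X)$ by Lemma \ref{lem:isometries}, so $|E^{(N)}(\delta_N)/N - E(\delta_N)|\to 0$ uniformly on the (compact) set of empirical measures supported in $B_R$ by assumption (2) together with a Dini-type or $\varepsilon$-net argument exploiting uniform Lipschitz continuity of the whole family $\{E^{(N)}/N\}\cup\{E\}$. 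Then $\mathbb{E}\big[E^{(N)}(\delta_N)\big]/N \to \mathbb{E}[E(\delta_N)]$, and since $\delta_N\to\mu$ weakly a.s.\ (Lemma \ref{lem:sanov type}) with uniformly bounded moments, Proposition \ref{prop:wasserstein conv} gives $\delta_N\to\mu$ in $\mathcal{P}_2(X)$ a.s., whence $E(\delta_N)\to E(\mu)$ a.s.; dominated convergence (the integrands are uniformly bounded on $B_R$) yields $\mathbb{E}[E(\delta_N)]\to E(\mu)$. Observing that $\mathbb{E}\big[E^{(N)}(\delta_N)\big] = \frac1N\int E^{(N)}\mu^{\otimes N}$ (this is exactly $\frac1N\mathbb{E}$ of $E^{(N)}$ evaluated at the i.i.d.\ sample, and $\delta_N$ with the empirical-measure identification is the same point of $X^{(N)}$) completes the proof of \eqref{eq:Energy of mu as limit of mean energy}.

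For the third bullet (the converse), I would run essentially the same concentration estimate in reverse. Fix $\mu\in\mathcal{P}(X)_c$, supported in $B_R$. By the uniform Lipschitz property, $\big|\tfrac1N E^{(N)}(\delta_N) - \tfrac1N E^{(N)}(\mu)\big|\le C\, d_2(\delta_N,\mu)$, and taking expectations, $\big|\tfrac1N\int E^{(N)}\mu^{\otimes N} - \tfrac1N E^{(N)}(\mu)\big|\le C\,\mathbb{E}[d_2(\delta_N,\mu)]$, which tends to $0$ as $N\to\infty$ by the quantitative concentration of empirical measures in $\mathcal{P}_2$ on a bounded set (or, qualitatively, by a.s.\ convergence plus uniform integrability). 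Hence $\tfrac1N E^{(N)}(\mu)$ has the same limit as $\tfrac1N\int E^{(N)}\mu^{\otimes N}$, namely $E(\mu)$, for every compactly supported $\mu$; the first bullet then upgrades this to all of $\mathcal{P}_2(X)$, which is precisely point (2).

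The main obstacle is the concentration step: making rigorous that $\mathbb{E}[d_2(\delta_N,\mu)]\to 0$ (equivalently $\delta_N\to\mu$ in $\mathcal{P}_2$ a.s.) \emph{with uniform control of the second-moment tails} so that it interacts correctly with the Lipschitz bound. For $\mu$ with compact support this is standard (weak LLN for empirical measures plus Proposition \ref{prop:wasserstein conv}), so the real work is the truncation argument that transfers statements from $\mathcal{P}(X)_c$ to general $\mu\in\mathcal{P}_2(X)$: one must check that $E(\mu_R)\to E(\mu)$ and that the Lipschitz constant $C$ is genuinely $N$-independent on the relevant balls, which is exactly what assumption (1) provides.
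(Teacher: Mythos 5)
Your proposal is correct, and for the first bullet it is identical to the paper's argument (truncate $\mu$ to $\mu_R$, use the uniform Lipschitz bound to sandwich $E^{(N)}(\mu)/N$, let $N\rightarrow\infty$ and then $R\rightarrow\infty$). For the second bullet you take a genuinely different, more self-contained route: the paper pushes everything forward to $\mathcal{P}(\mathcal{P}(X))$, invokes Lemma \ref{lem:sanov type} to identify the limit of $(\delta_{N})_{*}\mu^{\otimes N}$ as $\delta_{\mu}$, and then applies Proposition \ref{prop:conv of mean energy} (whose proof rests on the Kantorovich--Rubinstein dual representation of $d_{1}$ on $W_{1}(\mathcal{P}_{2}(X))$ plus dominated convergence), whereas you truncate to a ball $B_R$, use equicontinuity of the uniformly Lipschitz family $\{E^{(N)}/N\}$ on the compact set $\mathcal{P}(B_R)$ to upgrade pointwise to uniform convergence (Arzel\`a--Ascoli/Dini), and then conclude by the law of large numbers for $\delta_N$ together with bounded convergence. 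Both are sound; the paper's version avoids the truncation step and reuses machinery (Proposition \ref{prop:conv of mean energy}) that is needed anyway in the main induction, while yours is more elementary and, as a bonus, yields the quantitative reverse estimate $\left|\frac{1}{N}\int E^{(N)}\mu^{\otimes N}-\frac{1}{N}E^{(N)}(\mu)\right|\leq C\,\mathbb{E}[d_{2}(\delta_{N},\mu)]$, which gives a clean proof of the third bullet -- a bullet the paper's written proof does not explicitly address. Two minor points to tidy up in a final write-up: the identity at the end of your second paragraph should read $\mathbb{E}[E^{(N)}(\delta_{N})]/N=\frac{1}{N}\int E^{(N)}\mu^{\otimes N}$ (the $1/N$ is missing on the left), and when comparing $\int E^{(N)}\mu^{\otimes N}$ with $\int E^{(N)}\mu_{R}^{\otimes N}$ you should say explicitly that you integrate the Lipschitz bound against the product coupling $\gamma^{\otimes N}$ of an optimal coupling $\gamma$ of $\mu$ and $\mu_{R}$ and apply Cauchy--Schwarz, which produces exactly the $N$-independent bound $C\,d_{2}(\mu,\mu_{R})$.
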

\begin{proof}
Given $\mu\in\mathcal{P}_{2}(X)$ we define the truncation $\mu_{R}:=\frac{1_{\{d(x,x_{0})\leq R}\mu}{\int1_{\{d(x,x_{0})\leq R}\mu}.$
By the Lip-assumption $\left|(E^{(N)}/N)(\mu)-(E^{(N)}/N)(\mu_{R})\right|\leq Cd_{2}(\mu,\mu_{R}).$
In particular, $a_{N}:=(E^{(N)}/N)(\mu)$ is a uniformly bounded sequence
in $\R.$ Next, letting $N\rightarrow\infty$ gives $\left|a-E(\mu_{R})\right|\leq d_{2}(\mu,\mu_{R})$
for any limit point $a\in\R$ of the sequence $a_{N}.$ Finally letting
$R\rightarrow\infty$ and using the Lip continuity forces $a=E(\mu),$
as desired and hence $(E^{(N)}/N)(\mu)$ converges towards $E(\mu),$
as desired. To prove formula \ref{eq:Energy of mu as limit of mean energy}
we first remark that it follows from the general convergence in Proposition
\ref{prop:conv of mean energy} below that 
\[
\lim_{N\rightarrow\infty}\frac{1}{N}\int E^{(N)}\mu^{\otimes N}=\int_{\mathcal{P}(X)}E(\nu)\Gamma(\nu),
\]
 where $\Gamma$ is a weak limit point of $(\delta_{N})_{*}\mu^{\otimes N}.$
But by Lemma \ref{lem:sanov type}) the limit point is unique and
given by $\Gamma=\delta_{\mu}.$ Hence, the rhs above is equal to
$E(\mu),$ as desired.
\end{proof}

\subsection{\label{sub:The-forward-Kolmogorov}The forward Kolmogorov equation
for the SDEs and the mean free energy $F_{N}$}

Fix a positive integer $N$ and $\beta>0$ (which may depend on $N$
when we will later on let $N\rightarrow\infty).$ Let $(X,g)$ be
a Riemannian manifold and denote by $dV$ the volume form defined
by $g.$ In our case $(X,g)$ will be the Euclidean space $\R^{n}.$
As is well-known, under suitable regularity assumptions, the SDEs
\ref{eq:sde general intro} on $X^{N}$ define, for any fixed $T,$
a probability measure $\eta_{T}$ on the space of all continuous curves
(``sample paths'') in $X^{N},$ i.e. the space of continuous maps
$[0,T]\rightarrow X^{N}$ (see for example \cite{sn} and reference
therein). For $t$ fixed we can thus view $x^{(N)}(t)$ as a $X^{N}-$valued
random variable on the latter probability space. Then its law 
\[
\mu_{t}^{(N)}:=(x^{(N)}(t))_{*}\eta_{t}
\]
 gives a curve of probability measures on $X^{(N)}$ of the form $\mu_{t}^{(N)}=\rho_{t}^{(N)}dV,$
where the density $\rho_{t}^{(N)}$ satisfies the corresponding forward
Kolmogorov equation: 
\begin{equation}
\frac{\partial\rho_{t}^{(N)}}{\partial t}=\frac{1}{\beta}\Delta\rho_{t}^{(N)}+\nabla\cdot(\rho_{t}^{(N)}\nabla E^{(N)}),\label{eq:forward kolm eq}
\end{equation}
which thus coincides with the linear Fokker-Planck equation \ref{eq:linear drift diff}
on $X^{N}$ with potential $V:=E^{(N)}.$ In particular, the law of
the empirical measures $\delta_{N}(t)$ for the SDEs \ref{eq:sde general intro}
can be written as the following probability measure on $\mathcal{P}(X^{N}):$
\[
\Gamma_{N}(t):=(\delta_{N})_{*}\mu_{t}^{(N)},
\]
 where $\delta_{N}$ is the empirical measure defined by formula \ref{eq:def of empricical measure}.

Anyway, for our purposes we may as well forget about the SDEs \ref{eq:sde general intro}
and take the forward Kolmogorov equation \ref{eq:forward kolm eq}
on $X^{N}$ as our the starting point. We will exploit the well-known
fact, going back to \cite{j-k-o} (see Prop \ref{Theorem:f-p as gradient flow}
below) that the latter evolution equation can be interpreted as the
gradient-flow on the Wasserstein space $W_{2}(Y),$ of the functional
\[
F_{\beta}^{(N)}(\mu_{N})=\int_{X^{N}}E^{(N)}\mu_{N}+\frac{1}{\beta}H(\mu_{N}),
\]
 where $H(\cdot)$ is the entropy relative to $\mu_{0}:=dV^{\otimes N}$
(formula \ref{eq:def of rel entropy notation}); occasionally we will
omit the subscript $\beta$ in the notation $F_{\beta}^{(N)}.$

Following standard terminology in statistical mechanics we will call
the scaled functional $F_{N}:=F^{(N)}/N$ the \emph{mean free energy},
which is thus a sum of the \emph{mean energy} $E_{N}(\mu_{N})$ and
the \emph{mean entropy $H_{N}(\mu_{N}):$ 
\[
F_{N}=E_{N}+\frac{1}{\beta}H_{N},
\]
 i.e.} 
\begin{equation}
F_{N}(\mu_{N}):=\frac{1}{N}F^{(N)}(\mu_{N})=\frac{1}{N}\int_{X^{N}}E^{(N)}\mu_{N}+\frac{1}{\beta N}H(\mu_{N}),\label{eq:def of mean free energy}
\end{equation}
Note that it follows immediately from the definition that the mean
entropy is additive: for any $\mu\in\mathcal{P}(X)$

\[
H_{N}(\mu^{\otimes N})=H(\mu)
\]
In case $dV$ is a probability measure it follows immediately from
Jensen's inequality that $H(\mu)\geq0.$ In our Euclidean setting
this is not the case but using that $\int e^{-\epsilon|x|^{2}}dx<\infty$
for any given $\epsilon>0$ one then gets 
\begin{equation}
H(\mu)\geq-\epsilon\int|x|^{2}\mu-C_{\epsilon}\label{eq:lower bound on entropy in terms of first moment}
\end{equation}
As a consequence we have the following 
\begin{lem}
\label{lem:coerc of E gives coerc of F}If the mean energy satisfies
the uniform coercivity property 
\begin{equation}
\frac{1}{N}\int_{X^{N}}E^{(N)}(\mu_{N})\geq-\frac{1}{2\tau_{*}}d_{2}(\mu_{N},\Gamma_{*})^{2}-C\label{eq:coercivity estimate for EN}
\end{equation}
 for some fixed $\tau_{*}>0$ and $\Gamma_{*}\in W_{2}(\mathcal{P}(X))$
and positive constant $C,$ then so does $F^{(N)}/N.$ 
\end{lem}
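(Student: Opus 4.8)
The plan is to bound the mean-entropy term $\tfrac1\beta H_N(\mu_N)=\tfrac1{\beta N}H(\mu_N)$ from below by a negative constant times a fixed Wasserstein-squared distance of exactly the type occurring on the right-hand side of \eqref{eq:coercivity estimate for EN}, and then simply add this to the assumed coercivity of the mean energy $E_N$. We may assume $d_2(\mu_N,\Gamma_*)<\infty$, since otherwise both the hypothesis and the conclusion are vacuous; in particular $(\delta_N)_*\mu_N\in W_2(\mathcal P(X))$ and $\mu_N$ has finite first and second moments.

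First I would record a \emph{uniform in $N$} lower bound for the mean entropy. Arguing as in the derivation of \eqref{eq:lower bound on entropy in terms of first moment}, but on $\R^{nN}$ with the product reference probability measure $\bigotimes_{i=1}^{N}\big(Z_\epsilon^{-1}e^{-\epsilon|x_i|}\,dx_i\big)$, where $Z_\epsilon:=\int_{\R^n}e^{-\epsilon|x|}\,dx<\infty$, Jensen's inequality (Gibbs' variational principle, Section \ref{sub:Notation}) gives, for every $\epsilon>0$,
\[
\frac1N H(\mu_N)\ \ge\ -\,\epsilon\,\frac1N\int_{\R^{nN}}\Big(\textstyle\sum_{i=1}^{N}|x_i|\Big)\,\mu_N\ -\ \log Z_\epsilon ,
\]
the whole point being that $\log Z_\epsilon$ is independent of $N$; this is also why one splits $|x|$ on $\R^{nN}$ into $\sum_i|x_i|$ rather than using a single radial weight, whose normalization would grow with $N$.

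Next I would convert the first-moment integral into a Wasserstein quantity. By Cauchy--Schwarz in the index $i$ followed by Jensen for $\mu_N$ (concavity of $\sqrt{\,\cdot\,}$),
\[
\frac1N\int_{\R^{nN}}\Big(\textstyle\sum_i|x_i|\Big)\mu_N\ \le\ \Big(\frac1N\int_{\R^{nN}}\textstyle\sum_i|x_i|^2\,\mu_N\Big)^{1/2}\ =\ d_2(\mu_N,\Gamma_0),
\]
where $\Gamma_0:=\delta_{\delta_0}\in W_2(\mathcal P(X))$ is the Dirac mass at the Dirac measure $\delta_0$; indeed, since the only coupling of the empirical measure $\delta_N(x)$ with $\delta_0$ is the product one, a direct computation (equivalently \eqref{eq:distance between symmetric prob measures}) gives $\tfrac1N\int_{\R^{nN}}\sum_i|x_i|^2\,\mu_N=d_2\big((\delta_N)_*\mu_N,\delta_{\delta_0}\big)^2$. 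Combining with Young's inequality $\epsilon\,t\le t^2+\epsilon^2/4$ at $\epsilon=1$ then yields $\tfrac1\beta H_N(\mu_N)\ge-\tfrac1\beta\,d_2(\mu_N,\Gamma_0)^2-C_\beta$ with $C_\beta$ independent of $N$.

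Finally, adding the hypothesis $E_N(\mu_N)\ge-\tfrac1{2\tau_*}d_2(\mu_N,\Gamma_*)^2-C$ and then using the triangle inequality together with $(a+b)^2\le2a^2+2b^2$ to estimate $d_2(\mu_N,\Gamma_0)^2\le2\,d_2(\mu_N,\Gamma_*)^2+2\,d_2(\Gamma_*,\Gamma_0)^2$ — the last term being a fixed finite number since $\Gamma_*,\Gamma_0\in W_2(\mathcal P(X))$ — gives
\[
F_N(\mu_N)\ \ge\ -\Big(\frac1{2\tau_*}+\frac2\beta\Big)d_2(\mu_N,\Gamma_*)^2\ -\ \widetilde C ,
\]
which is precisely \eqref{eq:coercivity estimate for EN} for $F^{(N)}/N$, with the same reference point $\Gamma_*$, the smaller parameter $\widetilde\tau_*:=\big(\tfrac1{\tau_*}+\tfrac4\beta\big)^{-1}>0$, and a larger constant (which we enlarge further if necessary to make it positive). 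There is no real obstacle here; the single point that needs a little care is keeping the entropy constant uniform in $N$, which is handled by the product-reference-measure computation above.
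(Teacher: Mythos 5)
Your argument is correct and is exactly the one the paper intends: the lemma is stated ``as a consequence'' of the entropy lower bound \eqref{eq:lower bound on entropy in terms of first moment}, and you simply carry that bound out on $X^{N}$ with the product reference measure (making the constant $\log Z_{\epsilon}$ manifestly $N$-independent), convert the first moment into $d_{2}(\mu_{N},\delta_{\delta_{0}})$ via Cauchy--Schwarz and the isometry \eqref{eq:distance between symmetric prob measures}, and absorb everything into the coercivity of $E_{N}$ by Young's and the triangle inequality. All steps check out, so this matches the paper's (implicit) proof in both substance and detail.
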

For example, this is trivially the case under a uniform Lipschitz
assumption on $E^{(N)}$ (as in the Main Assumptions). 
\begin{rem}
The linear forward Kolmogorov equation \ref{eq:forward kolm eq} can
also be viewed as the gradient flow of the\emph{ mean} free energy
$\frac{1}{N}F^{(N)}$ if one instead uses the scaled metric $g_{N}:=\frac{1}{N}g^{\otimes N}$
on $X^{N}.$ Moreover, in our case $E^{(N)}$ will be symmetric, i.e.
$S_{N}-$invariant and hence the flow defined wrt $(X^{N},g_{N})$
descends to the flow defined with respect to $X^{(N)}:=X^{N}/S_{N}$
equipped with the distance function $d_{X^{(N)}}$ defined in Section
\ref{sub:Wasserstein-spaces-and}. Using the isometric embedding defined
by the empirical measure (Lemma \ref{lem:isometries})  we can thus
view the sequence of flows on the sequence of spaces $\mathcal{P}(X^{N})$
as a sequence of flows on the same (infinite dimensional) space $W_{2}(\mathcal{P}(X))$
and this is the geometric motivation for the proof of Theorem \ref{thm:dynamic intro}. 
\end{rem}

\subsection{\label{sub:Gradient-flows-on}Gradient flows on the $L^{2}-$Wasserstein
space and variational discretizations}

In this section we will recall the fundamental results from\cite{a-g-s}
that we will rely on. Let $G$ be a lower semi-continuous function
on a complete metric space $(M,d).$ In this generality there are,
as explained in \cite{a-g-s}, various notions of weak gradient flows
$u_{t}$ for $G$ (or ``steepest descents'') emanating from an initial
point $u_{0}$ in $M,$ symbolically written as 
\begin{equation}
\frac{du_{t}}{dt}=-\nabla G(u_{t}),\,\,\,\,\,\lim_{t\rightarrow0}u(t)=u_{0}\label{eq:gradient flow general}
\end{equation}
The strongest form of weak gradient flows on metric spaces discussed
in \cite{a-g-s} concern $\lambda-$convex functionals $G$ and are
defined by the property that $u_{t}$ satisfies the following \emph{Evolution
Variational Inequalities (EVI) }

\begin{equation}
\frac{1}{2}\frac{d}{dt}d^{2}(u_{t},v)+G(u(t))+\frac{\lambda}{2}d^{2}(\mu_{t},\nu)^{2}\leq G(v)\,\,\,\,\mbox{a.e.\,\,\ensuremath{t>0,\,\,\,\forall v\in M:\,G(v)<\infty}}\label{eq:evi}
\end{equation}
together with the initial condition $\lim_{t\rightarrow0}u(t)=u_{0}$
in $(M,d).$ Then $u_{t}$ is uniquely determined by $u_{0},$ as
shown in \cite[Cor 4.3.3]{a-g-s} and we shall say that $u_{t}$ is
the \emph{EVI-gradient flow} of $G$ emanating from $u_{0}.$ We recall
that $\lambda-$ convexity on a metric space essentially means that
the distributional second derivatives are bounded from below by $\lambda$
along any geodesic segment in $M$ (compare below). When $M$ has
Non-Positive Curvature, NPC (in the sense of Alexandrov) the existence
of a solution $u_{t}$ satisfying the EVI was shown by Mayer \cite{ma}
for any lower-semicontinuos $\lambda-$convex functional, by mimicking
the Crandall-Liggett technique in the Hilbert space setting.

However, in our case $(M,d)$ will be the $L^{2}-$Wasserstein space
$\mathcal{P}_{2}(\R^{d})$ for the space of all probability measures
$\mu$ on $\R^{d}$ which does not have non-positive curvature (when
$d>1).$ Still, as shown in \cite{a-g-s}, the analog of Meyer's result
does hold under the stronger assumption that $G$ be $\lambda-$convex
along any\emph{ generalized} geodesic $\mu_{s}$ in $\mathcal{P}_{2}(\R^{d}).$
For our purposes it will be enough to consider lsc $\lambda-$convex
functionals $G$ with the property that $\mathcal{P}_{2,ac}(\R^{d})$
is weakly dense in $\{G<\infty\}.$ Then the $\lambda-$convexity
of $G$ means (compare \cite[Proposition 9.210]{a-g-s}) that for
any generalized geodesic $\mu_{s}=\rho_{s}dx$ in $\mathcal{P}_{2,ac}(\R^{d})$
the function $G(\rho_{s})$ is continuous on $[0,1]$ and the distributional
second derivatives on $]0,1[$ satisfy 
\[
\frac{d^{2}G(\rho_{s})}{d^{2}s}\geq\lambda,
\]
We recall that a \emph{generalized geodesic} $\mu_{s}$ connecting
$\mu_{0}$ and $\mu_{1}$ in $\mathcal{P}_{2,ac}(\R^{d})$ is determined
by specifying a ``base measure'' $\nu\in\mathcal{P}_{2,ac}(\R^{d})).$
Then $\mu_{s}$ is defined as the following family of push-forwards:
\[
\mu_{s}=\left((1-s)T_{0}+sT_{1}\right)_{*}\nu
\]
 where $T_{i}$ is the optimal transport map (defined with respect
to the cost function $|x-y|^{2}/2)$ pushing forward $\nu$ to $\mu_{i}$
(compare Remark \ref{rem:transport}). 
\begin{rem}
The bona fide Wasserstein geodesics in $\mathcal{P}_{2,ac}(\R^{d})$
are obtained by taking $\nu=\mu_{0}$ (the study of convexity along
such geodesics was introduced by McCann \cite{mcCa}, who called it
displacement convexity). But as shown in \cite{a-g-s} the point of
working with general base measures $\nu$ is that they can be adapted
to the discrete variational scheme for constructing EVI-gradient flows
by taking $\nu=\mu_{t_{j}}$ at the $j$th time step (compare Section
\ref{sub:The-variational-discretization}).
\end{rem}
We will be relying on the following version of Theorem 4.0.4 and Theorem
11.2.1 in \cite{a-g-s}: 
\begin{thm}
\label{thm:existence of evi}Suppose that $G$ is a lsc real-valued
functional on $\mathcal{P}_{2}(\R^{d})$ which is $\lambda-$convex
along generalized geodesics and satisfies the following coercivity
property: there exist constants $\tau_{*},C>0$ and $\mu_{*}\in\mathcal{P}_{2}(\R^{d})$
such that 
\begin{equation}
G(\cdot)\geq-\frac{1}{\tau_{*}}d_{2}(\cdot,\mu_{*})^{2}-C\label{eq:coercivity type condition}
\end{equation}
Then there is a unique solution $\mu_{t}$ to the EVI-gradient flow
of $G,$ emanating from any given $\mu_{0}\in\overline{\left\{ G<\infty\right\} }.$
The flow has the following regularizing effect: $\mu_{t}\in\{\left|\partial G\right|<\infty\}\subset\{G<\infty\}.$
Moreover, $G(\mu_{t})$ and $e^{\lambda t}\left|\partial G\right|^{2}(\mu_{t})$
are decreasing, where $\left|\partial G\right|$ denotes the metric
slope of $G:$ 
\[
\left|\partial G\right|(\mu):=\limsup_{\nu\rightarrow\mu}\frac{(G(\nu)-G(\mu))^{+}}{d(\mu,\nu)}
\]
\end{thm}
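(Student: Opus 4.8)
The plan is to recognize the statement as the specialization to $(M,d)=(\mathcal{P}_2(\R^d),d_2)$ of the metric-space gradient-flow theory of \cite{a-g-s} (Theorems 4.0.4 and 11.2.1 there), and to carry out the argument via De Giorgi's \emph{minimizing movement} scheme (the variational Euler / JKO discretization of \cite{j-k-o}). First I would fix a step size $\tau>0$ below an explicit fraction of $\tau_*$, set $\mu^0_\tau:=\mu_0$, and define recursively $\mu^{n+1}_\tau$ as a minimizer over $\mathcal{P}_2(\R^d)$ of $\Phi^n_\tau(\nu):=G(\nu)+\frac{1}{2\tau}d_2(\nu,\mu^n_\tau)^2$. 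The coercivity hypothesis \ref{eq:coercivity type condition}, combined with the triangle inequality, shows that for such $\tau$ the functional $\Phi^n_\tau$ is bounded below and has sublevels bounded in $\mathcal{P}_2(\R^d)$; since bounded subsets of $\mathcal{P}_2(\R^d)$ are relatively compact in the weak topology and both $G$ and $d_2(\cdot,\mu^n_\tau)$ are weakly lower semicontinuous (the uniform second-moment bound along a minimizing sequence being what upgrades weak-$*$ lower semicontinuity of the transport cost), a minimizer exists. I would then pass to the piecewise-constant interpolant $\bar\mu_\tau(t):=\mu^n_\tau$ for $t\in((n-1)\tau,n\tau]$.

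Next I would establish the classical a priori estimates. Testing minimality of $\mu^{n+1}_\tau$ against the competitor $\mu^n_\tau$ gives $G(\mu^{n+1}_\tau)+\frac{1}{2\tau}d_2(\mu^{n+1}_\tau,\mu^n_\tau)^2\le G(\mu^n_\tau)$, which telescopes to $\frac{1}{2\tau}\sum_{k<n}d_2(\mu^{k+1}_\tau,\mu^k_\tau)^2+G(\mu^n_\tau)\le G(\mu_0)$. Feeding the coercivity bound on $G(\mu^n_\tau)$ into this, after controlling $d_2(\mu^n_\tau,\mu_*)$ by the accumulated squared increments via a discrete Gr\"onwall / Cauchy--Schwarz argument, yields --- on each time interval of length below an explicit fraction of $\tau_*$ --- a uniform (in $\tau$) bound on the accumulated squared increments and on the second moments; iterating over such intervals covers $(0,\infty)$. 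These bounds give a $1/2$-H\"older-type estimate for $\bar\mu_\tau$, and a refined Ascoli--Arzel\`a compactness argument for the interpolants then produces a subsequence $\tau_k\to0$ and a locally absolutely continuous curve $t\mapsto\mu_t$ in $(\mathcal{P}_2(\R^d),d_2)$ with $\mu_t\to\mu_0$ as $t\to0$.

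The decisive step is to show that this $\mu_t$ satisfies the EVI \ref{eq:evi}, and this is exactly where $\lambda$-convexity \emph{along generalized geodesics} --- rather than along ordinary displacement geodesics --- is indispensable. Given any competitor $\nu$, I would test the minimality of $\mu:=\mu^{n+1}_\tau$ against the points $\mu_s$ of the generalized geodesic from $\mu$ to $\nu$ with base $\mu^n_\tau$, along which one simultaneously has $G(\mu_s)\le(1-s)G(\mu)+sG(\nu)-\frac{\lambda}{2}s(1-s)W^2$ by the convexity hypothesis and $d_2(\mu_s,\mu^n_\tau)^2\le(1-s)d_2(\mu,\mu^n_\tau)^2+s\,d_2(\nu,\mu^n_\tau)^2-s(1-s)W^2$, where $W\ge d_2(\mu,\nu)$ is the transport cost with base $\mu^n_\tau$ (this quasi-parallelogram identity along generalized geodesics is the essential algebraic input of \cite{a-g-s}). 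Dividing by $s$ and letting $s\to0^+$ produces the \emph{discrete EVI}
\[
\frac{1}{2\tau}\big(d_2(\mu^{n+1}_\tau,\nu)^2-d_2(\mu^n_\tau,\nu)^2\big)\le G(\nu)-G(\mu^{n+1}_\tau)-\frac{\lambda}{2}d_2(\mu^{n+1}_\tau,\nu)^2 .
\]
Summing this telescoping inequality between discrete times approximating $t_0$ and $t_1$, multiplying by $\tau$, and letting $\tau_k\to0$ --- using lower semicontinuity of $G$, continuity of $d_2$, and the convergence $\bar\mu_{\tau_k}(t)\to\mu_t$ --- gives the integrated form of \ref{eq:evi}, hence the differential EVI for a.e.\ $t>0$ by Lebesgue differentiation.

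Uniqueness and the regularizing estimates are then soft consequences of the EVI requiring no further use of the scheme. If $\mu_t$ and $\nu_t$ are two EVI solutions, adding \ref{eq:evi} for $\mu_t$ (with $v=\nu_t$) to that for $\nu_t$ (with $v=\mu_t$) gives $\frac{d}{dt}d_2(\mu_t,\nu_t)^2\le-2\lambda\,d_2(\mu_t,\nu_t)^2$ for a.e.\ $t$, so Gr\"onwall's lemma yields $d_2(\mu_t,\nu_t)^2\le e^{-2\lambda t}d_2(\mu_0,\nu_0)^2$ and hence uniqueness. The regularizing effect $\mu_t\in\{|\partial G|<\infty\}$ and the quantitative slope bound are read off from the energy--dissipation identity $-\frac{d}{dt}G(\mu_t)=|\mu_t'|^2=|\partial G|^2(\mu_t)$ forced by the EVI, together with the monotonicity of $t\mapsto|\partial G|(\mu_t)$ when $\lambda\ge0$ and the slope comparison $G(\mu_t)\ge G(\nu)-|\partial G|(\nu)\,d_2(\mu_t,\nu)-\frac{\lambda}{2}d_2(\mu_t,\nu)^2$, exactly as in \cite{a-g-s}. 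The main obstacle is the discrete-to-continuous passage in the third paragraph: one must handle all competitors $\nu$ at once and bound the geodesic-interpolation errors uniformly in $\tau$, which is precisely what generalized-geodesic convexity makes possible --- and the reason the non-NPC space $\mathcal{P}_2(\R^d)$ (for $d>1$) lies outside the direct Crandall--Liggett argument of \cite{ma} available in the NPC/Hilbert case.
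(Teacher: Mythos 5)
The paper does not prove this statement at all: it is recalled verbatim as a version of Theorems 4.0.4 and 11.2.1 of \cite{a-g-s}, and your proposal is a faithful reconstruction of exactly the argument in that source (minimizing movements, the discrete EVI obtained by testing minimality along generalized geodesics with base $\mu^n_\tau$, Gr\"onwall from the doubled EVI for uniqueness, and the slope/energy-dissipation estimates for the regularizing effect). So in substance your route is the intended one, and the outline is correct.

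One step deserves a caveat. In your first paragraph you obtain the minimizer of $\Phi^n_\tau$ by compactness in the narrow topology plus \emph{weak} lower semicontinuity of $G$; but the hypothesis is only that $G$ is lsc on $(\mathcal{P}_2(\R^d),d_2)$, and a $d_2$-lsc functional need not be narrowly lsc on sets of bounded second moment. The way \cite{a-g-s} avoid this is precisely via the convexity you invoke later: for $\tau$ small the penalized functional is $(\tau^{-1}+\lambda)$-convex (with strictly positive modulus) along generalized geodesics with base $\mu^n_\tau$, and the quasi-parallelogram identity then shows that any minimizing sequence is Cauchy in $d_2$, so $d_2$-lower semicontinuity alone yields existence and uniqueness of the minimizer. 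Substituting that argument for the compactness step closes the only gap; everything else in your sketch matches the cited proof.
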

\begin{rem}
\label{rem:further prop of evi }Many more properties of the EVI-gradient
flow $\mu_{t}$ are established in \cite{a-g-s}. For example, $\mu_{t}$
defines an absolutely continuous curve $\R\rightarrow\mathcal{P}_{2}(\R^{n})$
(in the sense of metric spaces) which is locally Lipschitz continuous
on $]0,\infty[$ which is a $\lambda-$contracting semi-group. Moreover,
the flows are stable under suitable approximation of the initial data
and the functional $G.$ 
\end{rem}
Under suitably regularity assumptions it shown in \cite{a-g-s} that
the EVI-gradient flow $\mu_{t}=\rho_{t}dx$ furnished by the previous
theorem satisfies Otto's evolution equation (recalled in the appendix)
in the weak sense:
\begin{prop}
\label{prop:evi satisf cont eq}Suppose in addition to the assumptions
in the previous theorem that $\mu_{t}$ has a density $\rho_{t}$
for $t>0.$ Then $\rho_{t}$ satisfies the the continuity equation
\ref{eq:formal gradient flow} in the sense of distributions on $\R^{d}\times\R$
with
\[
v_{t}=-(\partial^{0}G)(\rho_{t}dx),
\]
where $\partial^{0}G$ denotes the minimal subdifferential of $G.$ 
\end{prop}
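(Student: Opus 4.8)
The plan is to deduce the statement from the metric gradient-flow theory of \cite{a-g-s}, using that an EVI-gradient flow is a locally absolutely continuous curve satisfying the energy dissipation equality, so that it carries a canonical tangent velocity field, and then invoking the subdifferential calculus for $\lambda$-convex functionals to identify that velocity with minus the minimal subdifferential.

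First I would record the energy identity and produce the velocity field. By \cite[Ch.\ 4 and Ch.\ 11]{a-g-s}, the solution $\mu_t$ of the EVI \ref{eq:evi} furnished by Theorem \ref{thm:existence of evi} is a locally absolutely continuous curve in $(\mathcal{P}_2(\R^d),d_2)$; moreover, using the regularizing effect $\mu_t\in\{|\partial G|<\infty\}$ for $t>0$ stated there, the function $t\mapsto G(\mu_t)$ is non-increasing and locally absolutely continuous on $]0,\infty[$ and the energy dissipation equality
\[
-\frac{d}{dt}G(\mu_t)=|\partial G|^2(\mu_t)=|\mu_t'|^2
\]
holds for a.e.\ $t>0$, where $|\mu_t'|$ is the metric derivative. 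Since $t\mapsto\mu_t$ is locally absolutely continuous, the characterization of absolutely continuous curves in Wasserstein space \cite[Theorem 8.3.1]{a-g-s} gives a Borel time-dependent vector field $(x,t)\mapsto v_t(x)$, belonging for a.e.\ $t$ to the Otto tangent space $T_{\mu_t}\mathcal{P}_2(\R^d)$ (the closure in $L^2(\mu_t;\R^d)$ of gradients of smooth compactly supported functions), with $\|v_t\|_{L^2(\mu_t)}=|\mu_t'|$ for a.e.\ $t$, such that the continuity equation \ref{eq:formal gradient flow} holds in the sense of distributions on $\R^d\times\,]0,\infty[$; here the hypothesis $\mu_t=\rho_t\,dx$ for $t>0$ is used to phrase the equation in terms of the densities.

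The hard part will be identifying $v_t$ with $-\partial^0 G(\mu_t)$, and this is where $\lambda$-convexity is essential. From the subdifferential calculus of \cite[Ch.\ 10]{a-g-s} for $\lambda$-convex functionals one has: (i) the minimal subdifferential $\partial^0 G(\mu)$ lies in $T_{\mu}\mathcal{P}_2(\R^d)$ and $\|\partial^0 G(\mu)\|_{L^2(\mu)}=|\partial G|(\mu)$; and (ii) the chain-rule inequality $\frac{d}{dt}G(\mu_t)\ge\langle\xi\cdot v_t,\mu_t\rangle$ holds for a.e.\ $t>0$ and every $\xi\in\partial G(\mu_t)$. Applying (ii) with $\xi=\partial^0 G(\mu_t)$ and Cauchy--Schwarz gives
\[
\frac{d}{dt}G(\mu_t)\ge\langle\partial^0 G(\mu_t)\cdot v_t,\mu_t\rangle\ge-\|\partial^0 G(\mu_t)\|_{L^2(\mu_t)}\,\|v_t\|_{L^2(\mu_t)}.
\]
By the energy identity above and $\|v_t\|_{L^2(\mu_t)}=|\mu_t'|=|\partial G|(\mu_t)=\|\partial^0 G(\mu_t)\|_{L^2(\mu_t)}$, both ends of this chain equal $-\|v_t\|_{L^2(\mu_t)}^2$, so the two displayed inequalities are in fact equalities. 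Equality in Cauchy--Schwarz forces $v_t$ to be an antiparallel scalar multiple of $\partial^0 G(\mu_t)$ for a.e.\ $t$, and equality of the $L^2(\mu_t)$-norms pins the multiple down to $-1$; hence $v_t=-\partial^0 G(\mu_t)$ for a.e.\ $t>0$. Substituting into the continuity equation of the previous paragraph gives the assertion (and, for $G=F_\beta^V$, recovers the linear Fokker--Planck equation \ref{eq:linear drift diff}).

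The genuine obstacle is thus entirely internal to \cite[Ch.\ 10]{a-g-s}: one must invoke the nontrivial facts that, for a $\lambda$-convex functional, the metric slope equals the $L^2$-norm of the minimal subdifferential (which is a tangent vector) and that this subdifferential obeys the chain-rule inequality above along absolutely continuous curves. Granting this, the identification is \emph{forced} by the energy dissipation equality of the EVI-flow together with the equality case of Cauchy--Schwarz; the only remaining care is that the continuity equation is obtained on $\R^d\times\,]0,\infty[$ and extends to include the endpoint $t=0$ via the initial condition $\mu_t\to\mu_0$ in $\mathcal{P}_2(\R^d)$.
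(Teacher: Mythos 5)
Your argument is correct and is essentially the paper's own route: the paper offers no independent proof of this proposition but simply delegates it to the theory of \cite{a-g-s}, and what you have written out is precisely the standard argument there (EVI flow $\Rightarrow$ curve of maximal slope and energy dissipation equality $-\frac{d}{dt}G(\mu_{t})=|\mu_{t}'|^{2}=|\partial G|^{2}(\mu_{t})$, the tangent velocity field from the characterization of absolutely continuous curves via the continuity equation, and then identification of $v_{t}$ with $-\partial^{0}G(\mu_{t})$ through the chain rule and the equality case of Cauchy--Schwarz). The only ingredients you take on faith --- that $\partial^{0}G(\mu)$ is a tangent vector of norm $|\partial G|(\mu)$ and that the chain-rule inequality holds for $\lambda$-convex regular functionals --- are exactly the parts of \cite[Ch.\ 10]{a-g-s} the paper is implicitly invoking, so there is no gap.
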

We recall that under the assumptions in the previous theorem (and
assuming $\{\left|\partial G\right|^{2}<\infty\}\subset\mathcal{P}_{2,ac}(\R^{n}))$
the many-valued \emph{subdifferential} $\partial G$ on the subspace
$\mathcal{P}_{2,ac}(\R^{n})$ is a metric generalization of the (Frechet)
subdifferential Hilbert space theory; by definition, it satisfies
a ``slope inequality along geodesics'': 
\[
(\partial G)(\mu):=\left\{ \xi\in L^{2}(\mu):\,\forall\nu:\,G(\nu)\geq G(\mu)+\left\langle \xi,v\right\rangle _{L^{2}(\mu)}+\frac{\lambda}{2}d_{2}(\nu,\mu)^{2},\,\,\,\,\,v(x):=T_{\mu}^{\nu}(x)-x\right\} 
\]
 where $T_{\mu}^{\nu}$ denotes the optimal transport map between
$\mu$ and $\nu,$ as in Remark \ref{rem:transport} (note that $v$
is the tangent vector field at $0$ of the geodesic $\mu_{s}$ from
$\mu$ to $\nu).$ The \emph{minimal subdifferential} $\partial^{0}G$
on $\mathcal{P}_{2,ac}(\R^{n})$ at $\mu$ is defined as the unique
element in the subdifferential $\partial G$ at $\mu$ minimizing
the $L^{2}-$norm in $L^{2}(\mu);$ in fact, its norm coincides with
the metric slope of $G$ at $\mu$ (in \cite{a-g-s} there is also
a more general notion of extended subdifferential which, however,
will not be needed for our purposes). 
\begin{example}
\label{ex:fisher as norm of subd}In the case when $G=H$ is the Boltzmann
entropy and $\mu$ satisfies $H(\mu)<\infty,$ so that $\mu$ has
a density $\rho,$ we have $(\partial^{0}H)(\mu)=\rho^{-1}\nabla\rho\in L^{2}(\mu)$
and hence 
\[
|\partial H|^{2}(\mu)=I(\rho)
\]
is the Fisher information of $\rho$ (formula \ref{eq:def of H and I});
see \cite[Theorem 10.4.17]{a-g-s}
\end{example}
The following result goes back to McCann \cite{mcCa} (see also \cite{a-g-s}
for various elaborations):
\begin{lem}
\label{lem:generalized conv} The following functionals are lsc and
$\lambda-$convex along any generalized geodesics in $\mathcal{P}_{2}(\R^{d})$:.
\begin{itemize}
\item The ``potential energy'' functional $\mathcal{V}(\mu):=\int V\mu,$
defined by a given lsc $\lambda-$convex and lsc function $V$ on
$\R^{d}$ (and the converse also holds)
\item The functional $\mu\mapsto\int V_{N}\mu^{\otimes N}$ defined by a
given $\lambda-$convex function $V_{N}$ on $\R^{dN}$ and in particular
the ``interaction energy'' functional 
\[
\mathcal{W}(\mu):=\int W(x-y)\mu(x)\otimes\mu(x)
\]
 defined by a given lsc $\lambda-$convex function $W$ on $\R^{d}.$
\item The Boltzmann entropy $H(\mu)$ (relative to $dx$) is lsc and convex
along any generalized geodesics.
\end{itemize}
In particular, for any $\lambda-$convex function $V$ on $\R^{d}$
the corresponding free energy functional $F_{\beta}^{V}$ (formula
\ref{eq:def of free energy of v notation}) is $\lambda-$convex along
generalized geodesics, if $\beta\in]0,\infty].$ 
\end{lem}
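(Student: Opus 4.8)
The plan is to establish Lemma~\ref{lem:generalized conv} by assembling three known convexity facts and then combining them linearly. The statement has four parts, but the last (the $\lambda$-convexity of $F_\beta^V$ along generalized geodesics) follows formally from the first and third once one observes that $F_\beta^V = \mathcal{V} + \tfrac{1}{\beta}H$ is a nonnegative linear combination of a $\lambda$-convex functional and a $0$-convex functional, so I will spend the bulk of the argument on the first three bullets.

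First I would recall the defining property: a generalized geodesic $\mu_s = ((1-s)T_0 + sT_1)_*\nu$ with base $\nu$ is obtained by interpolating the two optimal maps $T_0, T_1$ pushing $\nu$ to $\mu_0, \mu_1$. For the potential energy $\mathcal{V}(\mu) = \int V\,\mu$, one simply writes $\mathcal{V}(\mu_s) = \int V\big((1-s)T_0(x) + sT_1(x)\big)\,\nu(x)$; if $V$ is $\lambda$-convex on $\R^d$ then pointwise in $x$ the integrand is a $\lambda|T_0(x)-T_1(x)|^2$-convex function of $s$, and integrating against $\nu$ gives $\tfrac{d^2}{ds^2}\mathcal{V}(\mu_s) \geq \lambda\int|T_0 - T_1|^2\,\nu = \lambda\,d_2(\mu_0,\mu_1)^2$ when $\nu$ is the base realizing the generalized geodesic; for the bona fide case $\nu = \mu_0$ this is exactly $\lambda d_2(\mu_0,\mu_1)^2$, and the normalization in the definition of $\lambda$-convexity (second derivative $\geq \lambda$ after reparametrizing, or $\geq \lambda d_2(\mu_0,\mu_1)^2$ before) matches the convention in \cite{a-g-s}. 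Lower semicontinuity of $\mathcal{V}$ under weak convergence follows from $V$ being lsc and bounded below (up to the coercivity already assumed). The same pointwise argument works verbatim for $\mu \mapsto \int V_N\,\mu^{\otimes N}$: the map $x^{(N)} \mapsto (1-s)T_0^{\otimes N}(x^{(N)}) + sT_1^{\otimes N}(x^{(N)})$ along the generalized geodesic in $\mathcal{P}_2(\R^{dN})$ associated to $\mu^{\otimes N}$ is the $N$-fold tensor interpolation, and $\lambda$-convexity of $V_N$ on $\R^{dN}$ gives the bound directly; specializing $V_N(x_1,\dots,x_N) = \tfrac{1}{N-1}\sum_{i\neq j} W(x_i - x_j)$ recovers the interaction functional $\mathcal{W}$.

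The third bullet, $0$-convexity (i.e. ordinary convexity) of the Boltzmann entropy $H(\mu) = \int \rho\log\rho\,dx$ along generalized geodesics, is the one substantive input I would simply cite from McCann \cite{mcCa} and \cite{a-g-s} rather than reprove: the key computation is that along $\mu_s = (S_s)_*\nu$ with $S_s = (1-s)T_0 + sT_1$, writing the density via the change of variables $\rho_s(S_s(x))\det(DS_s(x)) = \rho_\nu(x)$, one gets $H(\mu_s) = \int \big(\log\rho_\nu(x) - \log\det DS_s(x)\big)\,\nu(x)$, and since $s \mapsto \det DS_s(x)^{1/d}$ is concave (as $DS_s = (1-s)DT_0 + sDT_1$ is an affine path of symmetric positive matrices, by Brenier these are gradients of convex functions so their Jacobians are symmetric $\geq 0$, and $A \mapsto (\det A)^{1/d}$ is concave on such matrices), the map $s \mapsto -\log\det DS_s(x) = -d\log(\det DS_s(x)^{1/d})$ is convex; integrating preserves convexity. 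Regularity technicalities (that optimal maps are differentiable a.e., Alexandrov-type second derivatives, approximation by smooth measures so that $\mathcal{P}_{2,ac}$ is weakly dense in $\{H<\infty\}$) are exactly what \cite{a-g-s} handles, so I would invoke the precise statements there.

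Finally I would conclude: given $V$ $\lambda$-convex, $F_\beta^V(\mu_s) = \mathcal{V}(\mu_s) + \tfrac{1}{\beta}H(\mu_s)$ is continuous on $[0,1]$ (each summand is), and the distributional second derivatives add, giving $\tfrac{d^2}{ds^2}F_\beta^V(\mu_s) \geq \lambda + \tfrac{1}{\beta}\cdot 0 = \lambda$ along any generalized geodesic, valid for $\beta \in \,]0,\infty]$ (the case $\beta = \infty$ just drops the entropy term). I expect the main obstacle is not the convexity computation itself but the bookkeeping of regularity and normalization: making sure the notion of ``second derivative $\geq \lambda$ along generalized geodesics'' is used with the same normalization as in \cite[Proposition 9.2.10]{a-g-s}, that lower semicontinuity holds in the weak topology rather than only in $d_2$, and that the density/interpolation arguments for $H$ are quoted in a form that does not secretly require $\nu$ absolutely continuous when $\mu_0,\mu_1$ are not — all of which is why I would lean on the precise citations to \cite{mcCa} and \cite{a-g-s} rather than rederive them.
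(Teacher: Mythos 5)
The paper gives no proof of this lemma at all---it simply attributes the result to McCann \cite{mcCa} and to \cite{a-g-s}---and your sketch correctly reconstructs the standard arguments behind those citations: composition of $V$ (resp.\ $V_N$) with the interpolated optimal maps $(1-s)T_0+sT_1$ (resp.\ their $N$-fold tensor), concavity of $s\mapsto\det(DS_s)^{1/d}$ along the affine path of nonnegative symmetric Jacobians for the entropy, and additivity of the convexity moduli for $F_\beta^V$. Your closing caveats (the normalization of the $\lambda$-convexity modulus via $\int|T_0-T_1|^2\nu$ as in \cite{a-g-s}, and the fact that the base $\nu$ is taken in $\mathcal{P}_{2,ac}$) are exactly the right ones; the only item left untouched is the parenthetical converse in the first bullet, which is not proved in the paper either.
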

Combining the results above we arrive at the following
\begin{thm}
\label{Theorem:f-p as gradient flow} Assume given $\beta\in]0,\infty].$
Let $E(\mu)$ be a lsc functional on $\mathcal{P}_{2}(\R^{d})$ which
is $\lambda-$convex along generalized geodesics and satisfies the
coercivity condition \ref{eq:coercivity type condition}. Then the
EVI-gradient flow $\mu_{t}$ of the corresponding free energy functional
$F_{\beta}:=E+H/\beta$ exists. Moreover, if $\beta<\infty,$ then
$\mu_{t}=\rho_{t}dx,$ where $\rho_{t}$ has finite Boltzmann entropy.
In particular, 
\begin{itemize}
\item If $V$ is a lsc finite $\lambda-$convex function on $\R^{d},$ then
the gradient flow of $F_{\beta}^{V}$ exists (defining a weak solution
of the corresponding forward Kolmogorov equation/Fokker-Planck equation) 
\item If $E(\mu)$ is a Lipschitz continuous functional on $\mathcal{P}_{2}(\R^{d})$
which is $\lambda-$convex along generalized geodesics, then the gradient
flow exists for any initial data $\mu_{0}\in\mathcal{P}_{2}(\R^{n})$
and if $\beta<\infty,$ then $\mu_{t}=\rho_{t}dx,$ where $\rho_{t}$
has finite Boltzmann entropy and Fisher information and the following
continuity equation holds in the distributional sense on $\R^{n}\times\R$
\begin{equation}
\frac{\partial\rho_{t}}{\partial t}=\frac{1}{\beta}\Delta\rho_{t}+\nabla(\rho_{t}v_{t}),\label{eq:cont equa in theorem}
\end{equation}
 where $v_{t}=\partial^{0}E$ is the minimal subdifferential of $E$
at $\mu_{t}=\rho_{t}dx.$ 
\end{itemize}
\end{thm}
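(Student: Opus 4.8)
The plan is to assemble Theorem \ref{Theorem:f-p as gradient flow} from the building blocks already collected in Section \ref{sub:Gradient-flows-on}. First I would observe that by Lemma \ref{lem:generalized conv} the Boltzmann entropy $H$ is lsc and convex along generalized geodesics, so that $F_\beta := E + H/\beta$ is lsc and $\lambda$-convex along generalized geodesics whenever $E$ is (using $\beta \in ]0,\infty]$, so that $H/\beta$ has nonnegative second derivative along generalized geodesics and adds nothing negative to the convexity constant). The coercivity condition \ref{eq:coercivity type condition} is assumed for $E$; since $H(\mu) \geq -\epsilon \int |x|\,d\mu - C_\epsilon$ by \ref{eq:lower bound on entropy in terms of first moment} and the linear term is dominated by a quadratic Wasserstein term, $F_\beta$ also satisfies \ref{eq:coercivity type condition} (possibly with a different $\tau_*$ and $C$). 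Hence Theorem \ref{thm:existence of evi} applies verbatim to $G = F_\beta$ and produces the unique EVI-gradient flow $\mu_t$ emanating from any $\mu_0 \in \overline{\{F_\beta < \infty\}}$, which contains all of $\mathcal{P}_2(\R^d)$ when $E$ is everywhere finite.

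Next I would address the regularity claim that $\mu_t = \rho_t dx$ with finite entropy for $\beta < \infty$ and $t > 0$. This follows from the regularizing effect in Theorem \ref{thm:existence of evi}: the flow instantaneously enters $\{|\partial F_\beta| < \infty\}$, and one checks that finiteness of the metric slope of $F_\beta$ forces $\mu_t$ to be absolutely continuous with finite Fisher information. Concretely, if $\mu_t$ had a singular part, the entropy term $H/\beta$ in $F_\beta$ would already be $+\infty$, contradicting $\mu_t \in \{F_\beta < \infty\}$; and the slope bound together with the subdifferential calculus (the slope of $H$ being $\sqrt{I(\rho)}$ by Example \ref{ex:fisher as norm of subd}, and $E$ being Lipschitz hence of bounded slope) gives finite Fisher information. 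The two displayed bullet consequences are then immediate specializations: taking $E = \mathcal{V}$ with $V$ lsc finite $\lambda$-convex recovers the Fokker-Planck/forward Kolmogorov case; taking $E$ Lipschitz and $\lambda$-convex along generalized geodesics gives the second bullet, and here one must additionally invoke Proposition \ref{prop:evi satisf cont eq} to conclude that $\rho_t$ solves the continuity equation \ref{eq:cont equa in theorem} in the distributional sense with velocity $v_t = -\partial^0 E(\rho_t dx)$ — the $\frac{1}{\beta}\Delta\rho_t$ term arising as the contribution of the entropy part, whose minimal subdifferential is $\rho_t^{-1}\nabla\rho_t$, again by Example \ref{ex:fisher as norm of subd}, and the sum rule $\partial^0 F_\beta = \partial^0 E + \frac{1}{\beta}\partial^0 H$ valid because $E$ has everywhere finite, continuous slope.

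The main obstacle is the identification in the last bullet of the distributional equation with the split form $\frac{1}{\beta}\Delta\rho_t + \nabla\cdot(\rho_t v_t)$: a priori Proposition \ref{prop:evi satisf cont eq} only gives $v_t = -\partial^0 F_\beta(\rho_t dx)$, a single object, and one needs the additivity $\partial^0 F_\beta = \partial^0 H/\beta + \partial^0 E$ of minimal subdifferentials, which is not automatic for metric subdifferentials but does hold here because $E$ is Lipschitz on $\mathcal{P}_2(\R^d)$ (so its slope is globally bounded and $\partial^0 E$ is everywhere defined and "regular" in the sense of \cite{a-g-s}), allowing the chain-rule/sum-rule results of \cite[Ch.~10]{a-g-s} to be applied. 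I would also need to be slightly careful that the Lipschitz functional $E$ in the second bullet genuinely satisfies the coercivity \ref{eq:coercivity type condition} — but a Lipschitz function on $\mathcal{P}_2(\R^d)$ grows at most linearly in $d_2(\cdot,\mu_*)$, which is dominated by $\tfrac{1}{\tau_*}d_2(\cdot,\mu_*)^2 + C$, so this is routine. Everything else is a direct citation of the results stated above.
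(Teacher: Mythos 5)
Your proof is correct and follows essentially the same route as the paper's: both reduce to Theorem \ref{thm:existence of evi} via Lemma \ref{lem:generalized conv} and the coercivity of the entropic perturbation, deduce $H(\mu_{t})<\infty$ from the monotonicity of $F_{\beta}$ along the flow together with the lower bound on $E$, obtain finite Fisher information from the regularizing slope bound plus the Lipschitz continuity of $E$ (Example \ref{ex:fisher as norm of subd}), and invoke Proposition \ref{prop:evi satisf cont eq} for the distributional equation. The only step the paper spells out that you gloss over is the coercivity of $\mu\mapsto\int V\mu$ for a finite $\lambda$-convex $V$ in the first bullet (via the convexity of $V(x)+\lambda|x|^{2}$, giving $V(x)\geq-C|x|-\lambda|x|^{2}$), while conversely you are more careful than the paper about the subdifferential sum rule needed to split $\partial^{0}F_{\beta}$ into the diffusion and drift parts.
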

\begin{proof}
By the previous Lemma $F_{\beta}$ is also lsc and $\lambda-$convex
and by Lemma \ref{lem:coerc of E gives coerc of F} it also satisfies
the coercivity condition. Hence, the EVI-gradient flow exists according
to Theorem \ref{thm:existence of evi}. Moreover, by the general results
in \cite{a-g-s} $F_{\beta}$ is decreasing along the flow and in
particular locally uniformly bounded from above on $]0,\infty[.$
But, by the coercivity assumption $E>-\infty$ on $\mathcal{P}_{2}(\R^{d})$
and hence it follows that $H(\mu_{t})<\infty.$ The second statement
then follows by the previous lemma and the fact that the coercivity
condition holds: by $\lambda-$convexity $f(x):=v(x)+\lambda|x|^{2}$
is convex and hence $f(x)\geq-C|x|$ for some constant $C,$ proving
coercivity of $v$. To prove the last point first observe that $E(\mu)\geq-A-Bd(\mu,\mu_{0})^{2}<\infty$
on $\mathcal{P}_{2}(\R^{n})$ by the Lip assumption. Since $F_{\beta}(\mu_{t})\leq C$
it follows that $H(\mu_{t})<\infty,$ which in particular implies
that $\mu_{t}$ has a density $\rho_{t}.$ Moreover, by Theorem \ref{thm:existence of evi}
$|\partial F_{\beta}(\mu_{t})|<\infty$ for $t>0.$ But since $E$
is assumed Lip continuous we have $|\partial F_{\beta}(\mu_{t})|<\infty$
iff $|\partial H(\mu_{t})|<\infty,$ which means that $I(\mu_{t})$
has finite Fisher information (see Example r\ref{ex:fisher as norm of subd}).
Finally, the distributional equation follows from Proposition \ref{prop:evi satisf cont eq}.
\end{proof}

\subsubsection{\label{sub:The-variational-discretization}The variational discretization
scheme (``minimizing movements'')}

We recall that the proof of Theorem \ref{thm:existence of evi} in
\cite{a-g-s} uses a discrete approximation scheme introduced by De
Giorgi, called the \emph{minimizing movement} scheme. It can be seen
as a variational formulation of the (back-ward) Euler scheme. Consider
the fixed time interval $[0,T]$ and fix a (small) positive number
$\tau$ (the ``time step''). In order to define the ``discrete
flow'' $u_{j}^{\tau}$ corresponding to the sequence of discrete
times $t_{j}:=j\tau,$ where $t_{j}\leq T$ with initial data $u_{0}$
one proceeds by iteration: given $u_{j}\in M$ the next step $u_{j+1}$
is obtained by minimizing the following functional on $(M,d):=W_{2}(\R^{d}):$
\[
u\mapsto\frac{1}{2\tau}d(u,u_{j})^{2}+G(u)
\]
Next, one defines $u^{\tau}(t)$ for any $t\in[0,T]$ by setting $u^{\tau}(t_{j})=u_{j}^{\tau}$
and demanding that $u^{\tau}(t)$ be constant on $]t_{j},t_{j+1}[$
and right continuous (we are using a slightly different notation than
the one in \cite[Chapter 2]{a-g-s}).

The curve $u_{t}$ is then defined as the large $m$ limit of $u_{t}^{(m)}$
in $(M,d);$ as shown in \cite{a-g-s} the limit indeed exists and
satisfies the EVI \ref{eq:evi} and is thus uniquely determined. More
precisely, the following quantitative convergence result holds (see
\cite[Theorem 4.07, formula 4.024]{a-g-s} and \cite[Theorem 4.09]{a-g-s}):
\begin{thm}
\label{thm:optimal error estimates}Let $G$ be a functional on $\mathcal{P}_{2}(\R^{n})$
satisfying the assumptions in Theorem \ref{thm:existence of evi}
with $\lambda\geq0$. Then 
\[
d^{2}(u^{\tau}(t),u(t))\leq\frac{1}{2}|\tau|^{2}|\partial G|^{2}(u_{0}),
\]
 where $|\partial G|(u_{0})$ denotes the metric slope of $G$ at
$u_{0}.$ If $G$ is only assumed to be $\lambda-$convex for some,
possibly negative, $\lambda$ then 
\[
d(u^{\tau}(t),u(t))\leq C|\tau|(G(u_{0})-\inf G),
\]
for some constant $C$ only depending on $\lambda$ and $T.$ \end{thm}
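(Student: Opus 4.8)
The plan is to reconstruct the proof of the relevant parts of \cite[Chapter 4]{a-g-s}, whose core is a discrete counterpart of the Evolution Variational Inequality \ref{eq:evi} valid at each step of the minimizing movement scheme. First I would record the first-order optimality of one step: if $u_{j+1}$ minimizes $u\mapsto\frac{1}{2\tau}d(u,u_j)^2+G(u)$, then perturbing $u_{j+1}$ along the generalized geodesic $s\mapsto\sigma_s$ running from $u_{j+1}$ to an arbitrary competitor $v$ with $G(v)<\infty$ and based at $u_j$, and combining the $\lambda$-convexity of $G$ along $\sigma_s$ (Lemma \ref{lem:generalized conv}) with the elementary convexity of $s\mapsto d(\sigma_s,u_j)^2$, produces the \emph{discrete EVI}
\[
\frac{1}{2\tau}\bigl(d(u_{j+1},v)^2-d(u_j,v)^2\bigr)+\frac{1}{2\tau}d(u_{j+1},u_j)^2 \le G(v)-G(u_{j+1})-\frac{\lambda}{2}\,d(u_{j+1},v)^2 .
\]
This single inequality is what everything rests on, and it is here that one genuinely needs \emph{generalized} geodesics rather than bona fide Wasserstein geodesics, since $\mathcal{P}_2(\R^n)$ is not NPC.

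Next I would extract the a priori estimates. Choosing $v=u_j$ and discarding nonnegative terms yields the telescoping energy-dissipation bound $\sum_j\frac{1}{2\tau}d(u_{j+1},u_j)^2\le G(u_0)-\inf G$, whose right-hand side is finite thanks to the coercivity hypothesis \ref{eq:coercivity type condition} inherited from Theorem \ref{thm:existence of evi}; in particular $\sum_j d(u_{j+1},u_j)^2=O(\tau)$. When $\lambda\ge0$ I would additionally prove the sharper ``discrete regularizing effect'': one resolvent step does not increase the metric slope, so $|\partial G|(u_j)\le|\partial G|(u_0)$ for all $j$, and together with the standard one-step bound $d(u_{j+1},u_j)\le\tau\,|\partial G|(u_j)$ coming from minimality this forces $d(u_{j+1},u_j)\le\tau\,|\partial G|(u_0)$ at every step.

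The last and, I expect, hardest step is the comparison estimate, obtained by playing the discrete inequality against the continuous one. One inserts $v=u(t)$ (the EVI flow provided by Theorem \ref{thm:existence of evi}) into the discrete inequality at step $j$, inserts $v=u_{j+1}$ into \ref{eq:evi}, integrates the latter across $[t_j,t_{j+1}]$, adds the two, and absorbs the cross terms by $\lambda$-convexity; after bookkeeping this produces a recursion of the shape $D_{j+1}\le(1-\lambda\tau)^{-1}\bigl(D_j+R_j\bigr)$ for $D_j:=d(u_j^\tau,u(t_j))^2$, with $D_0=0$ and residual $R_j$ controlled by $d(u_{j+1},u_j)^2$, hence contributing $O(\tau^2)\bigl(G(u_0)-\inf G\bigr)$ in total in the general case and $O\bigl(\tau^3|\partial G|^2(u_0)\bigr)$ per step when $\lambda\ge0$. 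A discrete Gronwall summation over $j\le t/\tau$ then turns the recursion into the two displayed bounds, the constant $C=C(\lambda,T)$ entering through $(1-\lambda\tau)^{-t/\tau}\to e^{-\lambda t}$ and the dichotomy between the two estimates reflecting which a priori bound is fed in. The delicate points I anticipate are: reconciling the discrete times $t_j$ with the continuous variable $t$ and controlling the piecewise-constant interpolation error; arranging that all the optimal transport maps involved lie on a single generalized geodesic so that the convexity inequalities can be used simultaneously (and keeping track of the distinction between the Wasserstein distance and the generalized-geodesic length, which matters when $\lambda<0$); and tracking constants tightly enough to recover the clean numerical factor in the first estimate, which is exactly where the slope monotonicity of the second step is indispensable. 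Since the resulting bounds tend to $0$ with $\tau$, they simultaneously establish convergence of the scheme to the EVI flow $u(t)$ of Theorem \ref{thm:existence of evi} and yield the stated rates.
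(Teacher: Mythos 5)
This theorem is not proved in the paper at all: it is quoted verbatim (up to notation) from Ambrosio--Gigli--Savar\'e, namely Theorems 4.0.4/4.0.7 and 4.0.9 of \cite{a-g-s}, and your sketch is a faithful reconstruction of exactly that argument --- the one-step discrete EVI from $\lambda$-convexity along generalized geodesics, the energy-dissipation and slope-monotonicity a priori bounds, and the discrete/continuous EVI comparison closed by a discrete Gronwall lemma. The only caution is bookkeeping of $d$ versus $d^{2}$ (the displayed estimates are really bounds on the squared distance in \cite{a-g-s}, and in the general $\lambda$-convex case the total residual from $\sum_{j}d(u_{j+1},u_{j})^{2}\leq 2\tau\bigl(G(u_{0})-\inf G\bigr)$ is $O(\tau)$, not $O(\tau^{2})$, consistent with Remark \ref{rem:error est}), but this does not affect the approach.
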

\begin{rem}
\label{rem:error est}By the last paragraph on page 79 in \cite{a-g-s}
even if $\lambda<0$ one does not need a lower bound on $\inf G$
if one replaces $|\tau|$ with $|\tau|^{1/2},$ as long as $u_{0}$
is assumed to satisfy $G(u_{0})<\infty.$
\end{rem}

\subsection{\label{sub:Propagation-of-chaos}Proof of propagation of chaos in
the discretized setting}

In this section we fix once and for all the time interval $[0,T]$
and the time step $\tau>0.$ We denote by $\mu_{t_{j}}^{(N)}$ the
corresponding discretized minimizing movement of the free energy functional
$F^{(N)}$ on $\mathcal{P}_{2}(X^{N},d_{2})$ with given initial data
$\mu_{t_{j}}^{(N)}.$ The sequence $\mu_{t_{j}}^{(N)}$ is well-defined
according to Theorem \ref{thm:existence of evi} and the Main Assumptions.
Moreover, by the third isometry property in Lemma \ref{lem:isometries}
$\mu_{t_{j}}^{(N)}$ may be identified with the minimizing movement
of the mean free energy functional $F^{(N)}/N$ on $\mathcal{P}_{2}(X^{N},d_{(2)}),$
which in turn embeds isometrically to give a discrete flow $\Gamma_{t_{j}}^{(N)}$
in $W_{2}(\mathcal{P}_{2}(X),d_{2}).$ Similarly, we denote by $\mu_{t_{j}}$
the discretized minimizing movement of the functional $F$ on $\mathcal{P}_{2}(X,)$
with given initial data $\mu_{0}.$
\begin{thm}
\label{thm:induction step}Assume that at time $t_{j}$ 
\[
\lim_{N\rightarrow\infty}(\delta_{N})_{*}\mu_{t_{j}}^{(N)}=\delta_{\mu_{t_{j}}}
\]
in the $L^{2}-$Wasserstein metric. Then, at the next time step $t_{j+1}$
\[
\lim_{N\rightarrow\infty}(\delta_{N})_{*}\mu_{t_{j+1}}^{(N)}=\delta_{\mu_{t_{j+1}}}
\]
in the $L^{2}-$Wasserstein metric.
\end{thm}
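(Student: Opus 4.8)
The plan is to carry out a one-step variational comparison on the ``big'' space $W_2(\mathcal{P}_2(X))$, exploiting the isometric embedding of Lemma~\ref{lem:isometries} so that the minimizing movement for $F^{(N)}$ on $\mathcal{P}_2(X^N,d_2)$ becomes a minimizing movement for the mean free energy $F_N = F^{(N)}/N$ on $\mathcal{P}_2(X^{(N)},d_{(2)}) \hookrightarrow W_2(\mathcal{P}_2(X))$. First I would recall that, by definition of the scheme, $\Gamma^{(N)}_{t_{j+1}} = (\delta_N)_* \mu^{(N)}_{t_{j+1}}$ is the (essentially unique, by $\lambda$-convexity) minimizer of
\[
\Gamma \mapsto \frac{1}{2\tau} d_2(\Gamma,\Gamma^{(N)}_{t_j})^2 + F_N(\Gamma),
\]
and similarly $\mu_{t_{j+1}}$ minimizes $\mu \mapsto \frac{1}{2\tau} d_2(\mu,\mu_{t_j})^2 + F_\beta(\mu)$ on $\mathcal{P}_2(X)$, hence $\delta_{\mu_{t_{j+1}}}$ is the corresponding minimizer when $F_\beta$ is lifted to the constant-on-Dirac-masses functional on $W_2(\mathcal{P}_2(X))$. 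The strategy is a standard ``$\liminf$ of energies plus convergence of minimizers'' (a $\Gamma$-convergence type argument) adapted to this two-layer Wasserstein setting.

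The key steps, in order, are: (i) \emph{Compactness.} Using the coercivity coming from the uniform Lipschitz bound (Lemma~\ref{lem:coerc of E gives coerc of F}) together with the inductive hypothesis $\Gamma^{(N)}_{t_j} \to \delta_{\mu_{t_j}}$, show the sequence $\Gamma^{(N)}_{t_{j+1}}$ is tight in $\mathcal{P}(\mathcal{P}_2(X))$ with uniformly bounded second moments, so after passing to a subsequence it converges, say to some $\Gamma_\infty \in W_2(\mathcal{P}_2(X))$, and the $d_2$-distances converge (Proposition~\ref{prop:wasserstein conv}). (ii) \emph{Liminf inequality / energy lower bound.} Prove $\liminf_N F_N(\Gamma^{(N)}_{t_{j+1}}) \geq \int_{\mathcal{P}(X)} F_\beta(\nu)\, \Gamma_\infty(\nu)$. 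Here the mean energy term is handled by the convergence of mean energies (Proposition~\ref{prop:conv of mean energy}, whose hypotheses follow from Main Assumptions (1)--(2) via Lemma~\ref{lem:equiv energy as}), and the mean entropy term by lower semicontinuity and subadditivity/superadditivity properties of entropy under marginalization (this is exactly the Messer--Spohn type entropy estimate: $\frac{1}{N}H(\mu_N) \geq H(\mu_N^{(1)})$ for the first marginal, plus weak lower semicontinuity of $H$). (iii) \emph{Recovery / upper bound.} Show $\limsup_N \big[\frac{1}{2\tau}d_2(\Gamma^{(N)}_{t_{j+1}},\Gamma^{(N)}_{t_j})^2 + F_N(\Gamma^{(N)}_{t_{j+1}})\big] \leq \frac{1}{2\tau}d_2(\delta_{\mu_{t_{j+1}}},\delta_{\mu_{t_j}})^2 + F_\beta(\mu_{t_{j+1}})$ by plugging the competitor $(\delta_N)_*\mu_{t_{j+1}}^{\otimes N}$ into the $N$-level minimization: its distance to $\Gamma^{(N)}_{t_j}$ is controlled using the inductive hypothesis and $d_2(\Gamma^{(N)}_{t_j},\delta_{\mu_{t_j}})\to 0$, its mean energy converges to $E(\mu_{t_{j+1}})$ by Lemma~\ref{lem:equiv energy as} (formula~\eqref{eq:Energy of mu as limit of mean energy}), and its mean entropy equals $H(\mu_{t_{j+1}})$ exactly by additivity of mean entropy on product measures. (iv) \emph{Identification.} Combine (i)--(iii) with the distance-functional lower bound $d_2(\Gamma_\infty,\delta_{\mu_{t_j}})^2 \le \liminf d_2(\Gamma^{(N)}_{t_{j+1}},\Gamma^{(N)}_{t_j})^2$ and the elementary convexity fact that $\Gamma \mapsto \int F_\beta\, d\Gamma + \frac{1}{2\tau} d_2(\Gamma,\delta_{\mu_{t_j}})^2$ (strictly $\lambda$-convex for $\tau$ small, using $\lambda$-convexity of $F_\beta$ and the Jensen-type inequality $\int d_2(\nu,\mu_{t_j})^2\,\Gamma(\nu) \ge d_2(\mathrm{bar}(\Gamma),\mu_{t_j})^2$ only if... ) forces $\Gamma_\infty$ to be a Dirac mass $\delta_{\mu_\infty}$ with $\mu_\infty$ the unique minimizer, i.e. $\mu_\infty = \mu_{t_{j+1}}$. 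Since every subsequential limit equals $\delta_{\mu_{t_{j+1}}}$, the full sequence converges.

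The main obstacle I expect is step (ii), and more precisely making the entropy lower bound interact correctly with the fact that $\Gamma^{(N)}_{t_{j+1}}$ need not be of the chaotic (product) form: one must show that the ``loss'' in passing from $\frac1N H(\mu_N)$ to the one-particle marginal entropy, combined with the loss in passing from $\Gamma^{(N)}_{t_{j+1}}$ to its barycenter, is asymptotically negligible once the competitor comparison in (iii) pins the total $\tau$-functional to the right value. In other words, the delicate point is that the chain of inequalities
\[
\int F_\beta\, d\Gamma_\infty + \tfrac{1}{2\tau}d_2(\Gamma_\infty,\delta_{\mu_{t_j}})^2 \;\le\; \liminf_N \big[\,F_N + \tfrac{1}{2\tau}d_2^2\,\big] \;\le\; \limsup_N \big[\cdots\big] \;\le\; F_\beta(\mu_{t_{j+1}}) + \tfrac{1}{2\tau}d_2(\delta_{\mu_{t_{j+1}}},\delta_{\mu_{t_j}})^2
\]
must be shown to collapse to equalities, which in turn requires that $\int F_\beta\, d\Gamma_\infty \ge F_\beta(\mathrm{bar}\,\Gamma_\infty)$ together with strict convexity to conclude $\Gamma_\infty$ is Dirac; the subtlety is that $F_\beta$ is only $\lambda$-convex (possibly $\lambda<0$), so one needs $\tau < 1/|\lambda|$ (or a $\lambda$-convexity-adjusted barycenter inequality à la \cite{a-g-s}) to get the strict convexity that forces concentration. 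Handling the $\lambda<0$ case cleanly is where I would expect to spend the most care, likely by absorbing the $\lambda$-term into the quadratic penalty exactly as in the $\lambda$-convex minimizing-movement theory of \cite{a-g-s}.
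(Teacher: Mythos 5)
Your proposal follows essentially the same route as the paper: upper bound by plugging the product competitor $\mu_{t_{j+1}}^{\otimes N}$ into the one-step functional $J_{j+1}^{(N)}/N$, compactness from the uniform distance control and coercivity, a liminf inequality via the mean-energy convergence (Proposition \ref{prop:conv of mean energy}), the Robinson--Ruelle mean-entropy bound (Proposition \ref{prop:linf for mean entropy}) and the distance asymptotics (Proposition \ref{prop:conv of mean dist}), and finally identification of the limit. The one place where you over-complicate is step (iv): no barycenter or Jensen-type inequality $\int d_2(\nu,\mu_{t_j})^2\,\Gamma(\nu)\geq d_2(\mathrm{bar}(\Gamma),\mu_{t_j})^2$ is needed, and no ``$\lambda$-convexity-adjusted barycenter inequality'' either. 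The limiting functional $\Gamma\mapsto\int J_{j+1}(\nu)\,\Gamma(\nu)$ is simply \emph{linear} in $\Gamma$, so the collapsed chain of inequalities gives $\int J_{j+1}\,d\Gamma_\infty\leq\inf_{\mathcal{P}_2(X)}J_{j+1}$, while $J_{j+1}(\nu)\geq\inf J_{j+1}$ pointwise; hence $\Gamma_\infty$ is concentrated on the set of minimizers of $J_{j+1}$, which is a single point by uniqueness of the minimizing-movement step (this is where $\tau$ small relative to $\lambda^-$ enters, at the level of the one-particle functional only). Also note that the equality in the distance term of the collapsed chain is exactly what lets you invoke the ``equality iff'' clause of Proposition \ref{prop:conv of mean dist} to upgrade the subsequential weak convergence to convergence in the $L^2$-Wasserstein metric, a point your write-up leaves implicit.
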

We recall that given $\mu_{t_{j}}^{(N)}$ the next measure $\mu_{t_{j+1}}^{(N)}$
is defined as the minimizer of the following functional on $\mathcal{P}(X^{N}):$
\begin{equation}
\frac{1}{N}J_{j+1}^{(N)}(\cdot):=\frac{1}{2\tau}\frac{1}{N}d(\cdot,\mu_{t_{j+1}}^{(N)})^{2}+\frac{1}{N}F^{(N)}(\cdot)\label{eq:def of J functional}
\end{equation}

\subsubsection{Proof of Theorem \ref{thm:induction step}}

We start with the following direct consequence of Proposition \ref{prop:wasserstein conv}
 combined with Lemma \ref{lem:isometries}:
\begin{lem}
\label{lem:dhs}Let $\mu_{N}$ be a sequence of symmetric probability
measures on $X^{N}$ and denote by $\Gamma_{N}:=(\delta_{N})_{*}\mu_{N}$
the corresponding probability measures on $\mathcal{P}(X).$ Assume
that the $d_{q}-$distance of $\Gamma_{N}$ to a fixed element in
the Wasserstein space $W_{q}(\mathcal{P}_{2}(X))$ is uniformly bounded
from above, for some fixed $q\in[1,\infty[.$ Then, after perhaps
passing to a subsequence, there is a probability measure $\Gamma$
in $W_{q}(\mathcal{P}_{2}(X))$ such that
\[
\lim_{N\rightarrow\infty}(\delta_{N})_{*}\mu_{N}=\Gamma
\]
weakly in $\mathcal{P}(X)$ or more precisely in $W_{q'}(\mathcal{P}_{2}(X)$
if $1\leq q'<q$ 
\end{lem}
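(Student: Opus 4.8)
The statement is a compactness result for pushforwards of symmetric measures, and the plan is to reduce it to classical tightness/compactness in the Wasserstein space and then invoke Proposition~\ref{prop:wasserstein conv}. First I would recall from Lemma~\ref{lem:isometries} that under $(\delta_N)_*$ a symmetric measure $\mu_N$ on $X^N$ corresponds isometrically (after the normalization $\tfrac{1}{N^{1/p}}d_p$ with $p=2$) to a measure $\Gamma_N=(\delta_N)_*\mu_N$ on the Polish space $Y:=\mathcal{P}_2(X)$, and that the quantity $d_q(\Gamma_N,\Gamma_*)$ equals the corresponding distance computed downstairs on $X^{(N)}$. So the hypothesis ``$d_q(\Gamma_N,\Gamma_*)\le R$ uniformly'' says precisely that $\{\Gamma_N\}$ is a sequence in the metric space $W_q(Y)$ that is bounded for the $d_q$-distance.

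\textbf{Tightness.} The first real step is to extract a weak limit. For this I would show $\{\Gamma_N\}$ is tight as a sequence of Borel probability measures on $Y$. Since $\Gamma_*\in W_q(Y)$, there is a fixed $\mu_0\in Y$ with $\int_Y d_p(\mu,\mu_0)^q\,\Gamma_*(\mu)<\infty$; combining this with the uniform bound $\int_Y d_p(\mu,\mu_0)^q\,\Gamma_N(\mu)\le C$ (which follows from $d_q(\Gamma_N,\Gamma_*)\le R$ and the triangle inequality after moving the base point, using $(a+b)^q\le 2^{q-1}(a^q+b^q)$), one gets a uniform $q$-th moment bound for the $\Gamma_N$ on $Y$ centered at $\mu_0$. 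Because $Y=\mathcal{P}_2(X)$ with $X=\R^n$ is Polish and closed balls $\{\mu:d_p(\mu,\mu_0)\le K\}$ in $W_p(X)$ are compact (by Prokhorov, the $q$-th moment control also giving tightness on $Y$ — here one uses that bounded sets in $W_p(X)$ with uniformly integrable $p$-th moments are relatively compact), Markov's inequality applied to $d_p(\cdot,\mu_0)^q$ shows $\{\Gamma_N\}$ is tight on $Y$. Hence by Prokhorov's theorem some subsequence converges weakly in $\mathcal{P}(Y)$ to a limit $\Gamma\in\mathcal{P}(Y)$.

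\textbf{Identifying the limit and upgrading the convergence.} By Fatou/lower semicontinuity of $\mu\mapsto d_p(\mu,\mu_0)^q$ under weak convergence, $\int_Y d_p(\mu,\mu_0)^q\,\Gamma\le \liminf_N \int_Y d_p(\mu,\mu_0)^q\,\Gamma_N\le C$, so in fact $\Gamma\in W_q(Y)$. It then remains to upgrade the weak convergence $\Gamma_N\to\Gamma$ to convergence in $d_{q'}$ for every $q'<q$. This is exactly the content of the second statement of Proposition~\ref{prop:wasserstein conv} applied with the Polish base space $Y=\mathcal{P}_2(X)$ in place of $X$: weak convergence together with a uniform bound on the $q$-th moments implies convergence in the $W_{q'}$-distance for all $q'<q$. (If one wanted, the same Chebyshev-truncation argument used there could be reproduced verbatim.) This yields the displayed conclusion, with the phrase ``weakly in $\mathcal{P}(X)$'' in the lemma statement to be read as ``weakly in $\mathcal{P}(Y)=\mathcal{P}(\mathcal{P}_2(X))$''.

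\textbf{Main obstacle.} The only delicate point is the tightness step: one must know that a uniform $q$-th moment bound on measures $\Gamma_N$ living on $Y=\mathcal{P}_2(X)$ forces tightness on $Y$, which in turn requires that sublevel sets $\{\mu\in\mathcal{P}_2(X):d_2(\mu,\mu_0)\le K\}$ are relatively compact in the weak topology of $\mathcal{P}_2(X)$. This holds because $X=\R^n$ and the $p$-th moment bound $\int_X|x|^2\mu \le K^2+$const gives uniform tightness of the family $\{\mu:d_2(\mu,\mu_0)\le K\}$ on $\R^n$ by Markov, hence relative compactness in $\mathcal{P}(\R^n)$; the moment bound also prevents escape of mass so the limit stays in $\mathcal{P}_2(X)$. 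Everything else is a mechanical application of Prokhorov plus Proposition~\ref{prop:wasserstein conv} and the isometries of Lemma~\ref{lem:isometries}.
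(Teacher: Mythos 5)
Your overall strategy — extract a uniform $q$-th moment bound, apply Prokhorov, use Fatou to place the limit in $W_q$, then invoke Proposition \ref{prop:wasserstein conv} to upgrade to $d_{q'}$-convergence — is exactly the argument the paper is implicitly invoking (it gives no proof beyond citing Proposition \ref{prop:wasserstein conv} and Lemma \ref{lem:isometries}). However, your tightness step contains a genuine gap. You assert that closed balls $\{\mu:\,d_2(\mu,\mu_0)\le K\}$ in $W_2(X)$ are compact. They are compact in the weak topology of $\mathcal{P}(\R^n)$, but they are \emph{not} compact in the metric topology of $Y=(\mathcal{P}_2(X),d_2)$: the measures $\nu_N=(1-\frac{1}{N})\delta_0+\frac{1}{N}\delta_{\sqrt{N}e_1}$ lie on the unit $d_2$-sphere around $\delta_0$ and converge weakly to $\delta_0$, yet admit no $d_2$-convergent subsequence because their second moments do not converge. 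Prokhorov's theorem on the Polish space $(Y,d_2)$ requires tightness with respect to $d_2$-compact sets, i.e.\ sets of measures that are weakly precompact \emph{and} have uniformly integrable second moments, and a uniform bound on $\int_Y d_2(\mu,\mu_0)^q\,\Gamma_N$ supplies no such uniform integrability. The issue is not cosmetic: taking $\mu_N$ to be the symmetrization of $\delta_{(0,\dots,0,\sqrt{N}e_1)}$ on $X^N$ gives $\Gamma_N=\delta_{\nu_N}$ with $d_q(\Gamma_N,\delta_{\delta_0})=1$ for every $N$ and $q$, so the hypothesis holds, yet no subsequence of $\Gamma_N$ can converge in $W_{q'}\bigl((\mathcal{P}_2(X),d_2)\bigr)$ for any $q'\ge 1$, since that would force a $d_2$-convergent subsequence of $\nu_N$.

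What your argument does correctly deliver is weak subsequential convergence of $\Gamma_N$ in $\mathcal{P}(\mathcal{P}(X))$ when the inner space carries the weak topology, and hence, by applying Proposition \ref{prop:wasserstein conv} once on $X$ and once on $\mathcal{P}(X)$, convergence in $W_{q'}(\mathcal{P}_{p'}(X))$ for $q'<q$ and $p'<2$. To obtain the conclusion with the inner metric $d_2$ as stated one needs an extra input forcing uniform integrability of the second moments of the inner measures under $\Gamma_N$ — for instance a uniform bound on $\int_{X^N}\frac{1}{N}\sum_i|x_i|^{p}\,\mu_N$ for some $p>2$, which makes the sublevel sets $\{\mu:\int|x|^{p}\mu\le R\}$ genuinely $d_2$-compact; this is precisely what the paper arranges via Borell's lemma in the proof of Theorem \ref{thm:static intro}. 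You should either import such a hypothesis into the lemma or weaken the inner exponent in the conclusion; as written, the compactness claim for $d_2$-balls is the one step of your proof that does not go through.
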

We next recall the following well-known result about the asymptotics
of the mean entropy (proved in \cite{r-r}; see also Theorem 5.5 in
\cite{h-m} for generalizations). The proof is based on the sub-additivity
properties of the entropy.
\begin{prop}
\label{prop:linf for mean entropy}Let $\mu^{(N)}$ be a sequence
of probability measures on $X^{N}$ such that $(\delta_{N})_{*}\mu_{N}$
converges weakly to $\Gamma\in\mathcal{P}(\mathcal{P}(X)).$ Then 
\end{prop}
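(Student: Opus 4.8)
The statement asserts a liminf inequality for the mean entropy: if $(\delta_N)_*\mu^{(N)}\to\Gamma$ weakly in $\mathcal{P}(\mathcal{P}(X))$, then $\liminf_{N\to\infty}H_N(\mu^{(N)})\geq\int_{\mathcal{P}(X)}H(\nu)\,\Gamma(\nu)$, where $H_N=\frac1N H$ is the mean entropy relative to $dx^{\otimes N}$. (Since the excerpt is truncated mid-statement, I take this to be the intended conclusion, which is the form needed to control the entropy term in the minimizing-movement step of Theorem \ref{thm:induction step}.) The plan is to exploit the two classical structural features of entropy that the remark in the excerpt already flags: \emph{sub-additivity} and \emph{lower semicontinuity}, and to combine them with a marginal/de Finetti-type compactification argument.

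First I would reduce to the symmetric case: replacing $\mu^{(N)}$ by its symmetrization under $S_N$ does not change $(\delta_N)_*\mu^{(N)}$ and, by convexity of $\rho\mapsto\rho\log\rho$ and Jensen, does not increase $H(\mu^{(N)})$, so it suffices to prove the inequality assuming $\mu^{(N)}$ is $S_N$-invariant; this only makes the claim stronger. Next, fix a block size $k$ and let $\mu^{(N)}_k$ denote the $k$-th marginal of $\mu^{(N)}$ on $X^k$. The key inequality from sub-additivity of entropy for symmetric measures (the Robinson--Ruelle estimate, cf. \cite{r-r}) is
\[
\frac1N H(\mu^{(N)})\;\geq\;\frac1k H(\mu^{(N)}_k)\;-\;o(1)\qquad(N\to\infty),
\]
valid for each fixed $k$; more precisely $H(\mu^{(N)})\geq \lfloor N/k\rfloor H(\mu^{(N)}_k)$ up to a boundary term controlled by $H$ of a marginal of bounded order, which divided by $N$ is $o(1)$. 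Then I would pass to the limit in $N$ for $k$ fixed. By Lemma \ref{lem:dhs}/the hypothesis, $(\delta_N)_*\mu^{(N)}\to\Gamma$, and it is standard (Hewitt--Savage / the theory of symmetric measures) that the $k$-th marginals then converge weakly on $X^k$ to the averaged product $\mu^{(N)}_k\to\int_{\mathcal{P}(X)}\nu^{\otimes k}\,\Gamma(\nu)=:\Gamma_k$. Since relative entropy (here relative to Lebesgue, but one localizes to a fixed Gaussian reference to make it a genuinely lsc relative entropy and then corrects by a first-moment term as in \eqref{eq:lower bound on entropy in terms of first moment}) is weakly lower semicontinuous, we get
\[
\liminf_{N\to\infty}\frac1N H(\mu^{(N)})\;\geq\;\frac1k H(\Gamma_k).
\]
Finally I would let $k\to\infty$. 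Here the additivity of entropy on products gives $H(\nu^{\otimes k})=kH(\nu)$, and a convexity/super-additivity argument for the averaged measure $\Gamma_k$ — essentially that $\frac1k H(\Gamma_k)$ is non-decreasing in $k$ and converges to $\int H(\nu)\,\Gamma(\nu)$ (an $L^1$-type de Finetti identity for entropy, again from \cite{r-r}, Theorem 5.5 of \cite{h-m}) — yields $\lim_{k\to\infty}\frac1k H(\Gamma_k)=\int_{\mathcal{P}(X)}H(\nu)\,\Gamma(\nu)$, completing the proof.

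The main obstacle is the non-compactness of $X=\R^n$: entropy relative to Lebesgue measure is not bounded below and is not lsc in the naive sense, so weak convergence of marginals does not immediately pass to the entropies, and the sub-additivity bound has boundary terms that must be shown to be $o(N)$ uniformly. I would handle this exactly as the excerpt suggests: use $H(\mu)\geq -\epsilon\int|x|\,\mu-C_\epsilon$ to convert to the relative entropy $H_{\gamma}$ against a fixed Gaussian $\gamma$, which \emph{is} weakly lsc and bounded below, prove the inequality for $H_\gamma$, and then add back the (weakly continuous, once second moments are controlled along the sequence — which they are under the coercivity/Lipschitz Main Assumptions in the application) Gaussian-potential correction $\int|x|^2\,\mu$. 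The tightness needed to invoke this, and to guarantee the marginal convergence $\mu^{(N)}_k\to\Gamma_k$, follows because $\Gamma\in W_q(\mathcal{P}_2(X))$ in the situations where the proposition is applied. All remaining steps — symmetrization, the sub-additivity estimate, the de Finetti marginal identity — are the classical ingredients of \cite{r-r} and I would simply cite them, the only new point being the Gaussian-reference bookkeeping required to run them on $\R^n$ rather than on a compact space.
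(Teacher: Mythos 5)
Your reconstruction is correct and follows exactly the route the paper intends: the paper gives no written proof of this proposition but simply cites Robinson--Ruelle \cite{r-r} and Theorem 5.5 of \cite{h-m}, remarking that ``the proof is based on the sub-additivity properties of the entropy.'' Your symmetrization step, the block super-additivity estimate, the de Finetti-type convergence of the $k$-th marginals to $\int\nu^{\otimes k}\,\Gamma(\nu)$, lower semicontinuity after switching to a Gaussian reference measure (with the first-moment correction \ref{eq:lower bound on entropy in terms of first moment}), and the Robinson--Ruelle identity $\lim_{k}k^{-1}H(\Gamma_{k})=\int H\,\Gamma$ are precisely the ingredients of that classical argument, including the moment bookkeeping needed to run it on the non-compact space $\R^{n}$.
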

\[
\liminf_{N\rightarrow\infty}H^{(N)}(\mu^{N})\geq\int_{\mathcal{P}(X)}H(\mu)\Gamma
\]
We will also use the following result, which generalizes a result
in \cite{m-s} concerning the case when $E_{N}$ is quadratic:
\begin{prop}
\label{prop:conv of mean energy}Let $\mu^{(N)}$ be a sequence of
probability measures on $X^{N}$ such that $\Gamma_{N}:=(\delta_{N})_{*}\mu_{N}$
converges to $\Gamma$ in $W_{1}(\mathcal{P}_{2}(X)).$ Then

\[
\lim_{N\rightarrow\infty}\frac{1}{N}\int_{X^{N}}E^{(N)}\mu^{N}=\int_{\mathcal{P}(X)}E(\mu)\Gamma
\]
\end{prop}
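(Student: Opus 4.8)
The plan is to push the problem from $X^{N}$ onto the space $\mathcal{P}(\mathcal{P}(X))$ of laws of empirical measures --- the level at which the hypothesis is phrased --- and there combine Kantorovich--Rubinstein duality with dominated convergence. The only inputs needed are the first two Main Assumptions on $E^{(N)}$; the $\lambda$-convexity assumption plays no role.

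First I would record the reformulation. By the first Main Assumption (and the symmetry of $E^{(N)}$) the rescaled energy $E_{N}:=E^{(N)}/N$, transported to $\mathcal{P}_{2}(X)$ through the empirical-measure isometry $\delta_{N}$ of Lemma~\ref{lem:isometries} and extended by density, is a functional on $(\mathcal{P}_{2}(X),d_{2})$ that is $C$-Lipschitz with $C$ independent of $N$; and by the second Main Assumption $E_{N}(\nu)\to E(\nu)$ for \emph{every} $\nu\in\mathcal{P}_{2}(X)$, the limit $E$ being itself $C$-Lipschitz. Since $\mu^{(N)}$ is symmetric, the substitution $x\mapsto\delta_{N}(x)$ gives
\[
\frac{1}{N}\int_{X^{N}}E^{(N)}\mu^{(N)}=\int_{\mathcal{P}(X)}E_{N}\,\Gamma_{N},
\]
while the right-hand side of the Proposition is $\int_{\mathcal{P}(X)}E\,\Gamma$; so it suffices to prove $\int E_{N}\,\Gamma_{N}\to\int E\,\Gamma$.

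I would then split, for a fixed base point $\nu_{0}\in\mathcal{P}_{2}(X)$,
\[
\int E_{N}\,\Gamma_{N}-\int E\,\Gamma=\int E_{N}\,(\Gamma_{N}-\Gamma)+\int(E_{N}-E)\,\Gamma,
\]
and treat the two terms separately. For the first term, the elementary bound $|E_{N}(\nu)|\le\sup_{k}|E_{k}(\nu_{0})|+C\,d_{2}(\nu,\nu_{0})$, together with the fact that $\Gamma_{N},\Gamma\in W_{1}(\mathcal{P}_{2}(X))$ have finite first moment for $d_{2}$, makes $E_{N}$ integrable against $\Gamma_{N}$ and $\Gamma$, so Kantorovich--Rubinstein duality on the Polish space $(\mathcal{P}_{2}(X),d_{2})$ gives $\bigl|\int E_{N}\,(\Gamma_{N}-\Gamma)\bigr|\le C\,d_{W_{1}}(\Gamma_{N},\Gamma)\to0$ by hypothesis. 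For the second term, the pointwise convergence $E_{N}\to E$ on all of $\mathcal{P}_{2}(X)$, combined with the $N$-uniform domination $|E_{N}-E|(\nu)\le 2\sup_{k}|E_{k}(\nu_{0})|+2C\,d_{2}(\nu,\nu_{0})$ whose right-hand side is $\Gamma$-integrable precisely because $\Gamma\in W_{1}(\mathcal{P}_{2}(X))$, lets one apply dominated convergence to get $\int(E_{N}-E)\,\Gamma\to0$. Summing the two bounds completes the argument.

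The point I expect to matter most is recognising that it is the convergence of the laws $\Gamma_{N}$ in the \emph{Wasserstein} $W_{1}$-metric, rather than their mere weak convergence in $\mathcal{P}(\mathcal{P}(X))$, that is used essentially: it is exactly what supplies both the Kantorovich--Rubinstein control of the possibly unbounded Lipschitz functional $E$ and a dominating function with finite $\Gamma$-mean. A secondary point to spell out in the write-up is the passage from $X^{N}$ to $\mathcal{P}_{2}(X)$: the symmetry of $E^{(N)}$ is what makes $E^{(N)}/N$ Lipschitz for the normalised metric $d_{(2)}$ on $X^{(N)}$ (equivalently, for $d_{2}$ on empirical measures) with a constant independent of $N$, hence extendable to a Lipschitz functional on all of $\mathcal{P}_{2}(X)$; this is precisely what the first Main Assumption records, so it can simply be cited.
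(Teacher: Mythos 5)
Your proposal is correct and follows essentially the same route as the paper: transport $E^{(N)}/N$ to a uniformly Lipschitz functional on $\mathcal{P}_{2}(X)$ via the empirical-measure isometry, control $\int E_{N}(\Gamma_{N}-\Gamma)$ by Kantorovich--Rubinstein duality together with the $W_{1}$-convergence of the laws, and handle $\int(E_{N}-E)\Gamma$ by dominated convergence using the linear-growth bound $|E_{N}|\leq A+B\,d_{2}(\cdot,\nu_{0})$. The only cosmetic difference is that the paper writes the splitting slightly less explicitly, but the two arguments are the same.
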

\begin{proof}
Recall that the $L^{1}-$Wasserstein distance $d_{1}$ on $\mathcal{P}(Y),$
where $Y=\mathcal{P}_{2}(X),$ admits the following dual representation:
\[
d(\mu,\nu)=\sup_{u\in Lip_{1}}\int u(\mu-\nu)
\]
where $u$ ranges over all Lip-functions on $Y$ with Lip-constant
one. By assumption 
\begin{equation}
d_{1}(\Gamma_{N},\Gamma)\rightarrow0.\label{eq:conv wrt l1 wasser}
\end{equation}
Using the empirical measure $\delta_{N}$ we identify $N^{-1}E^{(N)}$
with a uniformly Lipschitz continuous sequence of functions on $\mathcal{P}(X)$
which by the Main Assumptions point-wise to to $E(\mu).$ First observe
that since $N^{-1}E^{(N)}$ is uniformly Lipschitz continuous we have
that 
\[
\lim_{N\rightarrow0}\int_{\mathcal{P}(X)}N^{-1}E^{(N)}(\Gamma_{N}-\Gamma)=0
\]
using \ref{eq:conv wrt l1 wasser} combined with the dual representation
of the $L^{1}-$Wasserstein distance. Hence, 
\[
\lim_{N\rightarrow\infty}\frac{1}{N}\int_{X^{N}}E^{(N)}\mu^{N}=\lim_{N\rightarrow\infty}\int_{\mathcal{P}(X)}N^{-1}E^{(N)}\Gamma=\int_{\mathcal{P}(X)}E(\mu)\Gamma
\]
as desired (using the dominated convergence theorem in the last step,
which applies thanks to the bound $\left|N^{-1}E^{(N)}\right|\leq A+Bd_{2}$
resulting from the Main Assumptions).
\end{proof}
Next we turn to the asymptotics of the distances, establishing the
following key property:
\begin{prop}
\label{prop:conv of mean dist}Assume that a sequence $\nu_{N}$ of
symmetric probability measures on $X^{N}$ satisfies 
\[
\lim_{N\rightarrow\infty}(\delta_{N})_{*}\nu_{N}=\delta_{\nu}
\]
 in the distance topology in $W_{2}(\mathcal{P}_{2}(X)).$ Then any
sequence $\mu_{N}$ such that $(\delta_{N})_{*}\mu_{N}$ converges
weakly to $\Gamma\in\mathcal{P}(\mathcal{P}(X))$ satisfies 
\[
\liminf_{N\rightarrow\infty}\frac{1}{N}d(\mu_{N},\nu_{N})^{2}\geq\int_{\mathcal{P}(X)}d(\mu,\nu)^{2}\Gamma(\mu)
\]
and equality holds iff $(\delta_{N})_{*}\mu_{N}$ converges to $\Gamma$
in the distance topology in $W_{2}(\mathcal{P}_{2}(X)).$\end{prop}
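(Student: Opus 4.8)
The plan is to transport the whole statement to the outer Wasserstein space $W_{2}(\mathcal{P}_{2}(X))$ via the isometries of Lemma \ref{lem:isometries}, and then obtain it from lower semicontinuity of the Wasserstein distance under weak convergence, using Proposition \ref{prop:wasserstein conv} for the equality case. First I would set $\Gamma_{N}:=(\delta_{N})_{*}\mu_{N}$ and $\Gamma'_{N}:=(\delta_{N})_{*}\nu_{N}$ and use the chain of equalities \ref{eq:distance between symmetric prob measures} to rewrite
\[
\frac{1}{N}d(\mu_{N},\nu_{N})^{2}=d(\Gamma_{N},\Gamma'_{N})^{2},
\]
where $d$ now denotes the $L^{2}$-Wasserstein distance on $\mathcal{P}(\mathcal{P}_{2}(X))$; moreover, since the only coupling of a measure $\Gamma$ with the Dirac mass $\delta_{\nu}$ is the product $\Gamma\otimes\delta_{\nu}$, one has $d(\Gamma,\delta_{\nu})^{2}=\int_{\mathcal{P}(X)}d(\mu,\nu)^{2}\Gamma(\mu)$. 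So the proposition reduces to: if $\Gamma'_{N}\to\delta_{\nu}$ in $W_{2}(\mathcal{P}_{2}(X))$ and $\Gamma_{N}\to\Gamma$ weakly in $\mathcal{P}(\mathcal{P}(X))$, then $\liminf_{N}d(\Gamma_{N},\Gamma'_{N})^{2}\ge d(\Gamma,\delta_{\nu})^{2}$, with equality exactly when $\Gamma_{N}\to\Gamma$ in $W_{2}(\mathcal{P}_{2}(X))$.

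For the inequality I would invoke joint lower semicontinuity of the $L^{2}$-Wasserstein distance with respect to weak convergence: since $\Gamma_{N}\to\Gamma$ weakly and $\Gamma'_{N}\to\delta_{\nu}$ weakly (the latter because it converges in $W_{2}$), one gets $\liminf_{N}d(\Gamma_{N},\Gamma'_{N})\ge d(\Gamma,\delta_{\nu})$. (Spelled out, this is the usual argument: take optimal couplings $\gamma_{N}$ of $\Gamma_{N},\Gamma'_{N}$ on $\mathcal{P}(X)\times\mathcal{P}(X)$, which are tight since their marginals are, extract a weak limit $\gamma$ whose marginals are $\Gamma$ and $\delta_{\nu}$ so that $\gamma=\Gamma\otimes\delta_{\nu}$, and use that $(\mu,\mu')\mapsto d(\mu,\mu')^{2}$ is non-negative and lower semicontinuous for the weak topology on $\mathcal{P}(X)$, so that $\liminf_{N}\int d(\mu,\mu')^{2}\gamma_{N}\ge\int d(\mu,\mu')^{2}\gamma$.) Squaring and using the identity for $d(\Gamma,\delta_{\nu})^{2}$ gives the stated bound.

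For the equality, the ``if'' direction is immediate from the triangle inequality, $|d(\Gamma_{N},\Gamma'_{N})-d(\Gamma,\delta_{\nu})|\le d(\Gamma_{N},\Gamma)+d(\Gamma'_{N},\delta_{\nu})\to0$. For the ``only if'' direction, assume $\liminf_{N}d(\Gamma_{N},\Gamma'_{N})^{2}=d(\Gamma,\delta_{\nu})^{2}$ and pass to a subsequence realising the liminf as a limit; with optimal couplings $\gamma_{N}$ of $\Gamma_{N},\Gamma'_{N}$ as above, Minkowski's inequality in $L^{2}(\gamma_{N})$ applied to $d(\mu,\nu)\le d(\mu,\mu')+d(\mu',\nu)$, together with the fact that the $\mu$- and $\mu'$-marginals of $\gamma_{N}$ are $\Gamma_{N}$ and $\Gamma'_{N}$, gives
\[
\Big(\int_{\mathcal{P}(X)}d(\mu,\nu)^{2}\,\Gamma_{N}(\mu)\Big)^{1/2}\ \le\ d(\Gamma_{N},\Gamma'_{N})+d(\Gamma'_{N},\delta_{\nu}),
\]
whose right-hand side tends to $d(\Gamma,\delta_{\nu})=\big(\int_{\mathcal{P}(X)}d(\mu,\nu)^{2}\Gamma(\mu)\big)^{1/2}$ along the subsequence. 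Since lower semicontinuity also gives $\liminf_{N}\int_{\mathcal{P}(X)}d(\mu,\nu)^{2}\Gamma_{N}(\mu)\ge\int_{\mathcal{P}(X)}d(\mu,\nu)^{2}\Gamma(\mu)$, we conclude that $\int_{\mathcal{P}(X)}d(\mu,\nu)^{2}\Gamma_{N}(\mu)$ converges to $\int_{\mathcal{P}(X)}d(\mu,\nu)^{2}\Gamma(\mu)$, i.e.\ the second moments of $\Gamma_{N}$ about the base point $\nu\in\mathcal{P}_{2}(X)$ converge; combined with $\Gamma_{N}\to\Gamma$ weakly, Proposition \ref{prop:wasserstein conv} then yields convergence in $W_{2}(\mathcal{P}_{2}(X))$. (If the equality is read as $\lim_{N}d(\Gamma_{N},\Gamma'_{N})^{2}=d(\Gamma,\delta_{\nu})^{2}$, the same argument applies to every subsequence and gives convergence of the whole sequence.)

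I expect the ``only if'' half of the equality assertion to be the main obstacle: weak convergence of $\Gamma_{N}$ is strictly weaker than $W_{2}$-convergence, and $d^{2}$ is merely lower semicontinuous --- not continuous --- for the weak topology, so one cannot pass to the limit in the distance directly. What makes it work is that the equality hypothesis furnishes the matching upper bound via Minkowski, squeezing the second moments of $\Gamma_{N}$ so that Proposition \ref{prop:wasserstein conv} can close the gap; the same one-sidedness of lower semicontinuity is exactly what dictates the direction of the basic inequality.
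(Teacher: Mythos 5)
Your proposal is correct and follows essentially the same route as the paper: reduce via the isometries of Lemma \ref{lem:isometries} to the outer Wasserstein space, use the product-coupling identity $d(\Gamma,\delta_{\nu})^{2}=\int d(\mu,\nu)^{2}\Gamma(\mu)$, lower semicontinuity of $d$ for the inequality, the triangle inequality for the ``if'' direction, and Proposition \ref{prop:wasserstein conv} to upgrade weak convergence plus second-moment convergence to $W_{2}$-convergence. Your Minkowski step in the ``only if'' direction is just the triangle inequality $d(\Gamma_{N},\delta_{\nu})\leq d(\Gamma_{N},\Gamma'_{N})+d(\Gamma'_{N},\delta_{\nu})$ spelled out, which is exactly what the paper's terser phrase ``by the previous arguments'' appeals to.
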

\begin{proof}
Consider the isometry 
\[
\delta_{N}:\,(X^{(N)},d_{X^{(N)}})\hookrightarrow(\mathcal{P}(X),d_{W})\,\,\,\,(x_{1},..,x_{N})\mapsto\delta_{N}:=\frac{1}{N}\sum\delta_{x_{i}}
\]
defined in terms of the $L^{2}-$distances. We equip the space $\mathcal{P}(\mathcal{P}(X))$
with the $L^{2}-$Wasserstein (pre-)metric $d$ induced from distance
$d_{W}$ on $\mathcal{P}(X),$ i.e. we consider the subspace $W_{2}(\mathcal{P}(X)).$
By Lemma \ref{lem:isometries}
\[
\frac{1}{N}d(\mu_{N},\nu_{N})^{2}=d((\delta_{N})_{*}\mu_{N},(\delta_{N})_{*}\nu_{N})^{2}
\]
We now first assume that $(\delta_{N})_{*}\mu_{N}$ converges to $\Gamma$
in the $d-$distance topology in $W_{2}(\mathcal{P}_{2}(X)).$ Then
the ``triangle inequality'' for $d$ immediately gives 
\[
\lim_{N\rightarrow\infty}d((\delta_{N})_{*}\mu_{N},(\delta_{N})_{*}\nu_{N})^{2}=d(\Gamma,\delta_{\nu})^{2}.
\]
Next we will use the following simple general fact for the Wasserstein
distance on $\mathcal{P}(Y,d):$
\[
d(\mu,\delta_{y_{0}})^{2}=\int d(y,y_{0})^{2}\mu(y)
\]
which follows from the fact that the only coupling between $\mu$
and $\delta_{y_{0}}$ is the product $\mu\otimes\delta_{y_{0}}.$
Applied to $Y=\mathcal{P}(X)$ this gives 
\[
d((\delta_{N})_{*}\mu_{N},\delta_{\nu})^{2}=\int_{\mathcal{P}(X)}d(\mu,\nu)^{2}\Gamma(\mu)
\]
 which concludes the proof using that $d(\delta_{\mu},\delta_{\nu})=d(\mu,\nu)$
by the general fact above. More generally, if $(\delta_{N})_{*}\mu_{N}$
is only assumed to converge to $\Gamma$ weakly in $\mathcal{P}(\mathcal{P}(X)),$
then the lower semi-continuity of the Wasserstein distance function
wrt the weak topology instead gives 
\[
\liminf_{N\rightarrow\infty}\frac{1}{N}d(\mu_{N},\nu_{N})^{2}\geq\int_{\mathcal{P}(X)}d(\mu,\nu)^{2}\Gamma(\mu)
\]
Finally, if equality holds above, then, by the previous arguments,
\[
\lim_{N\rightarrow\infty}\int_{\mu\in\mathcal{P}(X)}d(\mu,\nu)^{2}(\delta_{N})_{*}\mu_{N}=\int_{\mathcal{P}(X)}d(\mu,\nu)^{2}\Gamma(\mu)
\]
(i.e. the ``second moments of $(\delta_{N})_{*}\mu_{N}$ converge
to the second moments of $\Gamma$) and then it follows from Proposition
\ref{prop:wasserstein conv} that $(\delta_{N})_{*}\mu_{N}$ converges
to $\Gamma$ in the distance topology in $W_{2}(\mathcal{P}(X)).$ 
\end{proof}

\subsubsection{\label{sub:Conclusion-of-the induction}Conclusion of the proof of
Theorem \ref{thm:induction step}}

Without loss of generality we may set $\beta=1.$ We start by observing
that for any fixed $\mu$ in $\mathcal{P}(X)$ we have, by the defining
property of $\mu_{t_{j+1}}^{(N)},$ that 
\[
J_{j+1}^{(N)}(\mu_{t_{j+1}}^{(N)})/N\leq J_{j+1}^{(N)}(\mu^{\otimes N})/N
\]
where the rhs converges, by the propositions above, to $J_{j+1}(\mu)$
as $N\rightarrow\infty,$ where $J_{j+1}(\mu)=\frac{1}{2h}d(\mu,\mu_{j})^{2}+F(\mu).$
In particular, taking $\mu=\mu_{j+1}$ gives 
\begin{equation}
\limsup_{N\rightarrow\infty}J_{j+1}^{(N)}(\mu_{t_{j+1}}^{(N)})/N\leq J_{j+1}(\mu_{j+1})\label{eq:upper bound in induction}
\end{equation}
 where $\mu_{j+1}$ is the unique minimizer of $J_{j+1}.$

Next we consider the lower bound. By the minimizing property of $\mu_{t_{j+1}}^{(N)}$
we have a uniform control on the $d_{2}-$distance: 
\begin{equation}
d_{2}((\delta_{N})_{*}\mu_{t_{j+1}}^{(N)},(\delta_{N})_{*}\mu_{t_{j}}^{(N)})^{2}=\frac{1}{N}d_{2}(\mu_{t_{j+1}}^{(N)},\mu_{t_{j}}^{(N)})^{2}\leq C\label{eq:proof of induction step distance control}
\end{equation}
Indeed, the minimizing property together with the previous bound gives
\[
\frac{1}{\tau}d_{2}((\delta_{N})_{*}\mu_{t_{j+1}}^{(N)},(\delta_{N})_{*}\mu_{t_{j}}^{(N)})^{2}\leq C-\frac{1}{N}F^{(N)}(\mu_{t_{j+1}}^{(N)})
\]
Hence, it is enough to verify that the uniform coercivity property
\ref{eq:coercivity estimate for EN} holds. But this follows the uniform
Lipschitz assumption on $E^{(N)}.$ 

Now, it follows from the induction assumption and the triangle inequality
for $d$ that $\mu_{t_{j+1}}^{(N)}$ satisfies the assumptions of
Lemma \ref{prop:wasserstein conv}. Accordingly, we may, after passing
to a subsequence, assume that $\mu_{N}:=\mu_{t_{j+1}}^{(N)}$ converges
as in Lemma \ref{lem:dhs}, or more precisely that 
\[
(\delta_{N})_{*}\mu_{t_{j+1}}^{(N)}\rightarrow\Gamma
\]
 in $W_{1}(\mathcal{P}(X)),$ for some $\Gamma\in W_{2}(\mathcal{P}(X)).$
It then follows from Propositions \ref{prop:linf for mean entropy},
\ref{prop:conv of mean energy} and \ref{prop:conv of mean dist}
that 
\begin{equation}
\liminf_{N\rightarrow\infty}J_{j+1}^{(N)}(\mu_{t_{j+1}}^{(N)})/N\geq\int d\Gamma(\mu)J_{j+1}(\mu)\label{eq:lower bound in induction}
\end{equation}
Combining the previous lower bound with the upper bound \ref{eq:upper bound in induction}
and using that $\mu_{j+1}$ is the unique minimizer of $J_{j+1}$
then forces $\Gamma=\delta_{\mu_{j+1}}$ and 
\begin{equation}
\lim_{N\rightarrow\infty}J_{j+1}^{(N)}(\mu_{t_{j+1}}^{(N)})/N=J_{j+1}(\mu)\label{eq:conv of J}
\end{equation}
But this means that 
\[
\lim_{N\rightarrow\infty}(\delta_{N})_{*}\mu_{t_{j+1}}^{(N)}=\delta_{\mu_{t_{j+1}}}
\]
 weakly in $\mathcal{P}(X)$ and by the equality \ref{eq:conv of J}
that 
\[
\lim_{N\rightarrow\infty}d(\delta_{N})_{*}\mu_{t_{j+1}}^{(N)},\delta_{\mu_{t_{j+1}}})=d(\delta_{\mu_{t_{j+1}}},\delta_{\mu_{j}}).
\]
 But then it follows from Proposition  \ref{prop:conv of mean dist}
(applied to $\nu=\delta_{\mu_{j}})$ that $(\delta_{N})_{*}\mu_{N}$
converges to $\Gamma$ in the distance topology in $W_{2}(\mathcal{P}(X)),$
as desired.

\subsection{Convergence in the non-discrete setting: proof of Theorem \ref{thm:dynamic intro}}

We first assume that $\lambda\geq0.$ By the Main Assumptions the
limiting free energy functional $F(\mu):=E(\mu)+H(\mu)$ is also Lipschitz
continuous and $\lambda-$convex along generalized geodesics in $\mathcal{P}_{2}(X).$
Indeed, by \ref{eq:Energy of mu as limit of mean energy} $E(\mu)$
is the limit of the mean energy functionals $\mu\mapsto\int_{X^{N}}E^{(N)}/N\mu^{\otimes N}$
which are $\lambda-$convex along generalized geodesics, since $E^{(N)}/N$
is assumed $\lambda-$convex (see Lemma \ref{lem:generalized conv}).
In particular, by Theorem \ref{Theorem:f-p as gradient flow} the
gradient flow $\mu_{t}$ of $F$ emanating from a given $\mu_{0}\in\mathcal{P}(X)$
exists and is uniquely determined in the sense of Theorem \ref{thm:existence of evi}.
We let $\Gamma_{t}:=\delta_{\mu_{t}}$ be the corresponding flow on
$\mathcal{P}_{2}(\mathcal{P}_{2}(X)).$

Consider the fixed time interval $[0,T]$ and fix a small time step
$\tau>0.$ Denote by $\mu^{\tau}(t)$ the discretized minimizing movement
of $F(\mu)$ with time step $\tau$ and set $\Gamma_{t}^{\tau}:=\delta_{\mu_{t}^{\tau}}.$
For any fixed $t\in]0,T[$ we then have, by the triangle inequality, 

\[
d(\Gamma_{N}(t),\Gamma(t))\leq d(\Gamma_{N}(t),\Gamma_{N}^{\tau}(t))+d(\Gamma(t),\Gamma^{\tau}(t))+d(\Gamma_{N}^{\tau}(t),\Gamma^{\tau}(t))
\]
By the isometry property in Lemma \ref{lem:isometries} and the assumed
convexity properties we have, by Theorem \ref{thm:optimal error estimates},
that $d(\Gamma_{N}(t),\Gamma_{N}^{\tau}(t))\leq C\tau$ (uniformly
in $N)$ and $d(\Gamma(t),\Gamma^{\tau}(t))\leq\tau C.$ Moreover,
by Theorem \ref{thm:induction step} $\lim_{N\rightarrow\infty}d(\Gamma_{N}^{\tau}(t),\Gamma^{\tau}(t))=0$
for any fixed $\tau.$ Hence, letting first $N\rightarrow\infty$
and then $\tau\rightarrow0$ gives $\lim_{N\rightarrow\infty}d(\Gamma_{N}(t),\Gamma(t))=0,$
which concludes the proof.

In the case when $\lambda\leq0$ the previous argument still applies
(with the error $O(\tau)$ replaced by $\mathcal{O}(\tau^{1/2})$
according to Remark \ref{rem:error est}.

\subsection{Proof of Theorem \ref{thm:static intro} (the static case)}

By Gibbs variational principle (which follows immediately from Jensen's
lemma) $\mu^{(N)}$ is a minimizer of the mean free energy $F^{(N)}.$
In particular, for any fixed $\mu\in\mathcal{P}(\R^{n})$ 
\begin{equation}
\frac{1}{N}F^{(N)}(\mu^{(N)})\leq\frac{1}{N}F^{(N)}(\mu^{\otimes N})=H(\mu)+\int\frac{1}{N}E^{(N)}\mu^{\otimes N}\leq C,\label{eq:upper bound on mean free in proof static}
\end{equation}
where the last inequality is obtained by taking $\mu$ to be any measure
with compact support. Next observe that by the properness assumption
on $F^{(N)}$ this gives 

\[
\int_{(\R^{n})^{N}}\frac{\text{1}}{N}\sum_{i=1}^{N}|x_{i}|\mu^{(N)}\leq A\frac{1}{N}F^{(N)}(\mu^{(N)})-B\leq AC-B
\]
However, on order to apply Proposition \ref{prop:conv of mean energy}
we would rather need a bound on the $p-$moments for some $p>1:$ 

\begin{equation}
\int_{(\R^{n})^{N}}\frac{\text{1}}{N}\sum_{i=1}^{N}|x_{i}|^{p}\mu^{(N)}\leq C_{p}\label{eq:bound on p th moment}
\end{equation}
But using the convexity assumption this follows automatically from
the bound on the first moments using the following well-consequence
of Borell's lemma \cite{bo} (see \cite[Appendix III]{mil-s}), which
gives a ``Reversed Hölder's inequality'':
\begin{lem}
\label{lem:borell}Let $\mu$ be a log concave measure, i.e. $\mu=e^{-\phi}dx$
for some convex functions $\phi.$ Then, for any $q>p:$ 
\[
\left(\int_{\R^{n}}|x|^{q}\mu\right)^{1/q}\leq C_{p,q,n}\left(\int_{\R^{n}}|x|^{p}\mu\right)^{1/p},
\]
 where the constant $C_{p,q,n}$ is independent of $\mu$
\end{lem}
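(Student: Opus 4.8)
The plan is to deduce this reversed H\"older inequality from the exponential tail decay of log-concave probability measures, which is exactly the content of Borell's lemma. Writing $m_{r}:=\left(\int_{\R^{n}}|x|^{r}\mu\right)^{1/r}$ for $r>0$, the assertion is $m_{q}\le C_{p,q,n}\,m_{p}$. First I would fix the centered Euclidean ball $A:=\{x\in\R^{n}:|x|\le R\}$, which is convex and symmetric (so $-A=A$), and choose $R:=4^{1/p}m_{p}$; Chebyshev's inequality then gives $\mu(\R^{n}\setminus A)\le m_{p}^{p}/R^{p}=1/4$, so that $\theta:=\mu(A)\ge 3/4>1/2$.

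Next I would invoke Borell's lemma in the form: for a log-concave probability measure $\mu$ and a convex symmetric set $A$ with $\mu(A)=\theta>1/2$ one has $\mu(\R^{n}\setminus tA)\le \theta\left((1-\theta)/\theta\right)^{(t+1)/2}$ for every $t\ge 1$. (If one prefers a self-contained argument, this follows from the elementary inclusion $\frac{2}{t+1}(\R^{n}\setminus tA)+\frac{t-1}{t+1}(-A)\subseteq\R^{n}\setminus A$, valid by convexity and $-A=A$, together with the multiplicative form of the Prekopa-Leindler inequality $\mu((1-\lambda)C+\lambda D)\ge\mu(C)^{1-\lambda}\mu(D)^{\lambda}$, on using $\mu(-A)=\mu(A)=\theta$.) Since here $\theta\ge 3/4$, this produces an exponential, $\mu$-independent tail bound $\mu(\{|x|>tR\})\le c_{0}e^{-at}$ for all $t\ge 1$, with absolute constants $c_{0},a>0$.

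Finally I would estimate the $q$-th moment by the layer-cake formula $\int_{\R^{n}}|x|^{q}\mu=\int_{0}^{\infty}q s^{q-1}\mu(\{|x|>s\})\,ds$, splitting the integral at $s=R$: the contribution of $[0,R]$ is at most $R^{q}$, and in the contribution of $[R,\infty)$ the substitution $s=tR$ together with the tail bound gives $R^{q}\int_{1}^{\infty}q t^{q-1}c_{0}e^{-at}\,dt=:R^{q}c_{q}$ with $c_{q}<\infty$ depending only on $q$ and $a$. Hence $\int|x|^{q}\mu\le R^{q}(1+c_{q})$, and taking $q$-th roots while recalling $R=4^{1/p}m_{p}$ yields $m_{q}\le 4^{1/p}(1+c_{q})^{1/q}m_{p}$, which is the claim with $C_{p,q,n}:=4^{1/p}(1+c_{q})^{1/q}$ --- in fact independent of $n$. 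The only substantial ingredient is Borell's lemma itself, i.e. the exponential concentration of log-concave measures; the two points requiring care are that $A$ must be symmetric so that $\mu(-A)=\mu(A)$ in the Prekopa-Leindler step (hence the choice of a \emph{centered} ball), and that $R$ be taken large relative to $m_{p}$ so as to force $\mu(A)>1/2$, which is precisely where Chebyshev's inequality enters.
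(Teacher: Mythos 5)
Your proof is correct, and it is exactly the standard argument behind the result that the paper states without proof and simply attributes to Borell \cite{bo} and \cite[Appendix III]{mil-s}: Chebyshev's inequality to force the centered ball to have measure larger than $1/2$, the convexity inclusion combined with the Prekopa--Leindler inequality to obtain the $\mu$-independent exponential tail, and the layer-cake formula to convert the tail bound into the moment comparison. The only point worth recording is that the lemma implicitly requires $\mu$ to be a \emph{probability} measure (otherwise the two sides scale differently under $\mu\mapsto c\mu$), which is how it is applied in the paper and which your Chebyshev step silently uses.
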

To prove the bound \ref{eq:bound on p th moment} we first observe
that for any $p$

\[
\int_{(\R^{n})^{N}}\frac{\text{1}}{N}\sum_{i=1}^{N}|x_{i}|^{p}\mu^{(N)}=\int_{\R^{n}}|x|^{p}(\mu^{(N)})_{1},
\]
 where $(\mu^{(N)})_{1}$ denotes the ``first marginal'' of $\mu^{(N)},$
i.e. its push-forward under the natural projection $(\R^{n})^{N}\rightarrow\R^{n}$
onto the first factor. But, by assumption $\mu^{(N)}$ is log concave
and hence, by the Prekopa inequality \cite{v1}, so is its first marginal
$(\mu^{(N)})_{1}.$ Applying the previous lemma thus gives 
\[
\int_{\R^{n}}|x|^{p}(\mu^{(N)})_{1}\leq C_{p,n}
\]
 for any $p\geq1$ where $C_{p,n}$ is independent of $N.$ In particular,
by Proposition \ref{prop:wasserstein conv} combined with the isometric
embedding in Lemma \ref{lem:isometries} we get that $(\delta_{N})_{*}\mu^{(N)}:=\Gamma^{(N)}$
converges to some $\Gamma$ in the distance topology in $W_{p}(\mathcal{P}(X))$
for any $p\in[1,\infty[.$ Applying this to $p=1$ and invoking Proposition
\ref{prop:linf for mean entropy} thus gives the convergence of the
mean energy: 
\[
\lim_{N\rightarrow\infty}\frac{1}{N}\int_{X^{N}}E^{(N)}(\mu^{N})=\int_{\mathcal{P}(X)}E(\mu)\Gamma
\]
We then deduce, using the asymptotics of the entropy in Prop \ref{prop:linf for mean entropy},
precisely as in the proof of Theorem \ref{thm:induction step}, that 

\[
\int_{\mathcal{P}(X)}F(\mu)\Gamma(\mu)\leq\liminf_{N\rightarrow\infty}\frac{1}{N}F^{(N)}(\mu^{(N)})\leq\frac{1}{N}F^{(N)}(\mu)
\]
Next we get, using the upper bound \ref{eq:upper bound on mean free in proof static}
and Proposition \ref{prop:linf for mean entropy} applied to $\mu_{N}=\mu^{\otimes N},$
that 
\[
\int_{\mathcal{P}(X)}F(\mu)\Gamma\leq\liminf_{N\rightarrow\infty}\frac{1}{N}F^{(N)}(\mu^{(N)})=F(\mu_{*})=\inf_{W_{2}(X)}F(\mu)
\]
Hence, it must be that $\Gamma=\delta_{\mu_{*}}$ which concludes
the proof of the convergence assuming the existence and uniqueness
of the minimizer $\mu_{*}.$ In fact, the existence of $\mu_{*}$
also follows from the previous argument: indeed, since, by well-known
properties of the entropy \cite{m-s}, we have $H^{(N)}(\mu)/N\geq H(\mu_{1}^{(N)})$
the previous argument gives that any limit point $\mu_{*}$ of the
sequence $\mu_{1}^{(N)}$ in $\mathcal{P}(X)$ (which exists by tightness
and, as explained above, is in $W_{p}(X)$ for any $p)$ satisfies
\[
F(\mu_{*})\leq\liminf_{N\rightarrow\infty}\frac{1}{N}F^{(N)}(\mu^{(N)})=F(\mu_{*})=\inf_{W_{2}(X)}F(\mu)
\]
and hence minimizes $F$ on $W_{2}(X).$ Finally, since $\mu_{1}^{(N)}$
is log concave so is the limit $\mu_{*}.$ As for the convergence
\ref{eq:conv of free enery intro} it follows immediately from the
formula $-\frac{1}{\beta}\log Z_{N}=F_{\beta}^{(N)}(\mu^{(N)})$ and
the proof of the above does not use neither existence nor uniqueness
of a minimizer of $F.$

\section{\label{sec:Permanental-processes-and}Permanental processes and toric
Kähler-Einstein metrics}

\subsection{Permanental processes: setup}

Let $P$ be a convex body in $\R^{n}$ containing zero in its interior
and denote by $\nu_{P}$ the corresponding uniform probability measure
on $P,$ i.e. $P=1_{P}d\lambda/V(P),$ where $d\lambda$ denotes Lebesgue
measure and $V(P)$ is the Euclidean volume of $P.$ Setting $P_{k}:=P\cap(\Z/k)^{n},$
we let $N_{k}$ be the number of points in $P_{k}$ and fix an auxiliary
ordering $p_{1},...,p_{N_{k}}$ of the $N_{k}$ elements of $P_{k}.$
Given a configuration $(x_{1},...,x_{N_{k}})$ of points on $X:=\R^{n}$
we set 
\begin{equation}
E^{(N_{k})}(x_{1},...,x_{N_{k}}):=\frac{1}{k}\log\sum_{\sigma\in S_{N_{k}}}e^{k(x_{1}\cdot p_{\sigma(1)}+\cdots+x_{N}\cdot p_{\sigma(N_{k})})},\label{eq:permanental energy text}
\end{equation}
which, as explained in the introduction of the paper, can be written
as the scaled logarithm of a permanent. To simplify the notation we
will often drop the subscript $k$ and simply write $N_{k}=N,$ since
anyway $N\rightarrow\infty$ iff $k\rightarrow\infty.$ We will denote
by $C(\mu,\nu)$  the Monge-Kantorovich optimal cost for transport
between the probability measures $\mu$ and $\nu,$ with respect to
the standard symmetric quadratic cost function $c(x,p)=-x\cdot p:$
\begin{equation}
C(\mu,\nu):=\inf_{\gamma}-\int_{X\times X}x\cdot p\gamma,\label{eq:cost to convex body}
\end{equation}
 where the $\gamma$ ranges over all couplings (transport plans) between
$\mu$ and $\nu.$ 
\begin{prop}
\label{prop:main assump in perman case}The Main Assumptions for $E^{(N)}$
are satisfied with $\lambda=0$ and $E(\mu)=-C(\mu,\nu_{P}),$ Equivalently,
formulated in terms of the Wasserstein $L^{2}-$distance 
\begin{equation}
E(\mu)=-\frac{1}{2}d_{W_{2}}(\mu,\nu_{P})^{2}+\frac{1}{2}\int x^{2}d\mu+c_{P},\,\,\,\,\,c_{P}:=\frac{1}{2}\int p^{2}\nu_{P}\label{eq:cost as distance}
\end{equation}
In particular, $-C(\cdot,\nu_{P})$ is convex along generalized geodesics.\end{prop}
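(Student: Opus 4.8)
The plan is to verify the three Main Assumptions of Section~\ref{sub:The-Main-Assumptions} directly --- this is in substance the content of \cite{b} --- and then to read off the Wasserstein reformulation (\ref{eq:cost as distance}). Write $f:=kE^{(N_{k})}=\log\sum_{\sigma\in S_{N_{k}}}e^{\ell_{\sigma}}$, where $\ell_{\sigma}(x_{1},\dots,x_{N_{k}}):=k\sum_{i}x_{i}\cdot p_{\sigma(i)}$ is a \emph{linear} function of $(x_{1},\dots,x_{N_{k}})\in\R^{nN_{k}}$; thus $E^{(N_{k})}$ is, up to the factor $1/k$, a log-sum-exp of linear functions. From $\nabla f=\sum_{\sigma}w_{\sigma}\nabla\ell_{\sigma}$, with weights $w_{\sigma}=e^{\ell_{\sigma}}/\sum_{\sigma'}e^{\ell_{\sigma'}}$ and $\nabla_{x_{i}}\ell_{\sigma}=k\,p_{\sigma(i)}$, we get $\nabla_{x_{i}}E^{(N_{k})}=\sum_{\sigma}w_{\sigma}p_{\sigma(i)}$, a convex combination of points of $P$; hence $|\nabla_{x_{i}}E^{(N_{k})}|\le\sup_{p\in P}|p|$, uniformly in $k$ and in the base point, which is the uniform Lipschitz bound (Assumption~(1)). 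Likewise $\nabla^{2}f$ is the covariance matrix of $\nabla\ell_{\sigma}$ under the weights $(w_{\sigma})$, hence $\ge0$, so $E^{(N_{k})}$ is convex, i.e. $\lambda$-convex with $\lambda=0$ (Assumption~(3)).

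For the pointwise limit (Assumption~(2)) I would use Lemma~\ref{lem:equiv energy as} to reduce to a fixed $\mu\in\mathcal{P}(X)$ of compact support, approximate it by $N_{k}$-point configurations with empirical measures $\delta_{N_{k}}\to\mu$ in $W_{2}$ whose supports lie in a fixed ball, and (invoking the uniform Lipschitz bound to pass between $\mu$ and the $\delta_{N_{k}}$) show $\frac1{N_{k}}E^{(N_{k})}(x_{1},\dots,x_{N_{k}})\to-C(\mu,\nu_{P})$. The sandwich $\max_{\sigma}e^{\ell_{\sigma}}\le\sum_{\sigma}e^{\ell_{\sigma}}\le N_{k}!\,\max_{\sigma}e^{\ell_{\sigma}}$, after applying $\frac1{kN_{k}}\log$, gives
\[
\frac1{N_{k}}\max_{\sigma}\sum_{i}x_{i}\cdot p_{\sigma(i)}\ \le\ \frac1{N_{k}}E^{(N_{k})}\ \le\ \frac1{N_{k}}\max_{\sigma}\sum_{i}x_{i}\cdot p_{\sigma(i)}+\frac{\log N_{k}!}{kN_{k}},
\]
and the error term is at most $k^{-1}\log N_{k}\to0$ because $N_{k}=k^{n}V(P)/n!+o(k^{n})$. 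By the Birkhoff--von Neumann theorem (as in the first part of Lemma~\ref{lem:isometries}) the leading term equals $-C(\delta_{N_{k}},\nu_{P_{k}})$, where $\nu_{P_{k}}:=\frac1{N_{k}}\sum_{j}\delta_{p_{j}}$ is the uniform measure on $P\cap(\Z/k)^{n}$. All the measures now live on one fixed compact set, where the Monge--Kantorovich cost $C(\cdot,\cdot)$ for the continuous cost $-x\cdot p$ is jointly weakly continuous \cite{v1}; since $\delta_{N_{k}}\to\mu$ by construction and $\nu_{P_{k}}\to\nu_{P}$ weakly (Riemann sums), $C(\delta_{N_{k}},\nu_{P_{k}})\to C(\mu,\nu_{P})$. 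Hence $\frac1{N_{k}}E^{(N_{k})}\to-C(\mu,\nu_{P})=:E(\mu)$, and uniqueness of the limit is automatic.

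It remains to derive (\ref{eq:cost as distance}) and the stated generalized convexity. For $\mu\in\mathcal{P}_{2}(X)$ and any coupling $\gamma$ of $\mu$ and $\nu_{P}$, expanding $|x-p|^{2}=|x|^{2}+|p|^{2}-2x\cdot p$ and integrating gives $\int|x-p|^{2}\gamma=\int x^{2}d\mu+\int p^{2}d\nu_{P}-2\int x\cdot p\,\gamma$; taking the infimum over $\gamma$ and recalling (\ref{eq:cost to convex body}) yields $d_{W_{2}}(\mu,\nu_{P})^{2}=\int x^{2}d\mu+2c_{P}+2C(\mu,\nu_{P})$, which is (\ref{eq:cost as distance}). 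For the convexity of $E=-C(\cdot,\nu_{P})$ along generalized geodesics, note that the mean-energy functionals $\mu\mapsto\frac1{N}\int E^{(N)}\mu^{\otimes N}$ are $0$-convex along generalized geodesics by Lemma~\ref{lem:generalized conv} (each $E^{(N)}/N$ being convex on $\R^{nN}$), they converge pointwise to $E$ by (\ref{eq:Energy of mu as limit of mean energy}), and $E$ is $W_{2}$-Lipschitz, hence continuous along any generalized geodesic; so the convexity inequality passes to the limit.

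I expect the only genuinely substantive step to be the permanent asymptotics in the second paragraph --- that the scaled log-permanent agrees to leading order with the optimal assignment cost, and that this converges to the continuum optimal transport cost $C(\mu,\nu_{P})$. Once $\mu$ is reduced to compact support via Lemma~\ref{lem:equiv energy as} all the measures stay in a fixed compact set and this convergence is a routine application of weak continuity of optimal transport; the remaining points are the elementary log-sum-exp Lipschitz/convexity bookkeeping and the algebraic identity (\ref{eq:cost as distance}).
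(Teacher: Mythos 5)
Your proposal is correct, and for the Lipschitz bound and the convexity on $\R^{nN}$ it is exactly the paper's argument: $E^{(N_k)}$ is $k^{-1}$ times a log-sum-exp of linear functions, so its gradient in each $x_i$ is a convex combination of points of $P$ (hence uniformly bounded) and its Hessian is a covariance matrix (hence $\ge0$); likewise your derivation of \eqref{eq:cost as distance} by expanding $|x-p|^2$ and your passage of generalized-geodesic convexity from the mean energies $\mu\mapsto\frac1N\int E^{(N)}\mu^{\otimes N}$ (Lemma \ref{lem:generalized conv}) to the pointwise limit $E$ is the same mechanism the paper uses (it appears explicitly in the proof of Theorem \ref{thm:dynamic intro}). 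The one place you genuinely diverge is the pointwise convergence $E^{(N_k)}/N_k\to-C(\mu,\nu_P)$: the paper simply cites \cite{b} (reducing to compactly supported $\mu$ via Lemma \ref{lem:equiv energy as}), whereas you supply a self-contained proof by sandwiching the permanent between its largest term and $N_k!$ times it, identifying the leading term with the discrete assignment cost via Birkhoff--von Neumann, and invoking stability of optimal transport for the continuous bounded cost $-x\cdot p$ on a fixed compact set together with $\nu_{P_k}\to\nu_P$. That argument is sound --- the error $\frac{\log N_k!}{kN_k}\le\frac{\log N_k}{k}=O(k^{-1}\log k)$ indeed vanishes --- and it has the merit of making the proposition independent of \cite{b}; it is in effect the ``tropicalization'' of the permanent that the paper only alludes to in Section \ref{sub:The-tropical-limit}. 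What it does not recover is the finer information in \cite{b} (e.g.\ rates), but none of that is needed here.
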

\begin{proof}
This follows from the results in \cite{b}. In fact, the first and
second point follows immediately from basic fact that if $\phi_{s}$
is a family of smooth convex functions on $\R^{n}$ and $\nu$ a probability
measure on the parameter space, then $\phi:=\log\int e^{\phi_{s}}\nu(s)$
is also convex and $\nabla\phi$ is contained in the convex support
of $\{\nabla\phi_{s}\},$ which in the present case is contained in
$kP.$ Hence, $\nabla_{x_{i}}E^{(N)}\in P$ which is uniformly bounded,
since $P$ is a convex body and in particular bounded. Finally, the
convergence of $E^{(N)}$ was shown in \cite{b} for $\mu$ with compact
support (which is enough by Lemma \ref{lem:equiv energy as}). The
convexity of $-C(\cdot,\nu_{P})$ then follows from Lemma \ref{lem:generalized conv}.
Equivalently, this means that $-\frac{1}{2}d_{W_{2}}(\mu,\nu_{P})^{2}$
is $-1-$convex. In fact, as shown in \cite{a-g-s} using a different
argument $-\frac{1}{2}d_{W_{2}}(\cdot,\nu)^{2}$ is $-1-$convex for
any fixed $\nu\in\mathcal{P}_{2}(\R^{n}).$
\end{proof}
Next, we recall that the \emph{Monge-Ampère measure} $MA(\phi)$ of
a convex function $\phi$ on $\R^{n}$ is defined by the property
that, for a given Borel set $E,$ 
\[
\int_{E}MA(\phi):=\int_{(\partial\phi)(E)}d\lambda,
\]
 where $d\lambda$ denotes Lebesgue measure and $\partial\phi$ denotes
the subgradient of $\phi$ (which defines a multivalued map from $\R^{n}$
to $\R^{n}).$ In particular, if $\phi\in C_{loc}^{2},$ then 
\[
MA(\phi)=\det(\partial^{2}\phi)dx,
\]
 where $\partial^{2}\phi$ denotes the Hessian matrix of $\phi.$
We will denote by $\mathcal{C}_{P}$ the space of all convex functions
$\phi$ on $\R^{n}$ whose subgradient $\partial\phi$ satisfies 
\[
(\partial\phi)(\R^{n})\subset P
\]
and we will say that $\phi$ is \emph{normalized }if $\phi(0)=0.$
By the convexity of $\phi$ the gradient condition above equivalently
means that $\phi$ grows as most as the support function $\phi_{P}$
of $P,$ where $\phi_{P}(x):=\sup_{p\in P}p\cdot x$

By Brenier's theorem \cite{br}, given $\mu=\rho dx$ in $\mathcal{P}_{2}(\R^{n})$
there exists a unique normalized $\phi\in\mathcal{C}_{P}$ such that
\begin{equation}
MA(\phi)=\rho dx,\label{eq:ma eq in text}
\end{equation}
 which equivalently means that the corresponding $L^{\infty}-$map
$\nabla\phi$ from $\R^{n}$ to $P$ satisfies 
\[
(\nabla\phi)_{*}\mu=\nu_{P}
\]
Given the previous proposition we can use the differentiability result
in \cite{a-g-s} for the Wasserstein $L^{2}-$distance to get the
following
\begin{lem}
\label{lem:The-minimal-subdiffer is ma}The minimal subdifferential
of $-C(\cdot,\nu_{P})$ on the subspace $\mathcal{P}_{2,ac}(\R^{n})$
of all probability measures in $\mathcal{P}_{2}(\R^{n})$ which are
absolutely continuous wrt $dx,$ may, at a given point $\rho dx,$
be represented by the $L^{\infty}-$vector field $\nabla\phi,$ where
$\phi$ is the unique normalized solution in $\mathcal{C}_{P}$ to
the equation \ref{eq:ma eq in text}.\end{lem}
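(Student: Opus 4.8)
The plan is to produce $\nabla\phi$ as a subgradient of $-C(\cdot,\nu_P)$ at $\mu:=\rho\,dx$ and then to check it is the minimal one by showing it lies in the tangent space. By Proposition \ref{prop:main assump in perman case} the functional $E:=-C(\cdot,\nu_P)$ is $\lambda$-convex with $\lambda=0$ along generalized geodesics and Lipschitz on $\mathcal P_2(\R^n)$, so the subdifferential calculus of \cite{a-g-s} recalled in Section \ref{sub:Gradient-flows-on} applies: at $\mu$ the slope $|\partial E|(\mu)$ is finite, $\partial E(\mu)\subset L^2(\mu)$ is a nonempty closed convex set, its unique element of minimal $L^2(\mu)$-norm is $\partial^0E(\mu)$, and by \cite{a-g-s} $\partial^0E(\mu)$ is also the unique element of $\partial E(\mu)$ lying in the tangent space $\mathrm{Tan}_\mu\mathcal P_2(\R^n):=\overline{\{\nabla\varphi:\varphi\in C_c^\infty(\R^n)\}}^{\,L^2(\mu)}$. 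Recall also, from Brenier's theorem (Remark \ref{rem:transport} and the discussion around equation \ref{eq:ma eq in text}), that the normalized $\phi\in\mathcal C_P$ with $MA(\phi)=\rho\,dx$ is the convex potential of the optimal transport map $T^{\nu_P}_\mu=\nabla\phi$ from $\mu$ to $\nu_P$, both for the quadratic cost and for $c(x,p)=-x\cdot p$; since $\partial\phi\subset P$ the function $\phi$ is globally Lipschitz and $\nabla\phi\in L^\infty(\mu)\subset L^2(\mu)$.

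\emph{Membership.} I would verify that for every $\nu\in\mathcal P_2(\R^n)$,
\[
-C(\nu,\nu_P)\ \geq\ -C(\mu,\nu_P)+\bigl\langle\nabla\phi,\ T^\nu_\mu-\mathrm{id}\bigr\rangle_{L^2(\mu)},
\]
which is precisely the defining inequality of $\partial E(\mu)$ (with $\lambda=0$); here $T^\nu_\mu$ is the Brenier map from $\mu$ to $\nu$, which exists since $\mu\ll dx$. By Kantorovich duality for the cost $-x\cdot p$ (see equation \ref{eq:cost to convex body}) one has $-C(\sigma,\nu_P)=\sup_\gamma\int x\cdot p\,d\gamma$, the supremum over couplings of $\sigma$ and $\nu_P$. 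For $\sigma=\mu$ the Knott--Smith/Brenier optimality criterion shows that the plan $(\mathrm{id}\times\nabla\phi)_\ast\mu$ is optimal, so $-C(\mu,\nu_P)=\int x\cdot\nabla\phi(x)\,d\mu(x)$. For $\sigma=\nu$ I would test against the coupling $\gamma:=(T^\nu_\mu\times\nabla\phi)_\ast\mu$, whose marginals are $(T^\nu_\mu)_\ast\mu=\nu$ and $(\nabla\phi)_\ast\mu=\nu_P$, obtaining $-C(\nu,\nu_P)\geq\int T^\nu_\mu(x)\cdot\nabla\phi(x)\,d\mu(x)$. Since the right-hand side equals $\int x\cdot\nabla\phi\,d\mu+\langle\nabla\phi,\,T^\nu_\mu-\mathrm{id}\rangle_{L^2(\mu)}$, the displayed inequality follows, so $\nabla\phi\in\partial E(\mu)$. (By Cauchy--Schwarz and $\|T^\nu_\mu-\mathrm{id}\|_{L^2(\mu)}=d_2(\mu,\nu)$ this inequality also gives $|\partial E|(\mu)\leq\|\nabla\phi\|_{L^2(\mu)}$.)

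\emph{Minimality.} In view of the characterization in the first paragraph it remains to show $\nabla\phi\in\mathrm{Tan}_\mu\mathcal P_2(\R^n)$. Mollify $\phi$: $\phi_\varepsilon:=\phi\ast\eta_\varepsilon\in C^\infty$ is convex, $\nabla\phi_\varepsilon(x)\in P$ (since $P$ is convex and $\nabla\phi_\varepsilon$ averages values of $\nabla\phi$), and $\nabla\phi_\varepsilon\to\nabla\phi$ $dx$-a.e., hence $\mu$-a.e. and so, by dominated convergence, in $L^2(\mu)$. Cutting off, $\varphi_{\varepsilon,k}:=\phi_\varepsilon\chi_k\in C_c^\infty(\R^n)$ with $\chi_k(\cdot):=\chi(\cdot/k)$, $\chi\in C_c^\infty$, $\chi\equiv1$ near $0$; then $\nabla\varphi_{\varepsilon,k}=\chi_k\nabla\phi_\varepsilon+\phi_\varepsilon\nabla\chi_k\to\nabla\phi_\varepsilon$ in $L^2(\mu)$ as $k\to\infty$, because the error term $\phi_\varepsilon\nabla\chi_k$ is bounded (using $|\phi_\varepsilon(x)|\leq C(1+|x|)$ and $|\nabla\chi_k|\leq C/k$) and supported on $\{|x|\geq k\}$, of vanishing $\mu$-mass. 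A diagonal extraction gives $\nabla\phi\in\mathrm{Tan}_\mu\mathcal P_2(\R^n)$, whence $\nabla\phi=\partial^0E(\mu)$, which is the assertion. (Alternatively one may combine equation \ref{eq:cost as distance} with the subdifferential of $\tfrac12 d_{W_2}(\cdot,\nu_P)^2$ computed in \cite{a-g-s}, at the cost of a sum rule for the extra quadratic term.)

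The point requiring the most care is the identification $-C(\mu,\nu_P)=\int x\cdot\nabla\phi\,d\mu$, i.e. the optimality of the Brenier plan for the non-quadratic cost $-x\cdot p$, which is exactly Brenier's theorem as used in \cite{b}, together with the appeal to \cite{a-g-s} for the fact that the minimal subdifferential is the unique subgradient lying in $\mathrm{Tan}_\mu\mathcal P_2(\R^n)$. All the integrals above are finite because $\nabla\phi$ takes values in the bounded set $P$, $\nu_P$ has bounded support, and $\mu$ has finite second moment.
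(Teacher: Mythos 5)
Your route is genuinely different from the paper's: the paper reduces the statement, via formula \ref{eq:cost as distance}, to the subdifferential of $\frac{1}{2}d_{W_{2}}(\cdot,\nu_{P})^{2}$ computed in Theorem 10.4.12 of \cite{a-g-s}, together with the fact that the barycentric projection of a plan induced by a map returns the map; you instead verify the subdifferential inequality directly by Kantorovich duality. Your membership argument (testing $-C(\nu,\nu_{P})$ against the competitor coupling $(T_{\mu}^{\nu}\times\nabla\phi)_{*}\mu$ and using optimality of $(\mathrm{id}\times\nabla\phi)_{*}\mu$ for $\mu$ itself) is correct and complete, and so is the mollification/cut-off argument showing $\nabla\phi\in\mathrm{Tan}_{\mu}\mathcal{P}_{2}(\R^{n})$.

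The gap is in the minimality step. The principle you invoke, namely that $\partial^{0}E(\mu)$ is the \emph{unique} element of $\partial E(\mu)$ lying in the tangent space, is not a valid general fact for $\lambda$-convex functionals: what \cite{a-g-s} provides is that the minimal selection \emph{belongs} to $\mathrm{Tan}_{\mu}\mathcal{P}_{2}(\R^{n})$, but $\partial E(\mu)\cap\mathrm{Tan}_{\mu}\mathcal{P}_{2}(\R^{n})$ may be a large convex set. For example $G(\mu):=d_{2}(\mu,\mu_{0})$ is Lipschitz and convex along generalized geodesics on $\mathcal{P}_{2}(\R)$, yet at an absolutely continuous $\mu_{0}$ its subdifferential is the whole unit ball of $\mathrm{Tan}_{\mu_{0}}=L^{2}(\mu_{0})$ while $\partial^{0}G(\mu_{0})=0$; this is the Hilbertian phenomenon $\partial|\cdot|(0)=[-1,1]$. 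So knowing that $\nabla\phi$ is a subgradient lying in the tangent space does not yet identify it as the element of minimal norm, and your Cauchy--Schwarz remark only gives $|\partial E|(\mu)\leq\|\nabla\phi\|_{L^{2}(\mu)}$, which is the wrong direction. To close the argument you need the matching bound $|\partial E|(\mu)\geq\|\nabla\phi\|_{L^{2}(\mu)}$; then $\nabla\phi$ is a subgradient whose norm equals the slope and is therefore the (unique) minimal selection. This can be obtained by testing the slope with $\nu_{t}:=(\mathrm{id}-t\nabla\phi)_{*}\mu$: on one hand $d_{2}(\mu,\nu_{t})\leq t\|\nabla\phi\|_{L^{2}(\mu)}$, and on the other hand, using $(\phi,\phi^{*})$ as admissible Kantorovich potentials for the pair $(\nu_{t},\nu_{P})$ and their optimality for $(\mu,\nu_{P})$, one gets $E(\mu)-E(\nu_{t})\geq\int\left(\phi(x)-\phi(x-t\nabla\phi(x))\right)\mu\geq t\|\nabla\phi\|_{L^{2}(\mu)}^{2}+o(t)$ by convexity of $\phi$ and dominated convergence. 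Alternatively, one can use that the uniqueness of the optimal plan between $\mu\in\mathcal{P}_{2,ac}$ and $\nu_{P}$ makes $\frac{1}{2}d_{W_{2}}(\cdot,\nu_{P})^{2}$ differentiable there, which is in effect what the paper's appeal to Theorem 10.4.12 of \cite{a-g-s} supplies.
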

\begin{proof}
Given formula \ref{eq:cost as distance} this follows immediately
from Theorem 10.4.12 in \cite{a-g-s} and the fact that if $\mathcal{P}_{2,ac}(\R^{n}),$
then Brenier's theorem  gives that the optimal transport plan (coupling)
from $\R^{n}$ to $P$ realizing the infimum defining $d_{W_{2}}(\mu,\nu_{P})^{2}$
is given by the $L^{\infty}-$map $\nabla\phi,$ where $\phi$ solves
the equation \ref{eq:ma eq in text}. Since the barycentric projection
appearing in Theorem 10.4.12 in \cite{a-g-s} for the transport plan
defined by a transport map gives back the transport map (see \cite[Th, 12.4.4]{a-g-s}
this concludes the proof.  See also \cite{b-b} for a direct variational
proof of Brenier's theorem which can be seen as the real analogue
of the variational approach to\emph{ complex }Monge-Ampère equations
in \cite{bbgz}.
\end{proof}

\subsection{Existence of the gradient flow for $F_{\beta}(\mu)$}

Given $\beta\in]0,\infty]$ we set $F_{\beta}(\mu):=-C(\mu,\nu_{P})+H(\mu)/\beta$ 
\begin{prop}
The gradient flow $\mu_{t}$ of $F_{\beta}$ on $\mathcal{P}_{2}(\R^{n})$
emanating from a given $\mu_{0}$ exists for any $\beta\in]0,\infty].$
Moreover, for $\beta<\infty$ we have that $\mu_{t}=\rho_{t}(x)dx$
where $\rho_{t}$ has finite Boltzmann entropy and Fisher information
and $\rho(x,t):=\rho_{t}(x)$ satisfies the following equation in
the sense of distributions on $\R^{n}\times]0,\infty[$ 
\begin{equation}
\frac{d\rho_{t}}{dt}=\frac{1}{\beta}\Delta\rho_{t}+\nabla\cdot(\rho_{t}\nabla\phi_{t}),\label{eq:evolution eq for perman limit in prop}
\end{equation}
where $\phi_{t}$ is the unique normalized solution in $\mathcal{C}_{P}$
to the equation \ref{eq:ma eq in text} and $\nabla\phi_{t}$ defines
a vector field with coefficients in $L_{loc}^{\infty}.$ \end{prop}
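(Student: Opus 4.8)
The plan is to obtain this proposition as a direct consequence of Theorem \ref{Theorem:f-p as gradient flow}, after checking its hypotheses for the macroscopic energy $E:=-C(\cdot,\nu_{P})$, and then to identify the resulting drift vector field by means of Lemma \ref{lem:The-minimal-subdiffer is ma}.

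First I would verify that $E(\mu):=-C(\mu,\nu_{P})$ satisfies all the assumptions of Theorem \ref{Theorem:f-p as gradient flow}. By Proposition \ref{prop:main assump in perman case} the functional $E$ is Lipschitz continuous on $\mathcal{P}_{2}(\R^{n})$ --- in particular real-valued and lower semi-continuous --- and convex, i.e. $\lambda$-convex with $\lambda=0$, along generalized geodesics; the coercivity bound \ref{eq:coercivity type condition} is then automatic from the Lipschitz continuity of $E$. Applying the last bullet of Theorem \ref{Theorem:f-p as gradient flow} (with $\lambda=0\geq0$, so that the refinement in Remark \ref{rem:error est} is not needed) gives, for every $\beta\in\left]0,\infty\right]$ and every $\mu_{0}\in\mathcal{P}_{2}(\R^{n})$, a unique EVI-gradient flow $\mu_{t}$ of $F_{\beta}=E+H/\beta$ emanating from $\mu_{0}$ in the sense of Theorem \ref{thm:existence of evi}; and for $\beta<\infty$ it furnishes $\mu_{t}=\rho_{t}\,dx$ with $\rho_{t}$ of finite Boltzmann entropy and finite Fisher information, together with the distributional identity
\[
\frac{\partial\rho_{t}}{\partial t}=\frac{1}{\beta}\Delta\rho_{t}+\nabla\cdot(\rho_{t}v_{t}),\qquad v_{t}=\partial^{0}E(\mu_{t}),
\]
on $\R^{n}\times\left]0,\infty\right[$, where $\partial^{0}E$ denotes the minimal subdifferential of $E$.

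It then remains only to identify $v_{t}$. For $t>0$ and $\beta<\infty$ the measure $\mu_{t}=\rho_{t}\,dx$ has finite entropy, hence lies in $\mathcal{P}_{2,ac}(\R^{n})$, so Lemma \ref{lem:The-minimal-subdiffer is ma} applies and yields $v_{t}=\partial^{0}E(\mu_{t})=\nabla\phi_{t}$, where $\phi_{t}$ is the unique normalized element of $\mathcal{C}_{P}$ solving $MA(\phi_{t})=\rho_{t}\,dx$ (equivalently $(\nabla\phi_{t})_{*}\mu_{t}=\nu_{P}$). Since $\phi_{t}\in\mathcal{C}_{P}$ forces $(\partial\phi_{t})(\R^{n})\subset P$ with $P$ bounded, the field $\nabla\phi_{t}$ is defined a.e. and bounded (uniformly in $t$), hence in $L^{\infty}_{loc}$, and the measurable family $t\mapsto\nabla\phi_{t}$ coincides with the family $t\mapsto v_{t}$ produced by the theorem. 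Substituting $v_{t}=\nabla\phi_{t}$ into the continuity equation above gives exactly \ref{eq:evolution eq for perman limit in prop}, completing the proof. The only step carrying real content is the identification supplied by Lemma \ref{lem:The-minimal-subdiffer is ma} of the metric (minimal) subdifferential of $-C(\cdot,\nu_{P})$ with the Monge--Amp\`ere gradient $\nabla\phi_{t}$: this rests on the representation \ref{eq:cost as distance} of $C(\cdot,\nu_{P})$ through $\frac{1}{2}d_{W_{2}}(\cdot,\nu_{P})^{2}$, the differentiability of the squared Wasserstein distance established in \cite{a-g-s}, and Brenier's theorem \cite{br}; everything else here is bookkeeping on top of the general theory already assembled.
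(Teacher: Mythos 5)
Your proposal is correct and follows essentially the route the paper intends: the paper's own proof is just a citation of Thm 8.3.1 and Cor 11.1.8 in \cite{a-g-s}, while you assemble the same content from the paper's internal machinery, namely Proposition \ref{prop:main assump in perman case} (Lipschitz continuity and convexity along generalized geodesics of $-C(\cdot,\nu_{P})$, hence coercivity), the last bullet of Theorem \ref{Theorem:f-p as gradient flow} (existence, finite entropy and Fisher information, and the distributional continuity equation with $v_{t}=\partial^{0}E$), and Lemma \ref{lem:The-minimal-subdiffer is ma} (identification of $\partial^{0}E$ at $\rho_{t}dx$ with the bounded field $\nabla\phi_{t}$ via Brenier's theorem). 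No gap; the two arguments rest on the same AGS results.
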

\begin{proof}
Given the previous lemma this follows immediately from Thm 8.3.1 and
Cor 11.1.8 in \cite{a-g-s} (the case $\beta=\infty$ has previously
been considered by Brenier \cite{br1b,Br2,br3} by lifting the problem
to the space of $L^{2}-$maps from $\R^{n}$ to $\R^{n}$ where Hilbert
space techniques can be applied ).
\end{proof}
More generally, as explained in the introduction it is natural to
introduce a parameter $\gamma\in[0,1]$ and a back-ground potential
$V(x),$ i.e. a convex function on $\R^{n}$ satisfying the growth
condition \ref{eq:growth of pot}. Then one replaces $E^{(N)}$ with
its weighted generalization $E_{\gamma,V}^{(N)}$ defined by formula
\ref{eq:def of weighted perm energy}. Then the previous proposition
still holds with $F_{\beta}$ replaced by the corresponding functional
$F_{\gamma,V}$ and $\phi_{t}$ in the evolution equation \ref{eq:evolution eq for perman limit in prop}
is replaced by $\gamma\phi_{t}+(1-\gamma)V$ and with $\beta=1$ (up
to rescaling time $t$ and the potential $V$ this is equivalent to
taking $\beta=\gamma)$

\subsection{The dynamic setting: Proof of Theorem \ref{thm:toric dynamic intro}}

By Proposition \ref{prop:main assump in perman case} the Main Assumptions
are satisfied.

\subsection{The static setting: Proof of Theorem \ref{thm:static toric}}

Let us first verify the properness assumption in Theorem \ref{thm:static intro}
holds for $\gamma<R_{P}.$ To this end we first assume that $b_{P}=0$
and observe that 
\begin{equation}
E^{(N)}(x_{1},....,x_{N})\geq-o(1)\sum_{i=1}^{N}|x_{i}|-o(1)\label{eq:proof of theorem static toric}
\end{equation}
Indeed, applying Jensen's inequality to the concave function log gives
\[
E^{(N_{k})}(x_{1},...,x_{N_{k}})=\frac{1}{k}\log\frac{1}{N_{k}!}\sum_{\sigma\in S_{N_{k}}}e^{k(x_{1}\cdot p_{\sigma(1)}+\cdots+x_{N}\cdot p_{\sigma(N_{k})})}-\frac{1}{k}\log\frac{1}{N_{k}!}\geq
\]
\[
\geq\frac{1}{N_{k}!}\sum_{j=1}^{N}x_{i}\cdot(\sum_{\sigma}p_{\sigma(i)})
\]
But for any fixed $i$ we have that $\sum_{\sigma}p_{\sigma(i)}=(N-1)!\sum_{p_{j}\in P\cap\Z/k}p_{j}$
and hence we get a Riemann sum: 
\begin{equation}
\frac{1}{N_{k}!}(\sum_{\sigma}p_{\sigma(i)})=\frac{1}{N_{k}}\sum_{p_{j}\in P\cap\Z/k}p_{j}:=b_{P}^{(k)}=b_{P}+o(1),\label{eq:def of k barycenter}
\end{equation}
 where $b_{P}:=\int_{P}p\nu_{P},$ which is assumed to vanish and
hence the inequality \ref{eq:proof of theorem static toric} follows.
But then the properness for $E_{\gamma,V}^{(N)}$ (defined by formula
\ref{eq:def of weighted perm energy}) in the case $b_{p}=0$ follows
immediately from the definition of $E_{\gamma,V}^{(N)}$ and the growth
assumption \ref{eq:growth of pot} on $V$ ensuring that that $V(x)\geq|x|/C-C$
since $0$ is assumed to be an interior point of $P.$ Finally, the
case then $b_{p}\neq0$ can be reduced to the previous case by translating
$P.$ More precisely, by the previous argument
\[
E_{\gamma,V}^{(N)}+o(1)\geq\sum_{i=1}^{N}\left(\gamma+o(1))x_{i}\cdot b_{P}+(1-\gamma)V(x_{i})\right)
\]
But, as shown in the proof of Theorem 2.18 in \cite{b-b} $R_{P}$
(defined by formula \ref{eq:inv R of conv bod}) is the sup of all
$r\in[0,1]$ such that $rx\cdot b_{P}+(1-r)\phi_{P}(x)\geq0$ and
since $|\phi_{P}(x)-V(x)|\leq C$ and $\gamma<R_{P}$ this gives the
desired properness. The convergence of the Boltzmann-Gibbs measures,
as $N\rightarrow\infty,$ now follows from Theorem \ref{thm:static intro}.
Moreover, the convergence as $t\rightarrow\infty$ for $N$ fixed
follows from well-known results about the linear Fokker-Planck equation
with a convex potential $E$ such that $\int e^{-E}dx<\infty$ (see
for example \cite{bgg} and reference therein). 

Finally, to prove the divergence of the partition function for $\gamma=R_{P}$
we first recall that if $\psi$ is a convex function on $\R^{d}$
then a necessary condition for $\int e^{-\psi}dx<\infty$ is that
$\psi(x)\rightarrow\infty$ as $|x|\rightarrow\infty$(as follows,
for example from Borell's lemma \cite[Lemma 3.1]{bo} which gives
that the integrability also holds for $\psi-\epsilon|x|,$ for $\epsilon$
any sufficiently small number \cite[Theorem 3.1]{bo}, and hence,
by Jensen's inequality, $\psi(x)-\epsilon|x|\geq-C_{\epsilon}:=-\log\int e^{-(\psi-\epsilon|x|)}dx).$
To violate the previous condition it is clearly enough to find a vector
$\vec{a}\in\R^{d}$ such that $t\mapsto\psi(t\vec{a})$ is an affine
function on $\R.$ We now consider the convex function $\psi:=E_{R_{P},\phi_{P}}^{(N_{k})}$
on $\R^{nN_{k}}$ and observe that for any fixed $a\in\R^{n}$ we
can write, with $\vec{a}:=(a,a,....,a),$ 
\begin{equation}
\psi(\vec{a})-\psi(0)=\left((1-R_{P})\phi_{P}(a)+R_{P}a\cdot b_{P}\right)+a\cdot\delta_{k},\,\,\,\delta_{k}:=R_{P}(b_{P}^{(k)}-b_{P})\in\R^{n}\label{eq:proof of toric static psi}
\end{equation}
 (compare formula \ref{eq:def of k barycenter}). Moreover, as shown
in the proof of Theorem 2.18 in \cite{b-b} when $a$ is taken as
a normal vector to a facet of $P$ containing the point $q$ appearing
in formula \ref{eq:inv R of conv bod}) the bracket in formula \ref{eq:proof of toric static psi}
vanishes. But this means that $t\mapsto\psi(t\vec{a})$ is an affine
function on $\R$ and hence $\int e^{-\psi}dx=\infty.$ Since $|\phi_{P}(x)-V(x)|\leq C$
the divergence also holds when $\phi_{P}$ is replaced by $V,$ which
concludes the proof.

\subsection{\label{sub:The-tropical-limit}The tropical setting }

The results above are also valid when the permanental interaction
energy $E^{(N_{k})}(x_{1},...,x_{N_{k}})$ is replaced by its tropical
analog, i.e. the following convex piecewise affine convex function 

\[
E_{trop}^{(N_{k})}(x_{1},...,x_{N_{k}}):=\max\sum_{\sigma\in S_{N_{k}}}x_{1}\cdot p_{\sigma(1)}+\cdots+x_{N}\cdot p_{\sigma(N_{k})}
\]
In other words this is a \emph{tropical permanent,} i.e. the permanent
of the rank $N$ matrix $(x_{i}\cdot p_{j})$ in the tropical semi-ring
over $\R$ (i.e. the set $\R\cup\{-\infty\}$ where the plus and multiplication
operations are defined by $\max\{a,b\}$ and $a+b,$ respectively
\cite{i-m}). Equivalently, in terms of discrete transport theory
this means that
\[
E_{trop}^{(N_{k})}(x_{1},...,x_{N_{k}}):=-C((\delta_{N}(x),\delta_{N}(p)))
\]
Passing to the tropical setting has, in particular, computational
advantages. Indeed, while all known methods for evaluating (general)
permanents take exponential time, the tropical permanent above is,
by its very definition, the optimal value of a linear assignment problem
and can be computed using an algorithm of cubic time complexity (see
the discussion in \cite{b-f-h-l-m}).

\section{Outlook}

In this final section we point out some relations between the limiting
evolution equation appearing in Theorem \ref{thm:toric dynamic intro}
(whose static solutions correspond to toric Kähler-Einstein metrics)
and other well-known evolution equations. We also indicate some relations
to sticky particle systems appearing at the microscopic level (i.e.
for finite $N)$ and the complex geometric picture. These relations
will be elaborated on in a sequel to the present paper \cite{b-o}.

\subsection{\label{sub:Relation-to-other evol eq}Relation to other evolution
equations and traveling waves}

In the one-dimensional case when $P:=[-a_{-},-a_{+}]$ integrating
the evolution equation for $\rho_{t}$ in Theorem \ref{thm:toric dynamic intro}
once reveals that the bounded decreasing function $u(x,t):=-\partial_{x}\phi_{t}$
(physically playing the role of a velocity field) satisfies\emph{
Burger's equation} \cite{ho} with positive \emph{viscosity }$\kappa:=\beta^{-1}:$
\[
\partial_{t}u=\kappa\partial_{x}^{2}u-u\partial_{x}u
\]
with the left and right space asymptotics $\lim_{x\rightarrow\pm\infty}u(x,t)=a_{\pm}.$
We recall that Burger's equation is the prototype of a non-linear
wave equation and a scalar conservation law (which is used, among
many other things, as a toy model for turbulence in the Navier-Stokes
equations \cite{f-b}). Interestingly, the barycenter $b_{P}$ of
the polytope $P$ coincides, in this one-dimensional situation, with
minus the \emph{speed }$s:=(a_{+}+a_{-})/2$ of the time-dependent
solution $u$ in the terminology of scalar conservation laws \cite{la}.
Hence, the vanishing condition $b_{P}=0,$ which in general is tantamount
to the existence of a stationary solution $\rho_{t}=\rho$ (as discussed
in connection to Theorem \ref{thm:toric dynamic intro}) simply means,
from the point of view of non-linear wave theory, that the speed $s$
vanishes. 

Similarly, the function $\phi(x,t):=\phi_{t}(x),$ which in complex
geometric terms is a Kähler potential, satisfies (after the appropriate
normalization) the following viscous Hamilton-Jacobi equation (known
as the deterministic KPZ-equation in the literature on growth of random
surfaces \cite{kpz,ha}): 
\begin{equation}
\partial_{t}\phi=\kappa\partial_{x}^{2}\phi+\frac{1}{2}(\partial_{x}\phi)^{2}.\label{eq:kpz eq}
\end{equation}
In the general higher dimensional case the evolution equation \ref{eq:evolut eq for toric ke}
(which is different than the higher dimensional version of Burger's
equation) can be seen as a dissipative viscous/diffusive version of
the semi-geostrophic equation appearing in dynamic meteorology (see
\cite{l,a-c-d-f} and references therein and \cite{Br2}  for a similar
situation in cosmology). Moreover, since 
\[
E(\mu)=-\frac{1}{2}d^{2}(\mu,\nu_{P})+\frac{1}{2}\int|x|^{2}\mu+C,
\]
where $d$ denotes the Wasserstein $L^{2}-$distance, the evolution
equation \ref{eq:evolut eq for toric ke} can also be seen as a quadratic
perturbation (with diffusion) of the ``geodesic flow'' on the Wasserstein
$L^{2}-$space (compare \cite[Example 11.2.10]{a-g-s}), which in
the one dimensional case appears in connection to the Sticky Particle
System \cite{ns}. As will be shown in \cite{b-o} the large time
asymptotics of the fully non-linear evolution equation \ref{eq:evolut eq for toric ke}
for the probability density $\rho_{t}$ in $\R^{n}$ are governed
by \emph{traveling wave solutions} in $\R^{n}$ whose speed coincide
with minus the barycenter $b_{P}$ of the convex body $P:$ 
\[
\rho_{t}(x)=\rho(x-b_{P}t)+o(t),\,\,\,\,\,\,t\rightarrow\infty
\]
 where the error terms $o(t)$ tends to zero in $L^{1}(\R^{n}$) (and
even in relative entropy) and where the limiting profile $\rho$ is
uniquely determined from a variant of the Monge-Ampère equation \ref{eq:k-e eq intro}
together with the condition that its barycenter coincides with the
barycenter of the initial data (thus breaking the translation symmetry).
In complex geometric terms $\rho$ corresponds to a certain canonical
Kähler-Einstein metric $\omega$ on $X$ with conical singularities
``at infinity'', playing the role of Calabi's extremal metrics in
this context. More generally, as will be elaborated on in \cite{b-o},
the results above apply in a more general setting where the measure
$\nu_{P}$ is multiplied by a density $g,$ which amount to replacing
the Monge-Ampère equation $MA(\phi)$ with $g(\nabla\phi)MA(\phi)$
and which from the point of view of scalar conservation laws corresponds
to a general concave flux function $f$ (when $n=1).$

\subsection{\label{sub:Outlook-on-the}The microscopic picture: sticky particles
in $\R^{n}$}

It can be shown that the attractive Newtonian interaction energy in
$\R$ is the one-dimensional version of the tropical permanental energy
$E_{trop}^{(N)}(x_{1},...x_{N})$ appearing in Section \ref{sub:The-tropical-limit}.
In the general higher dimensional setting it turns out that a very
concrete interpretation of the corresponding EVI gradient flow of
$E_{trop}^{(N)}(x_{1},...x_{N})$ on $\R^{nN}$ can be given; in particular
the particles perform zigzag paths with velocity vectors contained
in the polytope $-P$ generalizing the sticky $N-$particle system
on the real line. \footnote{When $n=1$ the dynamics is determined by the property that total
mass and momentum is conserved in collisions and that the particles
stick together when they collide; see \cite{b-g} and references therein.}Moreover, there is a static solution to the corresponding deterministic
$N-$particle system if and only if the ``discrete'' barycenter
of $P$ vanishes: 
\[
\frac{\text{1}}{N}(p_{1}+...+p_{N})=0
\]
which is consistent with the fact that the discrete barycenter can
be interpreted as the mean velocity of the particles. In general,
any initial configuration of points $(x_{1},...x_{N})(0)$ is assembled,
in a finite time, into a single particle $x_{*},$ namely the barycenter
of $\{x_{1},...x_{N}\}$ which moves at the mean velocity above. The
results in the present paper can also be used to study the large $N-$limit
of this deterministic system (which can be seen as a dissipative version
of the Hamiltonian particle system introduced in \cite{c-g-p} as
a discretization of the semigeostrophic equations). But the key point
of our approach is that it allows noise to be added to the particle
system. Then the role of the large $N-$limit of $x_{*}$ is played
by the volume form $\mu_{*}$ of a Kähler-Einstein metric on the toric
variety determined by the polytope $P$ (compare the discussion in
Section \ref{sub:Relation-to-other evol eq}). 

Interestingly, a similar particle system on $\R^{n}$ appears in Brenier's
approach \cite{Br2,br3} to the early universe reconstruction problem
in cosmology \cite{f-b} (in connection to the so called Zeldovich
approximation). In fact, our results can be used to validate the formal
large $N-$limit of the $N-$particle system with noise introduced
in \cite[Section 2.3]{br3}.\footnote{ As pointed out in \cite[Section 2.3]{br3} the formal argument used
there, which is based on the classical Freidlin-Wentzel theory (as
in \cite{da-g}), would require a Lipschitz bound on the drift (while
only a one-sided bound is available).}Details will appear in \cite{b-o}.

\subsection{\label{sub:The-complex-geometric}The complex geometric picture}

In this final section we provide some complex geometric motivation
for the present paper - a more detailed account, including the relations
to the Yau-Tian-Donaldson conjecture, will appear elsewhere

Let $X$ be an $n-$dimensional compact complex manifold. A metric
$g$ on $X$ is said to be Kähler-Einstein if $g$ has constant Ricci
curvature and $g$ is Kähler, i.e. in local holomorphic coordinates
$g$ can be represented as the real part of the positive definite
complex Hessian $\frac{\partial\phi(z)}{\partial z_{i}\partial\bar{z_{j}}}$
of a local function $\phi(z)$ called the Kähler potential of $g.$
If such a metric $g$ exists with positive Ricci curvature, then $X$
is necessarily a projective algebraic variety which is Fano, i.e.
the holomorphic (anti-canonical) line bundle $L:=\det(TX)$ over $X$
is positive. 

As shown in \cite{berm2} a Fano manifold comes with a sequence of
canonical $N-$particle random point process. The number of particles
$N$ arise as the pluri-antigenera of $X:$ 
\[
N=N_{k}:=\dim H^{0}(X,L^{\otimes k}),\,\,\,k=1,2,3,....
\]
where $H^{0}(X,L^{\otimes k})$ denotes the complex vector space consisting
of the global holomorphic sections of the $k$ th tensor power of
$L.$ The Fano condition ensures that $N_{k}\rightarrow\infty$ as
$k\rightarrow\infty.$ The local density of the corresponding canonical
symmetric probability measure $\mu^{(N_{k})}$ on $X^{N_{k}}$ is
defined by 
\begin{equation}
\rho^{(N_{k})}(z_{1},....,z_{N_{k}}):=\frac{1}{Z_{N_{k}}}\left|\det(z_{1},...z_{N_{k}})\right|^{-2/k},\,\,\,\,\,\,\det(z_{1},...z_{N_{k}}):=\det(s_{i}(z_{j})),\label{eq:prob density in neg curv}
\end{equation}
 where $\det(z_{1},...z_{N_{k}})\in H^{0}(X^{N_{k}},L^{\otimes k})$
is the Vandermonde type determinant formed from a given base $s_{1},...s_{N_{k}}$
in $H^{0}(X,L^{\otimes k})$ and $Z_{N_{k}}$ is the corresponding
normalization constant ensuring that the probability measure has unit
mass (by homogenity $\rho^{(N_{k})}$ is independent on the choice
of base). However, since the local density $\rho^{(N_{k})}(z_{1},....,z_{N_{k}})$
has singularities (for example when two points on $X$ merge) the
normalization constant $Z_{N_{k}}$ may be infinite, which means that
the random point processes are only well-defined if $Z_{N_{k}}<\infty.$
Such a Fano manifold $X$ was called Gibbs stable in \cite{berm2},
where it was shown that the condition can be rephrased in purely algebro-geometric
terms (see also \cite{fu} for further developments). It was conjectured
in \cite{berm2} that this condition is equivalent to $X$ admitting
a (unique) Kähler-Einstein metric (which necessarily has positive
Ricci curvature) whose volume form may be recovered as the deterministic
large $N-$limit of the empirical measures of the corresponding random
point processes. \footnote{The convergence of the processes in the opposite case when the dual
$\det(T^{*}X)$ of $\det(TX)$ is positive was settled in \cite{berm2}
(the limit is then the volume form of the unique Kähler-Einstein metric
on $X$ with negative Ricci curvature, whose existence was first established
in the seminal works of Aubin and Yau). }

The motivation for the present paper comes from a dynamic approach
to the latter conjecture where one introduces the interaction energy
\[
E^{(N_{k})}(z_{1},...,z_{N}):=\frac{1}{k}\log|\det(z_{1},...z_{N_{k}})|^{2}
\]
which is attractive, in the sense that it tends to $-\infty$ as two
particles merge. Locally, this object is represented by a plurisubharmonic
function, but in order to get a globally well-defined function on
$X^{N_{k}}$ one also has to fix a back-ground Kähler metric $g$
on $X$ (representing the first Chern class of $X)$ whose volume
form $dV_{g}$ then induces a metric $\left\Vert \cdot\right\Vert $
on $L$ which is used to replace the absolute values above. The point
is that, if $X$ is Gibbs stable, the canonical probability measure
$\mu^{(N_{k})}$ on $X^{N_{k}}$ can then be represented globally
as the corresponding Gibbs measure at inverse temperature $\beta=1$
(which is independent of the choice of metric $g):$ 
\[
\mu^{(N_{k})}=\frac{1}{Z_{N_{k}}}e^{-E^{(N_{k})}}dV_{g}^{\otimes N_{k}}\left(=\frac{1}{Z_{N_{k}}}\left\Vert \det\right\Vert ^{-2/k}dV_{g}^{\otimes N_{k}}\right)
\]
i.e. as a determinantal point process on $X$ at negative temperature.
The different zero-temperature case was studied in \cite{b-b-w}. 

At any rate, even if $X$ is not Gibbs stable one can still look at
the stochastic gradient flow of $E^{(N_{k})}$ on the $N_{k}-$fold
product of the Riemannian manifold $(X,g).$ From this dynamic perspective
Gibbs stability simply means that the corresponding stochastic process
has an invariant measure, to wit, $\mu^{(N_{k})}.$ Accordingly, the
natural dynamic generalization of the conjecture referred to above
is that a (unique ) Kähler-Einstein metric $g_{KE}$ exists precisely
when the stochastic gradient flow of $E^{(N_{k})}$ admits a stationary
measure and then its volume form $dV_{g_{KE}}$ can be recovered from
the joint large $N$ and large $t$ limit of the flow. More precisely,
conjecturally the large $N-$limit of the corresponding stochastic
gradient flows is described by the complex version of the evolution
equation \ref{eq:evolution eq for perman limit in prop}, obtained
by replacing the real Monge-Ampère operator with its complex counterpart.
The latter flow is, at least formally, the Wasserstein gradient flow
of a free energy type functional $F(\mu)$ on $\mathcal{P}_{2}(X,g)$
and $F$ can be identified with the K-energy functional on the space
of Kähler metrics in $c_{1}(X)$ (using the Calabi-Yau isomorphism)
\cite{berm4}. Unfortunately, the study of the latter flows is plagued
by various analytical difficulties stemming from the singularities
of $E^{(N_{k})}$ and the lack of convexity. For example, even in
the simplest case when $X$ is the Riemann sphere, i.e. the one-point
compactification of the complex plane $\C,$ so that $E^{(N_{k})}$
is simply the attractive logarithmic pair interaction between $N_{k}$
equal charges on $\C,$ the convergence of the large $N-$limit, for
a fixed time, is a long-standing open problem (however, see \cite{fou}
for very recent partial results).

\subsubsection{\label{sub:The-toric-setting}The toric setting and its tropicalization}

The complex geometric setting which is relevant to the present paper
appears when $X$ is a toric Fano manifold, i.e. $X$ admits a holomorphic
action of the the real $n-$torus $T$ such that $(X,T)$ can be realized
as an equivariant compactification of the complex torus $\C^{*n}$
(with is standard $T-$action) \cite{do}. Such a compactification
$X$ is determined by a convex polytope $P,$ which has the property
that under the dense embedding of $\C^{*n}$ into $X$ the complex
vector space $H^{0}(X,L^{\otimes k})$ may be identified with the
space of all holomorphic Laurent polynomials $f(z)$ on $\C^{*n}$
of the form 
\[
f(z)=\sum_{m\in kP\cap\Z^{n}}a_{m}z^{m}
\]
(using multindex notation). In particular, introducing an ordering
$m_{1},...m_{N_{k}}$ on the integer points of $kP\cap\Z^{n}$ gives
a bases $s_{m_{1}}(z),...,s_{m_{N_{k}}}$ of multinomials in $H^{0}(X,L^{\otimes k}),$
which can be used to represent 
\begin{equation}
\det(z_{1},...z_{N_{k}})=\sum_{\sigma\in S_{N}}(-1)^{\mbox{sign}(\sigma)}z_{1}^{m_{\sigma(1)}}\cdots z_{N_{k}}^{m_{\sigma(N)}}\label{eq:det in toric}
\end{equation}
Now, the real vector space $\R^{n}$ makes it appearance when introducing
logarithmic coordinates on $\C^{*n},$ i.e. as the image of the $\mbox{Log }$
map 
\[
\mbox{Log}:\,\,\C^{*n}\rightarrow\R^{n},\,\,\,z\mapsto x:=(\log|z_{1}|^{2},...,\log|z_{n}|^{2}),
\]
whose fibers are the orbits of the action action of $T.$ Using this
map $T-$invariant metrics on $L\rightarrow X$ with positive curvature
may be identified with convex functions $\phi(x)$ on $\R^{n}$ such
that $(\partial\phi)(\R^{n})\subset P.$ In this picture the permanental
density $\mbox{Per}\ensuremath{(x_{1},...,x_{N_{k}})}$ arises as
the push-forward to $\R^{n},$ under the Log map, of the determinant
density \ref{eq:det in toric}. In other words, the smooth convex
permanental energy $E_{per}^{(N)}(x_{1},...x_{N})$ (formula \ref{eq:energy as permanent intro})
on $\R^{n}$ is an averaged version of the singular plurisubharmonic
interaction energy $E^{(N_{k})}$ on $\C^{*n}:$ 
\begin{equation}
E_{per}^{(N)}(x_{1},...x_{N})=\frac{1}{k}\log\int_{T^{N_{k}}}e^{kE^{(N_{k})}}d\theta^{\otimes N_{k}}\label{eq:perm energy in terms of E}
\end{equation}
Similarly, its tropical version $E_{trop}^{(N)}(x_{1},...x_{N})$
is the piecewise affine convex function on $\R^{nN}$ obtained as
the tropicalization of the Laurent polynomial $\det(z_{1},...z_{N_{k}})$
on $\C^{*nN_{k}}$ \footnote{Incidentally, tropicalization may be interpreted as a zero-temperature
limit by writing the tropical sum $\max\{a,b\}$ as the limit of $T^{-1}\log(e^{\frac{1}{T}a}+e^{\frac{1}{T}b})$
as $T\rightarrow0;$ compare the discussion in \cite{i-m}. }. Accordingly, Theorem \ref{thm:toric dynamic intro} and Theorem
\ref{thm:static toric} should be seen in the light of the well-known
philosophy of replacing an elusive complex geometric problem by a
more tractable convex geometric one, by the process of tropicalization
(see, for example, \cite{i-m}). 

From the complex geometric point of view the equation \ref{eq:monge-ampere with gamma}
is precisely the one appearing in Aubin's continuity method for Kähler-Einstein
metrics. In the complex setting the sup over all $\gamma\in[0,1]$
for which the corresponding complex Monge-Ampère equation has a solution
coincides with the the differential geometric invariant $R(X)$ of
$X$ defined as the greatest lower bound on the Ricci curvature of
any Kähler metric $\omega\in c_{1}(X)$ \cite{sz}. Moreover, as first
shown in \cite{li1} the invariant $R_{P}$ of $P$ coincides with
$R(X)$ when $X$ is the toric Fano variety corresponding to $P.$
In this complex geometric context the last point in Theorem \ref{thm:static toric}
implies one of the inequalities in the conjectural equality relating
$R(X)$ to the algebro-geometric invariant $\gamma(X)$ introduced
in \cite{berm2}, when $X$ is a toric variety. To explain this we
recall that

\begin{equation}
\gamma(X):=\liminf_{k\rightarrow\infty}\mbox{lct}(\mathcal{D}_{k}/k)>1,\label{eq:invariant gamma}
\end{equation}
where $\mbox{lct}(\mathcal{D}_{k}/k)$ is the log canonical thresholds
of the divisor $\mathcal{D}_{k}$ in $X^{N_{k}}$ cut out by the holomorphic
section of $-K_{X^{N_{k}}}$ corresponding to $\det(z_{1},...z_{N_{k}}).$
The number $\mbox{lct}(\mathcal{D}_{k}/k)$ may be analytically defined
as the sup over all positive numbers $\gamma$ such that $\int_{X^{N_{k}}}\left\Vert \det\right\Vert ^{-2\gamma/k}dV^{\otimes N_{k}}$
is finite.
\begin{cor}
\label{cor:ineq on toric var}Let $X$ be a toric Fano variety. For
any positive integer $k$ 
\[
\frac{1}{k}\mbox{lct}(\mathcal{D}_{k})<R(X)
\]
In particular, on any toric Fano variety $X$ the following inequality
holds: \textup{
\[
R(X)\geq\min\{\gamma(X),1\}
\]
}\end{cor}
\begin{proof}
Write the complex coordinates $z^{\alpha}$ on $\C^{*n}$ as $z^{\alpha}=e^{2x^{\alpha}+2iy^{\alpha}},$
where $x^{\alpha}\in\R$ and $y^{\alpha}\in\R/[0,2\pi]\Z$ for $\alpha=1,..,n.$
In particular, $x=\log z$ in the notation of Section \ref{sub:The-toric-setting}.
Representing $dV=\frac{1}{(2\pi)^{n}}e^{-V}dx\wedge dy$ for a $T-$invariant
function $V$ on $\C^{*n},$ which we identify with a function $V(x)$
on $\R^{n}$ such that $(\partial\phi)(\R^{n})\subset P,$ gives
\[
\int_{\C^{*nN_{k}}}\left\Vert \det(z_{1},...z_{N_{k}})\right\Vert ^{-\gamma2/k}dV^{\otimes N_{k}}=
\]
\[
=\int_{x\in\R^{nN_{k}}}\left(\int_{y\in[0,2\pi[^{nN_{k}}}\left(|\det(z_{1},...,z_{N_{k}})|^{2}\right){}^{-\frac{\gamma}{k}}(\frac{1}{(2\pi)^{n}}dy){}^{\otimes N_{k}}\right)e^{(\gamma-1)V(x)}dx{}^{\otimes N_{k}},
\]
where $\left\Vert \cdot\right\Vert $ denotes the metric on $-K_{X^{N_{k}}}$induced
by $dV.$ Hence, applying Jensen's inequality to the convex function
$f(s)=s^{-\gamma/k}$ on $]0,\infty[$ and using the relation \ref{eq:perm energy in terms of E}
gives 
\[
\int_{\C^{*nN_{k}}}\left\Vert \det(z_{1},...z_{N_{k}})\right\Vert ^{-\gamma2/k}dV^{\otimes N_{k}}\geq\int\mbox{Per \ensuremath{(x_{1},...,x_{N_{k}})^{-\frac{\gamma}{k}}}}e^{(\gamma-1)V(x)}dx{}^{\otimes N_{k}}
\]
But, according to Theorem \ref{thm:static toric}, the latter integral
diverges if $\gamma\geq R_{P}$ and that concludes the proof. 
\end{proof}
The last statement in the previous corollary can be viewed as a toric
complement to Fujita's very recent result \cite{fu}, which in particular
gives that if $\gamma(X)\geq1$ on a Fano variety $X,$ then $X$
is K-semistable and hence $R(X)\geq1$ (which in the toric case is
equivalent to the existence of a Kähler-Einstein metric \cite{w-z}).
This gives an illustration of how the process of tropicalization can
be used to gain information on the original complex geometric situation.

\section{\label{sub:Intermezzo:-Otto's-formal}Appendix: The Otto calculus}

In this appendix we briefly recall Otto's \cite{ot} beautiful (formal)
Riemannian interpretation of the Wasserstein $L^{2}-$metric $d_{2}$
on $\mathcal{P}^{2}(\R^{n}).$ The material is included with the non-expert
in mind as a motivation for the material on gradient flows on $\mathcal{P}^{2}(\R^{n})$
recalled in Section \ref{sub:Gradient-flows-on}.

\subsection{The Otto metric}

For simplicity we will consider probability measures of the form $\mu=\rho dx$
where $\rho$ is smooth positive everywhere (in order to make the
arguments below rigorous one should also specify the rate of decay
of $\rho$ at $\infty$ in $\R^{n}).$ The corresponding subspace
of probability measures in $\mathcal{P}_{2}(\R^{n})$ will be denoted
by $\mathcal{P}.$ First recall that the ordinary\emph{ }``affine
tangent vector'' of a curve $\rho_{t}$ in $\mathcal{P}$ at $\rho:=\rho_{0},$
when $\rho_{t}$ is viewed as a curve in the affine space $L^{1}(\R^{n}))$
is the function $\dot{\rho}$ on $\R^{n}$ defined by 
\[
\dot{\rho}(x):=\frac{d\rho_{t}(x)}{dt}_{|t=0}
\]
Next, let us show how to identify $\dot{\rho}$ with a vector field
$v_{\dot{\rho}}$ in $L^{2}(\rho dx,\R^{n}),$ which, by definition,
is the (non-affine) ``tangent vector'' of $\rho_{t}$ at $\rho,$
i.e. $v_{\dot{\rho}}\in T_{\rho}\mathcal{P}.$ First, since the total
mass of $\rho_{t}$ is preserved we have $\int\dot{\rho}dx=0$ and
hence there is a vector field $v$ on $\R^{n}$ solving the following
continuity equation: 
\begin{equation}
\dot{\rho}=-\nabla\cdot(\rho v)\label{eq:rho dot in terms of V}
\end{equation}
 In geometric terms this means that
\begin{equation}
\rho_{t}dx=(F_{t}^{V})_{*}(\rho_{0}dx)+o(t),\label{eq:inf push forward relation}
\end{equation}
 where $F_{t}^{V}$ is the family of maps defined by the flow of $V.$
Now, under suitable regularity assumptions $v_{\dot{\rho}}$ may be
defined as the ``optimal'' vector field $v$ solving the previous
equation, in the sense that it minimizes the $L^{2}-$norm in $L^{2}(\rho dx,\R^{n}).$
The Otto metric is then defined by 
\begin{equation}
g(v_{\dot{\rho}},v_{\dot{\rho}})=\inf_{V}\int\rho|v|^{2}dx=\int\rho|v_{\dot{\rho}}|^{2}dx,\label{eq:def of riemannian norm}
\end{equation}
 which can be seen as the linearized version of the defining formula
\ref{eq:def of wasser} for the Wasserstein $L^{2}-$metric. By duality
the optimal vector field $v_{\dot{\rho}}$ may be written as $v_{\dot{\rho}}=\nabla\phi,$
for a unique normalized function $\phi$ on $\R^{n}$ (under suitable
assumptions).

\subsection{The microscopic point of view}

Let us remark that a simple heuristic ``microscopic'' derivation
of the Otto metric can be given using the isometry defined by the
empirical measure $\delta_{N}$ (Lemma \ref{lem:isometries}). Indeed,
given a curve $(x_{1}(t),...,x_{N}(t))$ in the Riemannian product
$(X^{N},\frac{1}{N}g^{\otimes N})$ with tangent vector $(\frac{dx_{1}(t)}{dt},...,\frac{dx_{1}(t)}{dt})$
at $t=0$ we can write its squared Riemannian norm at $(x_{1}(0),...,x_{N}(0))$
as 
\begin{equation}
\left\Vert (\frac{dx_{1}(t)}{dt},...,\frac{dx_{1}(t)}{dt})\right\Vert ^{2}=\int|v|^{2}\delta_{N}(0)\label{eq:microscopic otto}
\end{equation}
where $\delta_{N}(t):=\frac{1}{N}\sum\delta_{x_{i}(t)}$ and $v$
is any vector field on $X=\R^{n}$ such that $v(x_{i})=\frac{dx_{i}(t)}{dt}_{|t=0}.$
Note that setting $\rho_{t}:=\delta_{N}(t)$ the vector field $v$
satisfies the push-forward relation \ref{eq:inf push forward relation}
(with vanishing error term). Moreover, since passing to the quotient
$X^{N}/S_{N}$ does not effect the corresponding curve $\rho_{t}$,
minimizing with respect to the action of the permutation group $S_{N}$
in formula \ref{eq:microscopic otto} corresponds to the infimum defining
the Otto metric in formula  \ref{eq:def of riemannian norm}.

\subsection{Relation to gradient flows and drift-diffusion equations}

If $G$ is a smooth functional on $\mathcal{P}$ then a direct computations
reveals that its (formal) gradient wrt the Otto metric at $\rho$
corresponds to the vector field $v(x)=\nabla_{x}(\frac{\partial G(\rho)}{\partial\rho}).$
In other words, the gradient flow of $G(\rho)$ may be written as
\begin{equation}
\frac{\partial\rho_{t}(x)}{\partial t}=\nabla_{x}\cdot(\rho v_{t}(x)),\,\,\,\,\,\,v_{t}(x)=\nabla_{x}\frac{\partial G(\rho)}{\partial\rho}_{|\rho=\rho_{t}}\label{eq:formal gradient flow}
\end{equation}
In particular, for the Boltzmann entropy $H(\rho)$ (formula \ref{eq:def of H and I})
one gets, since $\frac{\partial G(\rho)}{\partial\rho}=\log\rho$
(using that the mass is preserved), that the corresponding gradient
flow is the heat (diffusion) equation and the gradient flow structure
then implies that $H(\rho_{t})$ is decreasing along the heat equation.
Moreover, a direct calculation reveals that $H$ is \emph{convex}
on \emph{$\mathcal{P}$} in sense that the Hessian of $H$ is non-negative
and hence it also follows from general principles that the squared
Riemannian norm $|\nabla H|^{2}(\rho_{t})$ is decreasing. In fact,
by definition $|\nabla H|^{2}(\rho)$ coincides with the Fisher information
functional $I(\rho)$ (formula \ref{eq:def of H and I}). More generally,
the gradient flow of the Gibbs free energy $F_{\beta}^{V}$ is given
by the diffusion equation with linear drift $\nabla_{x}V:$ 
\begin{equation}
\frac{\partial\rho_{t}}{\partial t}=\frac{1}{\beta}\Delta_{x}\rho_{t}+\nabla_{x}\cdot(\rho_{t}\nabla_{x}V),\label{eq:linear drift diff}
\end{equation}
 often called the linear Fokker-Planck equation in the mathematical
physics literature. The study of the previous flow using a variational
discretization scheme on $\mathcal{P}^{2}(\R^{n})$ was introduced
in \cite{j-k-o} (compare Section \ref{sub:Intermezzo:-Otto's-formal}).

\end{document}